\DeclareMathOperator\arctanh{arctanh}
\DeclarePairedDelimiter\ceil{\lceil}{\rceil}
\DeclarePairedDelimiter\floor{\lfloor}{\rfloor}
\DeclarePairedDelimiter\abs{\lvert}{\rvert}%
\newcommand{\an}[1]{\textsc{#1}}
\renewcommand\appendix{\par
  \setcounter{section}{0}%
  \setcounter{subsection}{0}%
  \setcounter{equation}{0}%
  \setcounter{table}{0}%------------ << add
  \setcounter{figure}{0}%----------- << add
  \gdef\theequation{\@Alph\c@section.\arabic{equation}}%
  \gdef\thefigure{\@Alph\c@section.\arabic{figure}}%
  \gdef\thetable{\@Alph\c@section.\arabic{table}}%
  \gdef\thesection{\Alph{section}}%
  \@addtoreset{equation}{section}%
  \@addtoreset{table}{section}%----- << add
  \@addtoreset{figure}{section}%---- << add
}
\newcommand{\ra}[1]{\renewcommand{\arraystretch}{#1}}
\begin{document}
\begin{frontmatter}
\title{UDDSketch: Accurate Tracking of Quantiles in Data Streams}
\author [unile] {Italo Epicoco}
\ead{italo.epicoco@unisalento.it}
\cortext[cor1]{Corresponding author}
\author [unile] {Catiuscia Melle}
\ead{catiuscia.melle@unisalento.it}
\author [unile] {Massimo~Cafaro\corref{cor1}}
\ead{massimo.cafaro@unisalento.it}
\author [unile] {Marco Pulimeno}
\ead{marco.pulimeno@unisalento.it}
\author [unile] {Giuseppe Morleo}
\ead{giuseppe.morleo@studenti.unisalento.it}
\address[unile]{University of Salento, Lecce, Italy}

\begin{abstract}
We present \an{UDDSketch} (Uniform DDSketch), a novel sketch for fast and accurate tracking of quantiles in data streams. This sketch is heavily inspired by the recently introduced \an{DDSketch}, and is based on a novel bucket collapsing procedure that allows overcoming the intrinsic limits of the corresponding \an{DDSketch} procedures. Indeed, the \an{DDSketch} bucket collapsing procedure does not allow the derivation of formal guarantees on the accuracy of quantile estimation for data which does not follow a sub-exponential distribution. On the contrary, \an{UDDSketch} is designed so that accuracy guarantees can be given over the full range of quantiles and for arbitrary distribution in input. Moreover, our algorithm fully exploits the budgeted memory adaptively in order to guarantee the best possible accuracy over the full range of quantiles. Extensive experimental results on synthetic datasets confirm the validity of our approach.
\end{abstract}

\begin{keyword}
% keywords here, in the form: keyword \sep keyword
data streams, quantiles, DDSketch.
% PACS codes here, in the form: \PACS code \sep code
%\PACS
\end{keyword}

\newtheorem{thm}{Theorem}
\newtheorem{lem}[thm]{Lemma}
\newdefinition{rmk}{Remark}
\newproof{pf}{Proof}
% \newproof{pot}{Proof of Theorem \ref{thm2}}
\newtheorem{prop}[thm]{Proposition}
\newtheorem*{cor}{Corollary}
\newdefinition{defn}{Definition}
\newtheorem{conj}{Conjecture}
\newtheorem{exmp}{Example}
\newtheorem{case}{Case}
%\newtheorem{claim}[thm]{Claim}
%\newtheorem{fact}[thm]{Fact}
%\newtheorem{assumption}[thm]{Assumption}

%\tableofcontents
\end{frontmatter}

%\tableofcontents

\section*{Declaration of interest}

Declarations of interest: none.

\section{Introduction}
\label{intro}

A data stream $\sigma$ can be thought as a sequence of $n$ items drawn from a universe $\mathcal{U}$. In particular, the items need not be distinct, so that an item may appear multiple times in the stream. Data streams are ubiquitous, and, depending on the specific context, items may be IP addresses, graph edges, points, geographical coordinates, numbers etc. 

Since the items in the input data stream come at a very high rate, and the stream may be of potentially infinite length (in which case $n$ refers to the number of items seen so far), it is hard for an algorithm in charge of processing its items to compute an expensive function of a large piece of the input. Moreover, the algorithm is not allowed the luxury of more than one pass over the data. Finally, long term archival of the stream is usually unfeasible. A detailed presentation of data streams and streaming algorithms, discussing the underlying reasons motivating the research in this area is available to the interested reader in \cite{TCS-002}.

In this paper we are concerned with the problem of accurately tracking quantiles in data streams. The difficulty is strictly related to the underlying nature of the input data stream, since it is a well-known fact that computing exact quantiles is impossible without storing all of the data~\cite{MunroPaterson}. Therefore, approximate solutions such as those provided by sketches are the only viable possibility.

Formally, given a multi-set $S$ of size $n$ over $\mathbb{R}$, let $R(x)$ be the rank of the element $x$, i.e., the number of elements in $S$ smaller than or equal to $x$. Then, the lower (respectively upper) $q$-quantile  item $x_q \in S$ is the item $x$ whose rank $R(x)$ in the sorted multi-set $S$ is $\floor{1+q(n-1)}$ (respectively $\ceil{1+q(n-1)}$) for  $0 \leq q \leq 1$. By definition, $x_0$ and $x_1$ are respectively the minimum and maximum element of $S$, and $x_{0.5}$ is the median.

Regarding tracking accuracy, it can be defined in two different ways, as follows.

\begin{defn} Rank accuracy. $\forall$ item $v$ and $\epsilon$, return an estimated rank $\tilde{R}$ such that $\abs{\tilde{R}(v) - R(v)} \leq \epsilon n$.
\end{defn}

\begin{defn} Relative accuracy. $\tilde{x}_q$ is an $\alpha$-accurate $q$-quantile if
  $\abs{\tilde{x}_q - x_q} \leq \alpha x_q$ for a given $q$-quantile item
  $x_q\in S$. A sketch data structure is an $\alpha$-accurate $(q_0,q_1)$-sketch if it can output $\alpha$-accurate $q$-quantiles for ${q_0 \leq q \leq q_1}$.
 \end{defn}

Even though for long time research efforts have been focused on data structures providing rank accuracy, data sets with heavy tails are such that rank-error guarantees can return values with large relative errors. In particular, rank accuracy is not viable for tracking higher order quantiles of heavy-tailed distributions.
 
\an{DDSketch} (Distributed Distribution Sketch) \cite{Masson} is a recent sketch data structure providing relative accuracy for tracking quantiles in data streams whose underlying distribution is heavy-tailed. This sketch is conceptually very simple and can be implemented either using an unlimited number of buckets or fixing a desired maximum number of buckets to be used. 
In the former case, the space used may grow unbounded, whilst in the latter case when the current number of buckets in the sketch exceeds the predefined maximum a bucket collapsing procedure must be executed in order to guarantee that the number of buckets is always bounded from above.
 
Unfortunately, the authors of \an{DDSketch} do not provide formal guarantees on the estimation's accuracy for a collapsed sketch when the input data is not drawn from sub-exponential distributions.

 In this paper we introduce and discuss a novel collapsing strategy for DDSketch. The main contributions of this paper are the following ones: (i) we formally model the relationship between accuracy and space occupied by the sketch for arbitrary input distributions; (ii) our algorithm fully exploits the budgeted memory adaptively in order to guarantee the best possible accuracy over the full range of  quantiles.

\section{Related Work}
\label{related}

% !TEX encoding = UTF-8
% !TEX TS-program = pdflatex
% !TEX root = ../main.tex
% !TeX spellcheck = en_GB

%************************************************

The problem of quantile computation has been extensively studied in the scientific literature, there are indeed several  publications about it, with algorithms characterized by very different approaches. The common goal is to provide the most accurate result possible with the minimum use of resources. 

The first works for the determination of a quantile sketch date back to the 80's when Munro and Paterson \cite{MUNRO1980315} demonstrated the first quantile sketching algorithm with formal guarantees. They proved the relationship between the amount of space needed related to the number of steps required to select the highest order statistical $k$-th on a dataset of $N$ elements. 

Munro and Paterson designed a probabilistic algorithm to estimate the median by keeping $s$ samples out of the $N$. If the data are presented in random order and $ s = \Theta(N^{\frac{1}{2}})$, then the algorithm has a high probability of storing samples containing the median. This algorithm can be adapted to find a specific quantile. The main result obtained is the proof that the amount of memory required by a deterministic $p$-pass selection algorithm is $\Omega(N^{\frac{1}{p}})$. For a data stream, where only one-pass is allowed, i.e. $p = 1$, the computation of the exact value of any quantile requires $\Omega(N)$ memory space. This result led subsequent work to focus on algorithms providing approximate quantile values.

A common technique used in practice for the selectivity estimation problem, is to maintain histograms of frequency, that is buckets containing groups of values that approximate the true value and its frequency according to the statistics maintained by each bucket. Gibbons et. al \cite{Gibbons:2002:FIM:581751.581753} presented two fast and efficient procedures for maintaining two classes of histogram: equi-depth histograms and compressed histograms.
In the equi-depth histogram, the elements are grouped into buckets so as to ensure the same number of elements for each of them (same height). In the compressed histogram, the $n$ highest frequencies are stored into $n$ separate buckets, the rest of the elements are partitioned according to the equi-depth histogram.
An equi-depth histogram approximates the exact histogram by relaxing the requirements on the number of elements in the bucket and counting accuracy. Its distance from the real histogram can be measured by the following error metric. Consider an approximate equi-depth histogram with $\beta$ buckets for $N$ elements, the error metric $\mu_{ed}$ is the standard deviation of the buckets sizes from their average, normalized with respect to the average of the buckets sizes. The variant with compressed histograms is also treated in a similar way, obviously with the necessary modifications to adapt the algorithm to that class of histograms.

These algorithms provide a summary of the data using histograms and can be used to estimate quantiles according to a different error metric, however, they need to perform multiple passes on the whole input dataset.                 

Manku et al. \cite{Manku:1998:AMO:276305.276342} designed an algorithm whose accuracy bound is independent from the input distribution and the approximation error is uniformly distributed over all quantiles. The algorithm uses $b$ buffers which store $k$ elements each. Each buffer $B$ is associated with a weight $w_B$ which represents the occurrences of the input items fallen in the buffer. When the algorithm starts, all buffers are empty and they are populated with the elements from the input dataset; when all of the buffers are full, the collapsing procedure applies, modifying the weight of the collapsed buffer accordingly.
The authors proved that the error $\epsilon$ committed on the estimation of the $q$-quantile is bounded and the space required to guarantee the error bound is $O(\frac{1}{\epsilon} log^2 \epsilon N)$. The algorithm is efficient and offers opportunities for parallelism, however, it requires to know in advance the size $N$ of the dataset which makes the algorithm not suitable for data streams processing.

Summarizing large datasets is important because of limited memory resources. The GK sketch algorithm by M. Greenwald and S. Khanna \cite{Greenwald01space-efficientonline} addressed the problem of designing a space-efficient algorithm based on quantile summaries. A summary consists of a small number of items sampled from the input sequence. These items are then used to respond to any quantile request.
The algorithm provides an $\epsilon$-approximate estimate $r'_q$ of the $q$-quantile. A summary is $\epsilon$-approximate if the estimate of a $q$-quantile differs from the exact value $r_q$ by  $|r'_q -r_q| \leq \epsilon N$. The algorithm requires memory space $O(\frac{1}{\epsilon}log(\epsilon N))$ and it is independent by the input distribution. GK Sketch offers excellent results in terms of approximation and space used, however it is not fully mergeable, which makes it impossible to use it in a distributed setting, moreover the memory required depends on the size of the input dataset which is not known when processing data streams.

Summarizing distributions which have high skew using uniform quantiles is not informative because having a uniformly spread-out summary of a stretched distribution does not describe the interesting tail region adequately. Motivated by this, Cormode et al. \cite{Cormode:2005:ECB:1053724.1054027} designed an algorithm to efficiently estimate the high-biased quantiles. The high-biased quantiles are defined as: $1-\phi, 1-\phi^2, \cdots, 1-\phi^k$ with $0<\phi<1$.
The algorithm keeps information about particular items from the input, and also stores some additional tracking information. The intuition for this algorithm is as follows: suppose we have kept enough information so that the median of a dataset with $N$ elements can be estimated with an absolute error of $\epsilon N$ in rank. Now suppose that there are $N$ more insertions of items above the median, so that this item is pushed up to being the first quartile. If the same absolute uncertainty of $\epsilon N$ is maintained, then this corresponds to a relative error of size $\epsilon/2$, considering that the number of items is doubled.
Inspired by the GK algorithm, Cormode et al. provided an algorithm which is able to support greater accuracy for the high-biased quantiles.

The Moment Sketch algorithm of E. Gan et al. \cite{Gan:2018:MQS:3236187.3269475} is based on a data structure defined as moment sketch. The sketch requires a minimal amount of space and it is mergeable and computationally efficient. The authors use the moments methods to build the $f(x)$ distribution function which can be used to describe the input dataset. Letting $k$ be the highest power used  for the moments, the moment sketch of a dataset $D$ includes: the minimum value $x_{min}$; the maximum value $x_{max}$; the number of items $n$; the sample of moments $\mu_i = \frac{1}{n}\sum_{x\in D} x^i$ for $i \in \lbrace 1, \cdots ,k \rbrace$; the logarithmic moments $\nu_i = \frac{1}{n} \sum_{x\in D} \lg^i(x)$ for $i \in \lbrace 1, \cdots  ,k \rbrace$.
To estimate a quantile from a moments sketch, the moments method is applied to build the PDF $f(x)$ whose moments match those stored in the sketch and that maximizes the entropy defined as $H \left[ f \right] = -\int_\chi f(x) \lg f(x) \ dx$. $f(x)$ is then used to estimate the quantiles of the dataset. 
The moments sketch proves to be very fast, with an average error of less than 0.01 using about 200 bytes of space. However, there could be pathological situations with certain distributions for which it is not possible to compute finite moments. Moreover, the error is guaranteed in the average case, but not in the worst case, and errors caused by floating point multiplications can occur.

The work done by T. Dunning and O. Ertl \cite{dunning2019computing} has introduced a new data structure known as $t$-digest, formed by clustering real value samples. This structure differs from the previous ones in several ways: the data are grouped and summarized in the $t$-digest structure, however the range of data included in different clusters may overlap; the buckets are represented by a centroid value and a weight value that represents the number of samples contributing to the bucket, instead of the classic lower and upper limits; the samples are accumulated in such a way that only a few of them help determining extreme quantiles, so that relative error is bounded instead of maintaining constant the absolute error.
The accuracy of the $q$-quantile estimate is near to $q(1-q)$. In this algorithm the accuracy depends on the quantile and is more accurate for computing quantiles close to 0 and 1.

With this work the authors  provide a solution to the problem of quantile computation on data streams.

Given a set of elements $x_1, \cdots ,x_n$, the quantile $x$ is the fraction of elements in the stream such that $x_i \le x$, i.e., the rank of $x$ denoted by $R(x)$. 
A data structure is accurate for all quantiles if for each $x$, letting $\tilde{R}(x)$ be the estimated rank of $x$, with probability $1-\delta$ is holds that $\vert \tilde{R}(x) - R(x) \vert \epsilon n$.

Z. Karnin, K. Lang and E. Liberty \cite{karnin2016optimal} designed their algorithm as a reinterpretation of the work of \cite{Agarwal} and \cite{Manku:1999} from a different point of view. The algorithm is based on the concept of a \textit{compactor}, a data structure that can store $k$ elements all with the same weight $w$, and if necessary can compact its $k$ elements into $k/2$ elements of weight $2w$ in the following way: items are sorted, then odd (respectively even) items are selected and the non-selected even (respectively odd) items are discarded, and the weight $w$ of each selected item is doubled. Each compactor will eliminate odd or even items with equal probability. The rank estimation after this process depends at most on $w$. 
The output elements of a compactor are put into another one and so on, and since each compactor has half of the elements in the sequence there will be at most $H \le \lceil \lg (n/k) \rceil$ compactors chained together creating a hierarchy with variable capacity.
Considering an algorithm run ending with $H$ different compactors the theorem proved by the authors states that there is a streaming algorithm that calculates a $\epsilon$ approximation for the rank of a single item with probability $1-\delta$ whose space complexity is $O((1/\epsilon) \sqrt{\lg(1/\delta)})$.
Moreover, there is another streaming algorithm that produces mergeable summaries and computes an $\epsilon$ approximation for the rank of a single item with probability $1-\delta$ whose space complexity is $O((1/\epsilon) \sqrt{\lg(1/\delta)} + \lg(\epsilon n))$.
An additional optimization guarantees the rank computation of a single element with $1-\delta$ probability and with a space complexity of only $O((1/\epsilon) \lg \lg (1/ \delta))$ for the non-mergeable version and $O((1/\epsilon) \lg^2 \lg (1/ \delta))$ for the mergeable version.
The algorithm provides a randomized solution to the problem of computing quantiles on data streams with a probability of error of $1 - \delta$ and a minimum amount of space used, ensuring the property of full mergeability. However, the algorithm provides estimates with a greater relative error for the high quantiles on heavy-tailed data.

\section{DDSketch}
\label{dds}

A basic version of \an{DDSketch}, described in \cite{Masson}, can provide $\alpha$-accurate $q$-quantiles for any $0 \leq q \leq 1$. This version of the algorithm is both simple to understand and implement, and provides support for item insertion/deletion and merging of two compatible sketches (i.e., sketches characterized by the same $\alpha$ value). The main drawback of this algorithm is that the accuracy is obtained by trading off the space required: the number of buckets in a sketch can grow without bound. Owing to this limitation, the authors of \an{DDSketch} introduced in \cite{Masson} an advanced version of \an{DDSketch} that can deliver $\alpha$-accurate $q$-quantiles for $q_0 \leq q \leq 1$ with a bounded number of buckets. In this manuscript we will only deal with the second improved version of \an{DDSketch}.
 
\an{DDSketch} works by dividing $\mathbb R_{> 0}$ into indexed buckets. Let $B_i$ be the bucket with index $i$ and $m$ the maximum number of buckets. The algorithm works reactively, by invoking a collapsing procedure if inserting a value causes the number of buckets to grow beyond $m$.

Denoting by $\gamma$ the quantity $\frac{1+\alpha}{1-\alpha}$ where $\alpha$ represents the user's defined accuracy, the bucket $B_i$ is a counter holding the occurrences of values $x$ falling between the interval given by $\gamma^{i-1} < x \leq \gamma^i$. Algorithm \ref{dds} shows the pseudo-code related to the insertion procedure of an item $x$. We assume that the number $b$ of buckets stored in the sketch at any time is $0 \leq b \leq m$, i.e., the number of buckets maintained is dynamic, depending on the sequence of insertions and deletions operations. Of course, the bucket indexes are dynamic as well. A bucket always holds a positive count. This is certainly true for insertion-only streams. However, \an{DDSketch} also allows deletions, in which case a bucket count may be zero. When this happens, a bucket is discarded and thrown away. 

To insert a value $x$, the index $i$ of the bucket in which $x$ falls is computed as $i = \lceil\log_\gamma{x}\rceil$. If the bucket $B_i$ has been already inserted into the sketch, then the bucket's counter is incremented by one. Otherwise, $B_i$ is added to the sketch with a count initialized to one. Then, if the number of buckets exceeds $m$ after inserting $x$, a bucket collapsing procedure is executed, by collapsing the initial two buckets. Note that, in general, the first two buckets are not $B_1$ and $B_2$, since the indexes of these buckets depend on the actual insertions done. Therefore, we denote in the pseudo-code these buckets as $B_y$ and $B_z$. In particular, it holds that $y < z$ but it is not necessarily true that $z = y + 1$, i.e. the indexes need not be consecutive. The buckets $B_y$ and $B_z$ are updated so that the count stored by $B_y$ is added to $B_z$, and $B_y$ is removed from the sketch. Alternatively, the collapsing procedure can be applied to the last two buckets.

\begin{algorithm}
  \caption{DDSketch-Insert($x, \mathcal{S}$)}
  \label{dds}
  \begin{algorithmic}
  \Require {$x \in \mathbb R_{> 0}$}
  \State $i \leftarrow \ceil{\log_\gamma{x}}$
  \If{$B_i \in \mathcal{S}$}
  	\State $B_i \leftarrow B_i + 1$
  \Else
  	\State $B_i \leftarrow 1$
  	\State $\mathcal{S} \leftarrow \mathcal{S} \cup B_i$
  \EndIf
  \If{$\abs{\mathcal{S}} > m$}
  	\State let $B_y$ and $B_z$ be the first two buckets
    \State $B_z \leftarrow B_y + B_z$
    \State $\mathcal{S} \leftarrow \mathcal{S} \smallsetminus B_y$
  \EndIf
    \end{algorithmic}
\end{algorithm}

The authors of \an{DDSketch} show that $m$ buckets suffice to $\alpha$-accurately answer a given $q$-quantile query if: $x_1 \leq x_q \gamma^{m-1}$, or,  equivalently: 
\begin{equation}
	\label{bound-condition}
	\frac{\log(x_1) - \log(x_q)}{\log(\gamma)} + 1 \leq m.
\end{equation}

Then, they prove the following theorem, which sets a bound to Eq. \ref{bound-condition} for datasets drawn from sub-exponential distributions.

\begin{thm}
	\label{ddsketch-bound}
	Let $X_{(1)} \leq X_{(2)} \leq \cdots \leq X_{(n)}$ be the order statistics for i.i.d. random variables $X_i$ distributed according to a sub-exponential distribution $F$ with parameters $(\sigma, b)$. Then with probability at least $1 - \delta_1 - \delta_2$, DDSketch is an $\alpha$-accurate $(q, 1)$-sketch with size at most $(\log X_{(n)} - \log X_{(q n)})/ log(\gamma) + 1$, which is bounded from above by: 
	\begin{equation}
		\frac{\log(2b \log(n/\delta_2) + \mathbb{E}X) - \log(F^{-1}(q - t))}{\log(\gamma)} + 1
	\end{equation}
	for: \\ $\gamma = (1 + \alpha)/(1 - \alpha)$, $t = \sqrt{\log(1/\delta_1)/2n}$, and $t < q < 1/2$.
\end{thm}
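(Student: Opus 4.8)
The plan is to split the argument into a deterministic, combinatorial part and a probabilistic part. For the combinatorial part, recall from Eq.~\ref{bound-condition} that a sketch with $m$ buckets returns an $\alpha$-accurate estimate of the $q'$-quantile whenever $m \ge (\log x_1 - \log x_{q'})/\log\gamma + 1$. The right-hand side is non-increasing in $q'$, since the quantile item $x_{q'}$ is non-decreasing in $q'$ and $\log\gamma>0$ (because $\gamma=(1+\alpha)/(1-\alpha)>1$); hence, to cover every $q'\in[q,1]$ simultaneously it suffices that $m \ge (\log x_1 - \log x_q)/\log\gamma + 1 = (\log X_{(n)} - \log X_{(qn)})/\log\gamma + 1$, where $x_1=X_{(n)}$ is the sample maximum and $x_q=X_{(qn)}$. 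Moreover the collapsing procedure only ever merges the buckets of smallest index, i.e.\ those holding the smallest values (which lie below $x_q$), so it never disturbs the buckets relevant to quantiles in $[q,1]$. This proves the first assertion, and it remains to bound the random quantity $(\log X_{(n)} - \log X_{(qn)})/\log\gamma + 1$ from above with probability at least $1-\delta_1-\delta_2$, which I would do by controlling $X_{(n)}$ from above and $X_{(qn)}$ from below, then applying monotonicity of $\log$.

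For the maximum I would invoke the sub-exponential tail of $F$: in the exponential regime $P(X-\mathbb{E}X\ge u)\le \exp(-u/(2b))$. Taking $u=2b\log(n/\delta_2)$ makes the right-hand side at most $\delta_2/n$, so a union bound over the $n$ i.i.d.\ samples gives $X_{(n)}\le \mathbb{E}X + 2b\log(n/\delta_2)$ with probability at least $1-\delta_2$; this is exactly the first argument of the logarithm in the claimed bound. (One should check that $u=2b\log(n/\delta_2)$ indeed exceeds the Gaussian-to-exponential crossover $\sigma^2/b$, which holds once $n$ is large enough or $\delta_2$ small enough.)

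For the empirical quantile, set $y=F^{-1}(q-t)$, which is well defined and strictly positive since $0<q-t<1$ (using $t<q<1/2$). The rank $R(y)=\sum_{i=1}^n \mathbf{1}\{X_i\le y\}$ is a sum of i.i.d.\ Bernoulli variables with mean $nF(y)=n(q-t)$, so Hoeffding's inequality yields $P\bigl(R(y)\ge qn\bigr)=P\bigl(R(y)/n-(q-t)\ge t\bigr)\le \exp(-2nt^2)$, which equals $\delta_1$ for $t=\sqrt{\log(1/\delta_1)/2n}$. On the complementary event fewer than $qn$ samples fall at or below $y$, hence $X_{(qn)}\ge y=F^{-1}(q-t)$ and therefore $-\log X_{(qn)}\le -\log F^{-1}(q-t)$.

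A union bound then gives, with probability at least $1-\delta_1-\delta_2$, both $X_{(n)}\le 2b\log(n/\delta_2)+\mathbb{E}X$ and $X_{(qn)}\ge F^{-1}(q-t)$; substituting into $(\log X_{(n)}-\log X_{(qn)})/\log\gamma+1$ and using that $\log$ is increasing yields the stated upper bound. I expect the only delicate points to be (i) extracting the exact constant in $t$, which is why I would apply Hoeffding at the single point $y$ rather than the coarser uniform (DKW) bound, and (ii) the $\pm1$ bookkeeping around the floor in the definition of $x_q$ and around ties, which is harmless under the customary assumption that $qn$ is an integer and the $X_i$ are distinct; none of this affects the leading-order statement.
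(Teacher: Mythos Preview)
The paper does not actually prove Theorem~\ref{ddsketch-bound}; it is quoted verbatim from~\cite{Masson} as background (``Then, they prove the following theorem\ldots''), so there is no proof here against which to compare your attempt. That said, your reconstruction is correct and is essentially the argument one finds in the original \an{DDSketch} paper: (i) use Eq.~\eqref{bound-condition} and monotonicity in $q'$ to reduce the $(q,1)$-accuracy claim to the single deterministic size bound $(\log X_{(n)}-\log X_{(qn)})/\log\gamma+1$; (ii) bound $X_{(n)}$ from above via the sub-exponential tail $P(X-\mathbb{E}X\ge u)\le e^{-u/(2b)}$ plus a union bound over the $n$ samples; (iii) bound $X_{(qn)}$ from below by applying Hoeffding at the single threshold $y=F^{-1}(q-t)$ to control the empirical rank; (iv) combine by a union bound. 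Your caveats about the Gaussian-to-exponential crossover and the $\pm 1$/ties bookkeeping are the right ones to flag, and the restriction $q<1/2$ plays no essential role in your argument beyond ensuring $q-t\in(0,1)$, which already follows from $t<q$.
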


\an{DDSketch}, as described by its authors, only deals with $\mathbb R_{> 0}$. Therefore, in order to deal with $\mathbb{R}$, one must use two sketches, one of which devoted to negative values. 

 \section{UDDSketch}
\label{idds}

Theorem \ref{ddsketch-bound} holds for input data following a sub-exponential distribution and requires that the distribution parameters $\sigma \text{ and } b$ are known. However, for arbitrary and/or unknown input distributions, it is not possible to give formal guarantees on the accuracy of \an{DDSketch} when the number of buckets at disposal is limited. In such a case an error beyond the desired level may affect also the range of quantiles of interest.

We devised a different collapsing strategy for \an{DDSketch} that overcomes the problem discussed above and allows giving guarantees on the accuracy of the sketch for all of the quantiles. As expected, providing the user with an approximated result, there is a trade-off involved between the $\alpha$ accuracy that can be achieved and the amount of space at disposal. However, we can prove that, if the maximum and minimum of the values which can appear in input are known or can be estimated with a low probability of failure, then a strict relation exists between a desired level $\alpha$ of accuracy on a generic quantile query and the number of buckets needed to guarantee that accuracy.

The new collapsing strategy is named \textit{uniform collapse} and, differently from the \an{DDSketch} collapsing, does not involve only two buckets, but all of the buckets, which  are collapsed two by two. More precisely, for each pair of indices $(i,  i+1)$, where  $i$ is odd and $B_i \neq 0$ or $B_{i+1} \neq 0$, a new bucket with index $j = \lceil \frac{i}{2} \rceil$ is created, whose count is the sum of the counts of $B_i$ and $B_{i+1}$ and which replaces the collapsed buckets. Algorithm \ref{unif-collapse} reports the pseudocode of the uniform collapse procedure.

\begin{algorithm}
\caption{UniformCollapse($\mathcal{S}$)}
	\label{unif-collapse}
 \begin{algorithmic}
	\Require {sketch $\mathcal{S} = \lbrace B_i \rbrace_i$}
	\ForAll{ $\lbrace i: B_i > 0 \rbrace$ }
		 \State $j \leftarrow  \lceil \frac{i}{2} \rceil$
		 \State $B'_{j} \leftarrow B'_{j} + B_{i}$
	\EndFor
	\State \Return $\mathcal{S} \leftarrow \lbrace B'_i \rbrace_i$
\end{algorithmic}
\end{algorithm}

The following lemma formally shows and justifies how uniform collapse modifies the sketch and its accuracy.

\begin{lem}
	The collapsing procedure applied to an $\alpha$-accurate $(0,1)$-quantile sketch produces an $\alpha^\prime$-accurate $(0,1)$-quantile sketch on the same input data with ${\alpha^\prime = \frac{2\alpha}{1+\alpha^2}}$. Moreover, an item $x$ falling in bucket with index $i$ of a collapsing sketch, will fall in bucket with index $\lceil i/2\rceil$ of the collapsed sketch.
	\label{lemma:3.2}  
\end{lem}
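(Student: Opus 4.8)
The plan is to observe that a uniform collapse turns a DDSketch with base $\gamma$ into a DDSketch with base $\gamma^2$, and then simply read off the new accuracy from the base--accuracy relation $\gamma = (1+\alpha)/(1-\alpha)$ recalled in Section~\ref{dds}.

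First I would recall the geometry of the buckets: in the $\alpha$-accurate sketch the bucket $B_i$ collects exactly the values $x$ with $\gamma^{i-1} < x \le \gamma^{i}$, and the index assigned to $x$ is $i = \lceil \log_\gamma x\rceil$. Uniform collapse merges, for every odd $i$, the pair $B_i, B_{i+1}$ into a single bucket $B'_j$ with $j = \lceil i/2\rceil = (i+1)/2$. Since $i-1 = 2(j-1)$ and $i+1 = 2j$, the merged bucket $B'_j$ collects precisely the values $x$ with $(\gamma^2)^{\,j-1} < x \le (\gamma^2)^{\,j}$. Hence the collapsed structure is, bucket for bucket, a DDSketch whose base is $\gamma' = \gamma^2$; moreover no count is lost in the process (the procedure in Algorithm~\ref{unif-collapse} just redistributes counts, skipping empty buckets), so it remains a valid $(0,1)$-sketch on the same multiset.

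Next I would determine $\alpha'$ from $\gamma' = (1+\alpha')/(1-\alpha')$. Using $\gamma = (1+\alpha)/(1-\alpha)$ one has $\gamma^2 = (1+\alpha)^2/(1-\alpha)^2$, and solving $(1+\alpha')/(1-\alpha') = (1+\alpha)^2/(1-\alpha)^2$ for $\alpha'$, by cross-multiplying and using $(1+\alpha)^2+(1-\alpha)^2 = 2(1+\alpha^2)$ together with $(1+\alpha)^2-(1-\alpha)^2 = 4\alpha$, gives $\alpha' = 4\alpha/\bigl(2(1+\alpha^2)\bigr) = 2\alpha/(1+\alpha^2)$, as claimed. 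For the index statement I would check the two descriptions agree: a value with old index $i = \lceil \log_\gamma x\rceil$ lands after collapse in $B'_{\lceil i/2\rceil}$ --- immediate from the procedure when $i$ is odd, and when $i$ is even the value sat in the pair $(i-1,i)$, which is sent to $\lceil (i-1)/2\rceil = i/2 = \lceil i/2\rceil$ --- and this matches the direct computation in the new sketch, $\lceil \log_{\gamma^2} x\rceil = \lceil \tfrac12 \log_\gamma x\rceil = \lceil \lceil \log_\gamma x\rceil/2\rceil$, where the last equality is the standard identity $\lceil \lceil a\rceil/n\rceil = \lceil a/n\rceil$ for a positive integer $n$.

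I do not expect a genuine obstacle here: the only points needing a little care are the alignment of the new bucket boundaries with the powers of $\gamma^2$ (which is what pins down $\gamma' = \gamma^2$, and hence the specific form of $\alpha'$, rather than some other relabeling) and the ceiling identity used to reconcile the two indexing schemes; once those are in place, the accuracy formula is a one-line algebraic manipulation.
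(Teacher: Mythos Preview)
Your proposal is correct and follows essentially the same approach as the paper: identify the collapsed structure as a DDSketch with base $\gamma' = \gamma^2$ by matching the merged interval $(\gamma^{i-1},\gamma^{i+1}]$ to $({\gamma'}^{j-1},{\gamma'}^{j}]$, derive $\alpha'$ from the relation $\alpha' = (\gamma'-1)/(\gamma'+1)$, and verify the index map via $\lceil \log_{\gamma^2} x\rceil = \lceil i/2\rceil$. If anything, you are slightly more careful than the paper in explicitly invoking the identity $\lceil \lceil a\rceil/n\rceil = \lceil a/n\rceil$ to justify the last step on indices.
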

\begin{proof}
	Let $B_i$ and $B_{i+1}$ be two adjacent buckets of the sketch to be collapsed. The collapsing procedure sums them up and replaces them with a new bucket, which we denote by $B'_{j}$. Let $U_i$, $U_{i+1}$ and $U'_j = U_{i} \cup U_{i+1}$ denote the intervals of values which refer respectively to buckets $B_i$, $B_{i+1}$ and $B'_j$. Let $\gamma = \frac{1+\alpha}{1-\alpha}$ and $\gamma^\prime = \frac{1+\alpha^\prime}{1-\alpha^\prime}$.
	
	We have that:
	\begin{align*}
		U_i = (\gamma^{i-1}, \gamma^i], U_{i+1} = (\gamma^i, \gamma^{i+1}],\\
		U_{i} \cup U_{i+1} = (\gamma^{i-1}, \gamma^{i+1}], \\ 
		U'_j = U_{i} \cup U_{i+1} = ({\gamma^\prime}^{j-1}, {\gamma^\prime}^{j}],
	\end{align*}

	\noindent from which we derive that:
	\begin{equation}
		\gamma^\prime = \frac{{\gamma^\prime}^j}{{\gamma^\prime}^{j-1}} = \frac{\gamma^{i+1}}{\gamma^{i-1}} = \gamma^2,
	\end{equation}

	\noindent and, as a consequence of the relation between $\alpha^\prime$ and $\gamma^\prime$, and $\alpha$ and $\gamma$:
	
	\begin{equation}
		\alpha^\prime = \frac{\gamma^\prime - 1}{\gamma^\prime + 1} = \frac{\gamma^2 - 1}{\gamma^2 + 1} = \frac{2\alpha}{1 + \alpha^2}.
	\end{equation}

\noindent Furthermore, we have that, if $B_i$ and $B'_j$ are the buckets in which a value $x$ falls, respectively, before and after the collapse, then it holds that:

\begin{equation}
	j = \lceil \log_{\gamma^\prime}{x} \rceil = \lceil \log_{\gamma^2}{x} \rceil = \left\lceil \frac{\log_\gamma{x}}{2} \right\rceil = \left\lceil \frac{i}{2} \right\rceil
\end{equation}

\noindent that proves the relation between the bucket keys of the collapsing sketch and those ones of the collapsed sketch.
\end{proof}

After collapsing the buckets, $\alpha^\prime$ represents the new theoretical error bound for the sketch. Each time we perform a collapse, $\alpha$ increases, i.e. we lose accuracy. However, we do not expect executing the collapsing procedure repeatedly up to the point where the loss in accuracy adversely impacts on the data structure precluding its use. The reason is that each time a collapsing is done, the input interval covered by the $m$ available buckets increases as well, so that a few collapsing are enough to process input data streams with very large range of values.

%For instance, for nine collapsing, starting from $\alpha_{*} = 0.0001$, the new values are $\{0.0002,0.0004,0.0008,0.0016,0.00319999,0.00639991,0.0127993,0.0255944,0.0511553\}$. 

 According to the collapsing algorithm, we can formulate the following theorem which provides an upper bound on the accuracy of the results, i.e., on the error committed approximating the quantile computations when a limited number of buckets are at disposal.
 
 \begin{thm}
 \label{errorbound}
Given an input whose data domain is an interval $[x_{min}, x_{max}] \in \mathbb R_{>0}$ and an \an{UDDSketch} data structure using at most $m$ buckets to process the input, the approximation error committed by \an{UDDSketch} using the uniform collapse procedure is bounded by $\hat\alpha = \frac{\tilde\gamma^2-1}{\tilde\gamma^2+1}$, with $\tilde\gamma = \sqrt[m]{\frac{x_{max}}{x_{min}}}$. 
\end{thm}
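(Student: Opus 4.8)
The plan is to follow the relative‑accuracy parameter of the structure through the sequence of uniform collapses and to bound its terminal value purely in terms of $m$, $x_{min}$ and $x_{max}$. The starting point is Lemma~\ref{lemma:3.2}: a single uniform collapse turns a DDSketch with parameter $\gamma=\frac{1+\alpha}{1-\alpha}$ into a DDSketch with parameter $\gamma^{2}$. Hence, writing $\gamma_0$ for the parameter induced by the user's initial accuracy, after $k$ collapses the structure is a DDSketch with parameter $\gamma_k=\gamma_0^{2^{k}}$, and therefore, by the relative‑accuracy property of DDSketch, it answers every $q$-quantile with relative error at most $\alpha_k=\frac{\gamma_k-1}{\gamma_k+1}$. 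I would record here the elementary fact that $\gamma\mapsto\frac{\gamma-1}{\gamma+1}$ is strictly increasing on $(1,\infty)$, since this is exactly what converts an upper bound on $\gamma_k$ into an upper bound on $\alpha_k$.

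Next I would pin down when the collapsing stops. Since every stream item lies in $[x_{min},x_{max}]$, with parameter $\gamma$ all occupied buckets have indices in $[\lceil\log_\gamma x_{min}\rceil,\lceil\log_\gamma x_{max}\rceil]$, and a block of $m$ consecutive buckets spans a multiplicative window of size $\gamma^{m}$; consequently, once $\gamma^{m}\ge x_{max}/x_{min}$ the whole domain occupies at most $m$ buckets and no further collapse is invoked. Thus the process halts at the first index $k$ with $\gamma_k=\gamma_0^{2^{k}}\ge\tilde\gamma$, where $\tilde\gamma=(x_{max}/x_{min})^{1/m}$. If at least one collapse took place, the predecessor parameter $\gamma_{k-1}=\sqrt{\gamma_k}$ must have failed this test, i.e.\ $\sqrt{\gamma_k}<\tilde\gamma$, whence $\gamma_k<\tilde\gamma^{2}$; if no collapse was ever needed one has $\gamma_k=\gamma_0$, and since there is no reason to initialise the sketch with an accuracy coarser than the target $\hat\alpha$ we may take $\gamma_0\le\tilde\gamma^{2}$, so again $\gamma_k\le\tilde\gamma^{2}$. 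In either case the terminal parameter $\gamma^{\star}$ satisfies $\gamma^{\star}\le\tilde\gamma^{2}$, and by the monotonicity noted above the terminal error parameter obeys $\frac{\gamma^{\star}-1}{\gamma^{\star}+1}\le\frac{\tilde\gamma^{2}-1}{\tilde\gamma^{2}+1}=\hat\alpha$; since the collapsed sketch is $\frac{\gamma^{\star}-1}{\gamma^{\star}+1}$-accurate over all quantiles it is \emph{a fortiori} $\hat\alpha$-accurate, which is the assertion.

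The step I expect to be the main obstacle is the rounding bookkeeping hidden in the phrase ``a block of $m$ buckets covers the domain''. The exact number of occupied buckets is $\lceil\log_\gamma x_{max}\rceil-\lceil\log_\gamma x_{min}\rceil+1$, which in the worst case of misalignment exceeds $\log_\gamma(x_{max}/x_{min})$ by almost two, so to be certain the domain fits into $m$ buckets one actually needs $\gamma^{m-1}\ge x_{max}/x_{min}$; propagating this through the argument yields the slightly weaker terminal bound $(x_{max}/x_{min})^{2/(m-1)}$ on $\gamma^{\star}$, which one then either absorbs into $\hat\alpha$ for the regime of interest (memory‑bound sketch, $m$ not tiny) or sidesteps by stating the bound in the clean idealised form $\tilde\gamma=\sqrt[m]{x_{max}/x_{min}}$ as in the theorem statement. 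Everything else reduces to the two short computations above: the per‑collapse doubling of $\log\gamma$ supplied by Lemma~\ref{lemma:3.2}, and the monotone transfer from $\gamma^{\star}$ to $\hat\alpha$.
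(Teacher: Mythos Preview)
Your proposal is correct and follows essentially the same line as the paper's own proof: both identify $\tilde\gamma=(x_{max}/x_{min})^{1/m}$ as the threshold at which $m$ buckets cover the domain, use Lemma~\ref{lemma:3.2} to track $\gamma_k=\gamma_0^{2^k}$ through successive collapses, and bound the terminal parameter by $\tilde\gamma^{2}$ because the penultimate value was still below $\tilde\gamma$. Your treatment is in fact slightly more explicit than the paper's on two points the paper leaves implicit or handles loosely---the monotonicity of $\gamma\mapsto(\gamma-1)/(\gamma+1)$ and the ceiling/off-by-one issue in the bucket count---so your caveat about the $m$ versus $m-1$ discrepancy is well placed rather than a gap.
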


\begin{proof}
In order to provide un upper bound on the accuracy achieved by the U\an{DDSketch} data structure, we analyze the worst case, i.e., the situation in which the $m$ buckets must uniformly cover the interval $[x_{min}, x_{max}]$. In such a case, the corresponding indexes are consecutive numbers denoted by $i_1, i_2, \cdots, i_m$. Let the covered interval be $(\tilde\gamma^{i_1 - 1}, \tilde\gamma^{i_m}]$. Choosing ${i_1 = \ceil{\lg_{\tilde\gamma} x_{min}}}$, it holds that $x_{min}$ falls into the first bucket $B_{i_1}$. Therefore, it holds that 

\begin{equation}
\label{firstbucket}
	\tilde\gamma^{i_1 - 1} < x_{min} \leq \tilde\gamma^{i_1}.
\end{equation}

We now show that $x_{max}$ falls into the last bucket $B_{i_m}$, i.e.,

\begin{equation}
\label{lastbucket}
	\tilde\gamma^{i_m} \geq x_{max}.
\end{equation}

 It holds that $\tilde\gamma^{i_m} = \tilde\gamma^{i_1} \tilde\gamma^{m}$ since the buckets indexes are consecutive and the buckets uniformly cover the whole interval. As a consequence, taking into account the definition of $\tilde\gamma$, equation \eqref{lastbucket} is equivalent to
 
 \begin{equation}
\label{equiv}
	\tilde\gamma^{i_1} \frac{x_{max}}{x_{min}} \geq x_{max}.
\end{equation}

Equation \eqref{equiv} holds, since $\frac{\tilde\gamma^{i_1}}{x_{min}} \geq 1$, owing to equation \eqref{firstbucket}.

Now consider an initial $\alpha$ value and a corresponding initial $\gamma$ such that an integer number of collapses which brings $\gamma$ to $\tilde{\gamma}$ does not exist, but it holds that $\gamma^{2^k} < \tilde{\gamma} < \gamma^{2^{k+1}}$, for a $k \in \mathbb{N}$. In this case, we may need a number $k+1$ of collapses to accommodate all of the input values and end up with a final value of $\gamma > \tilde{\gamma}$. However, even in this eventuality, the value of $\gamma$ can not grow beyond $\tilde{\gamma}^2$ and this is the reason why the upper bound of the accuracy is set to $\frac{\tilde{\gamma}^2-1}{\tilde{\gamma}^2+1}$.

\end{proof}

In Theorem \ref{errorbound}, we assume that the values $x_{min}$ and $x_{max}$ of the input data are known. This is not always true, but we can always estimate these values with a probability $\delta$ of failing our prediction. In that case, the bound showed by Theorem \ref{errorbound} holds with probability $1 - \delta$.

We now discuss how to, given a user desired level $\alpha$ of accuracy and a number of buckets sufficient to satisfy that accuracy based on Theorem \ref{errorbound}, choose the initial value of accuracy $\alpha_0 \leq \alpha$ to start our algorithm. By construction, the sequence of $\alpha_k$ values corresponding to the $\gamma_k$ values changed upon a collapsing procedure follows the recurrence equation:

\begin{equation}
\label{recurrence}
\alpha_k =  \begin{cases} \alpha_0 & \text{for $k = 0$} \\
 \frac{2 \alpha_{k-1}}{1+\alpha_{k-1}^2} & \text{for $k > 0$},
\end{cases}
\end{equation}
where $k$ denotes the number of collapses performed.

The solution to Eq. \ref{recurrence} is ${\alpha_k = \tanh{(2^{k-1} \arctanh{(\alpha_0)})}}$. Similarly, the equation
 
\begin{equation}
 	\label{recurrence-sol}
  	\alpha_0 = \tanh{\left(\frac{\arctanh{(\alpha_k)}}{2^{k-1}}\right)},
\end{equation}
 
allows to compute $\alpha_0$ given a final accuracy $\alpha_k$ corresponding to $k$ collapses.

We can use Eq. \ref{recurrence-sol} to compute the initial value of the accuracy parameter for our algorithm by setting $\alpha_k$ to the value of user desired accuracy and  $k$ to the number of collapses that we are willing to accept. There is a trade-off to take into consideration: if we go backward too far, i.e., we set $k$ too large, we could end up with too many collapses and a decrease of performance, but  with a favourable input distribution, we can obtain a better accuracy. On the contrary, if we compute the initial accuracy with a few collapses or no collapses at all ($\alpha_0$ and user $\alpha$ coincide), we improve the performance and we may require less space, but we can not do better in terms of accuracy than guaranteeing the desired alpha. We have seen that a good empirical value for $k$ is 10.
 
\section{Experimental Results}
\label{results}
In this Section, we present and discuss the experimental investigation carried out in order to compare \an{UDDSketch} against \an{DDSketch}. 

Both \an{UDDSketch} and \an{DDSketch} algorithms have been implemented in C and compiled using the GCC compiler v4.8.5 on linux CentOS 7 with optimization level O3. The tests have been executed on a workstation equipped with 64 GB of RAM and two 2.0 GHz exa-core Intel Xeon CPU E5-2620 with 15 MB of cache level 3. The source code is freely available for inspection and reproducibility of results\footnote{https://github.com/cafaro/UDDSketch}. 

\begin{table*}
	\caption{Synthetic datasets}
	\label{synthetic-data}
	\centering
	\ra{1.4}
	\begin{tabular}{@{}llll@{}}
		\toprule
		\textbf{Dataset} & \textbf{Min value} & \textbf{Max value} & \textbf{Distribution}\\
		\hline
		betaL & $3.04 \times 10^{-2}$ & $0.99$ & $\textit{Beta}(5, 1.5)$ \\
		betaR& $8.34 \times 10^{-7}$ & $0.97$ & $\textit{Beta}(1.5, 5)$ \\
		chisquare & $5.67\times 10^{-3}$ & $42.9$ & $\chi^2(5)$ \\
		exponential & $1.19\times10^{-7}$ & $34.9$ & $\textit{Exp}(0.5)$ \\
		extremevalue & $14.5$ & $54.2$ & $\textit{Extremevalue}(20, 2)$ \\
		gamma & $2.99 \times 10^{-3}$ & $81.8$ & $\Gamma(2, 4)$ \\
		gumbel & $31.0$ & $1.11\times10^{2}$ & $\textit{Gumbel}(100, 4)$ \\
		halfnormal & $4.01\times10^{-7}$ & $13.7$ & $\textit{Halfnormal}(0.5)$ \\
		inversegaussian & $0.17$ & $5.12\times10^{2}$ & $\textit{IG}(10, 5$) \\
		laplace & $26.4$ & $3.67\times10^{2}$ & $\textit{Laplace}(200, 10)$ \\
		logistic & $32.8$ & $3.68\times10^{2}$ & $\textit{Logistic}(200, 10)$ \\
		lognormal & $1.08\times10^{-3}$ & $7.91\times10^{3}$ & $\textit{Lognormal}(1, 1.5)$ \\
		normal & $39.7$ & $60.5$ & $\textit{N}(50, 2)$ \\
		pareto & $2.0$ & $8.64\times 10^{-12}$ & $\textit{Pareto}(2, 0.5)$ \\
		uniform & $2.18\times10^{-3}$ & $2.49\times10^4$ & $\textit{Unif}(0, 2.5\times10^4)$ \\
		\bottomrule
	\end{tabular}
\end{table*}

\begin{figure*}[h]
	\caption{Statistical distributions}
	\label{distributions}
	\centering
	\includegraphics[width=1.0\textwidth]{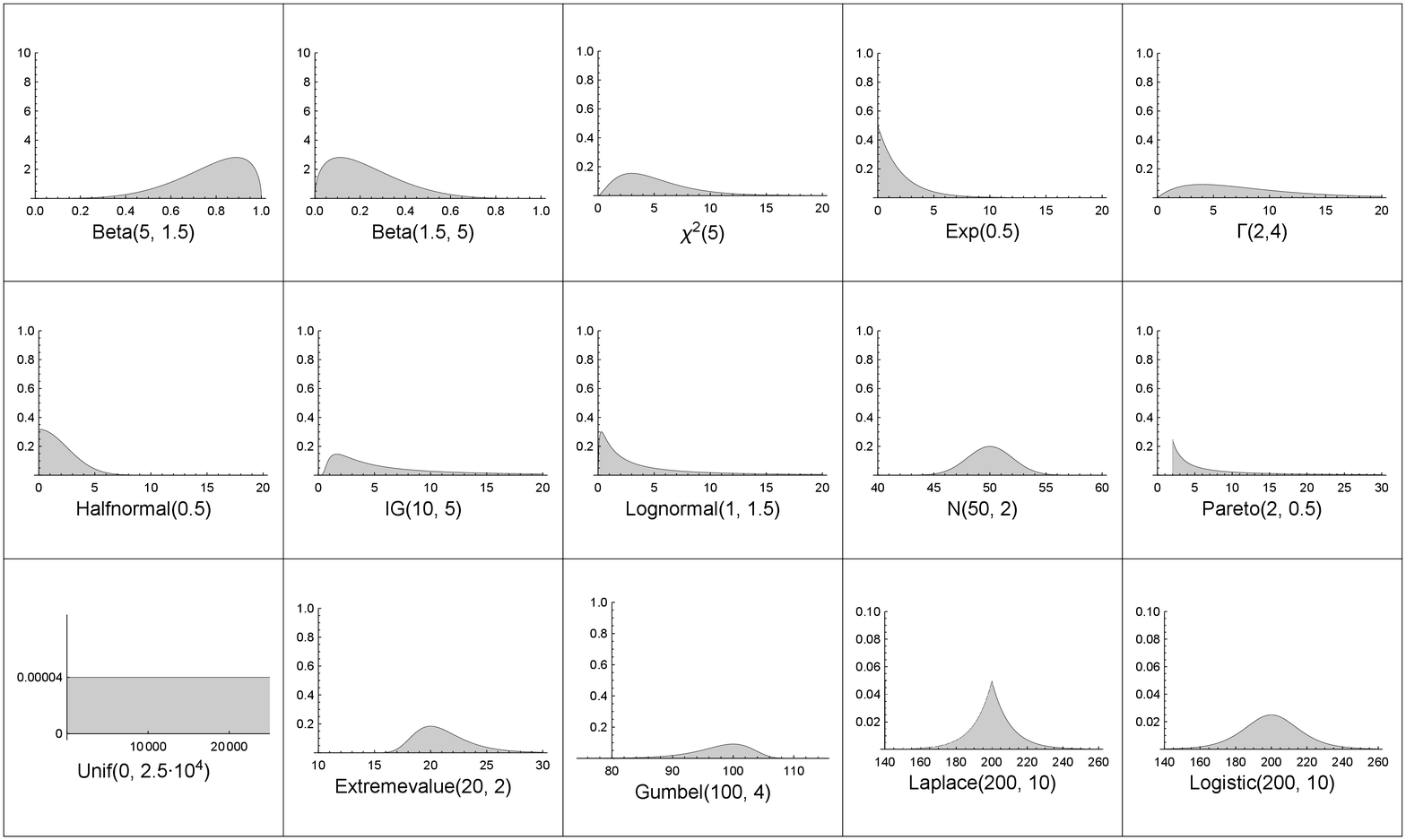}
\end{figure*}

The tests have been performed on $15$ various synthetic datasets, whose properties are summarized in Table \ref{synthetic-data}. Each dataset consists of 10 million real values. Figure \ref{distributions} shows the statistical distributions from which the datasets are drawn. 

\an{DDSketch} is executed using in each experiment all of the possible collapsing strategies: collapses of buckets with higher IDs (\an{DDSketch H}), collapses of buckets with lower IDs (\an{DDSketch L}) and a third variant (\an{DDSketch D}) where the available buckets have been equally partitioned between two sketches, one \an{DDSketch H} and one \an{DDSketch L}. In that case, each quantile is estimated through the most accurate sketch, i.e. the sketch whose estimation comes from a non collapsed bucket. If both answers come from the collapsed bucket, the estimation from the sketch with less overall collapses is chosen.

The three variants of \an{DDSketch} and \an{UDDSketch} have been executed on each dataset in Table \ref{synthetic-data} varying the value of $\alpha$ and the maximum number of buckets available to the algorithm. The sets of values used are shown in Table \ref{parameters}.

\begin{table}
	\caption{Synthetic datasets}
	\label{parameters}
	\centering
	\ra{1.2}
	\small
	\begin{tabular}{@{}ll@{}}
		\toprule
		\textbf{Parameter} & \textbf{Set of values}\\
		\hline
		\textbf{User} $\mathbf{alpha}$ & $\{0.00001, 0.0001, 0.001, 0.01, 0.1\}$ \\
		\textbf{Number of buckets} & $\{128, 256, 512, 1024, 2048\}$ \\
		\bottomrule
	\end{tabular}
\end{table}

In each test run, the performance, i.e., the number of values processed in a unit of time (updates per millisecond), and the accuracy, i.e., the relative errors committed on estimation of quantiles $q_0, q_{0.1}, q_{0.2}, \dots, q_{1}$, are measured for all of the collapsing strategies under investigation. 

Figures \ref{allqs-plots1} and \ref{allqs-plots2} show the estimation errors committed by \an{DDSketch L}, \an{DDSketch H} and \an{DDSketch D}, compared with \an{UDDSketch}. Figure \ref{allqs-plots1} refers to the \textit{betaL}, \textit{chisquare} and \textit{exponential}  datasets, whilst Figure \ref{allqs-plots2} is relative to the  \textit{normal}, \textit{pareto} and \textit{uniform} datasets. The number of buckets is set to $1024$ and $\alpha$ is set to $0.001$.
The results obtained when processing the other datasets in Table \ref{synthetic-data} are not reported here, for saving space since they exhibit similar behaviours. 

The plots show a major robustness of \an{UDDSketch} with reference to the distribution of the values in input. Even when the number of buckets granted to the algorithm is not enough to reach the desired $\alpha$ (dotted line), nonetheless \an{UDDSketch} guarantees an overall better accuracy, regardless of the input distribution. Even when we are only interested to specific quantiles, \an{DDSketch} does not always succeed in guaranteeing a bounded relative error, as \an{UDDSketch} does, independently of the chosen collapsing strategy. Furthermore, it is not possible for \an{DDSketch} to choose the best collapsing strategy a priori, when the input distribution is unknown. Particularly critical are the quantiles around the median, which rarely \an{DDSketch} can report with sufficient accuracy.
 
Figures \ref{boxplot-plots1} and \ref{boxplot-plots2} show how the median and interquartile range of the relative errors on quantiles change varying the number of buckets, when $\alpha$ is fixed to $0.001$. As in Figures \ref{allqs-plots1} and \ref{allqs-plots2}, the plots in each column refer to the same collapsing strategy, and the plots in each rows are relative to the same dataset. The datasets examined are the same as in the previous figures.

The observations made by inspecting Figures \ref{allqs-plots1} and \ref{allqs-plots2} are confirmed by Figures \ref{boxplot-plots1} and \ref{boxplot-plots2}. \an{UDDSketch} returns quantile estimations that are overall more accurate than \an{DDSketch}, also in terms of lower medians and shorter interquartile ranges of relative errors. Moreover, the experiments show that our solution puts to better use the number of buckets at disposal: in fact, \an{UDDSketch} keeps improving the estimate when the space granted grows, whilst \an{DDSketch} stops when the required $\alpha$ is reached and makes no use of the extra buckets at disposal.

At last, Figure \ref{update_ms-plots} shows the performance of the different \an{DDSketch} collapsing strategies compared with \an{UDDSketch}, when varying the number of buckets and with reference to the datasets \textit{betaL}, \textit{exponential}, \textit{uniform}. The other datasets lead to similar behaviours and are not reported.

The performance of the algorithms under test are in general comparable. \an{UDDSketch} is better when the number of buckets is low, \an{DDSketch L} and \an{DDSketch H} are more performant when the space grows, for they tends to not make use of the extra space. \an{DDSketch D} is always less performant due to the use of two sketches that must be updated at the same time.

\section{Conclusions}
\label{conclusions}

We have introduced \an{UDDSketch} (Uniform DDSketch), a novel sketch for fast and accurate tracking of quantiles in data streams. Our sketch was heavily inspired by the recently introduced \an{DDSketch}, and is based on a novel bucket collapsing procedure that allows overcoming the intrinsic limits of the corresponding \an{DDSketch} procedures. \an{UDDSketch} has been designed so that accuracy guarantees can be given over the full range of quantiles and for arbitrary distribution in input. Moreover, our algorithm fully exploits the budgeted memory adaptively in order to guarantee the best possible accuracy over the full range of quantiles. Extensive experimental results on synthetic datasets have confirmed the validity of our approach. 

\begin{figure*}[h]
	\centering
	\begin{tabular}{ccc}		
		\subfloat[]{
		\includegraphics[width=0.3\textwidth]{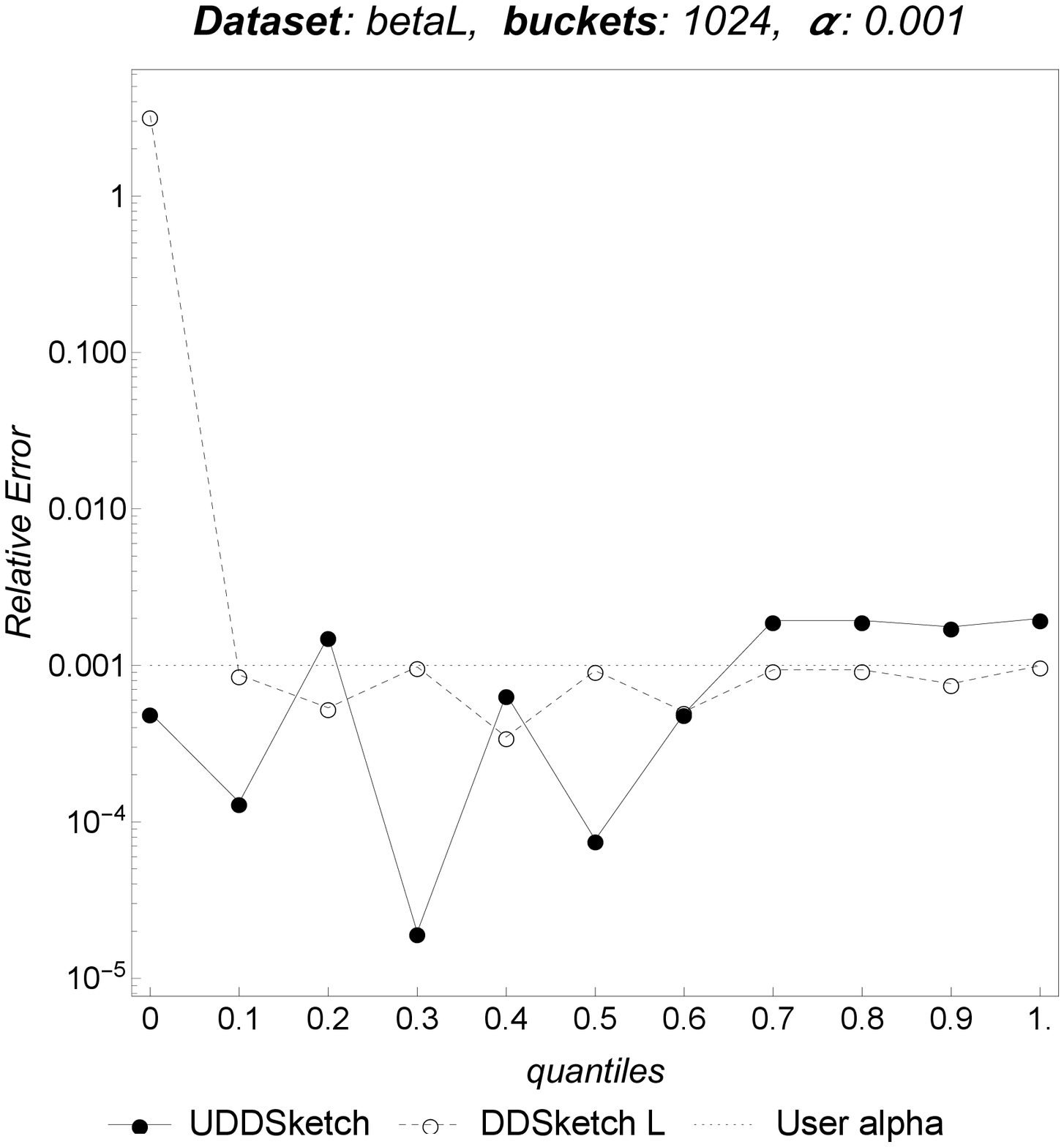}
		\label{betaL-allqs-ddsL}
		} &
		
		\subfloat[]{
			\includegraphics[width=0.3\textwidth]{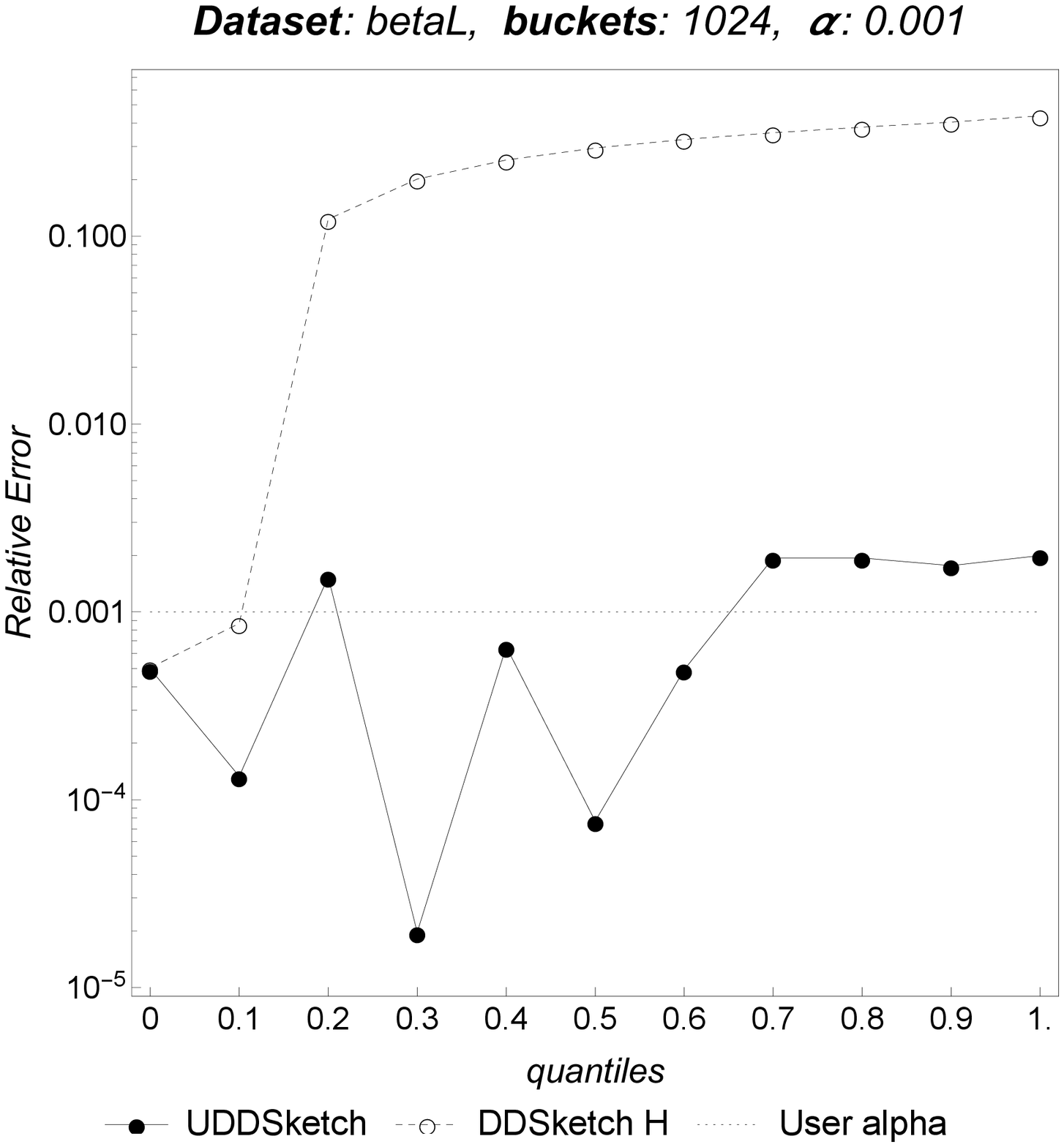}
			\label{betaL-allqs-ddsH}
		} &
		
		\subfloat[]{
			\includegraphics[width=0.3\textwidth]{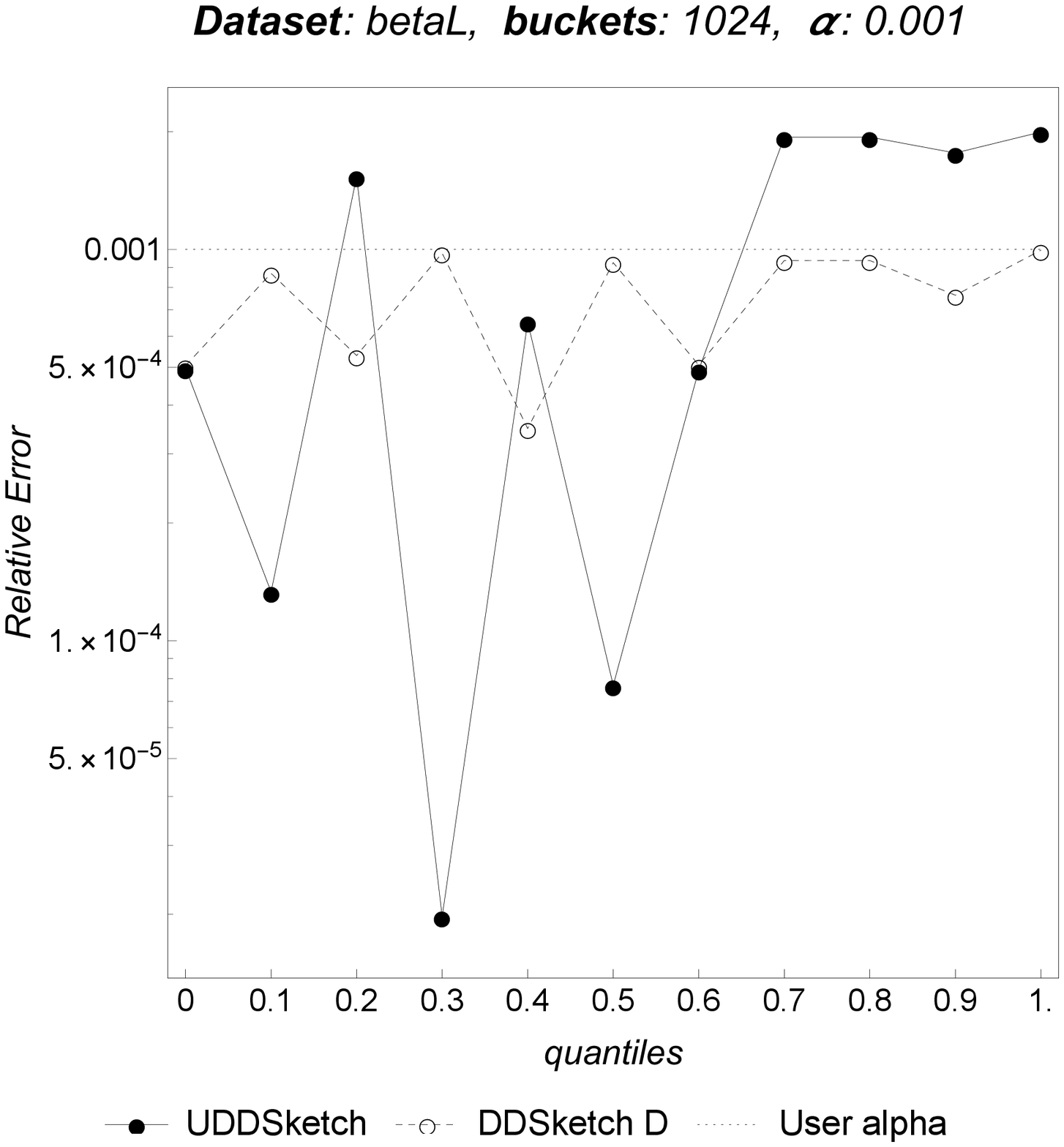}
			\label{betaL-allqs-ddsD}
		} \\
				
		\subfloat[]{
			\includegraphics[width=0.3\textwidth]{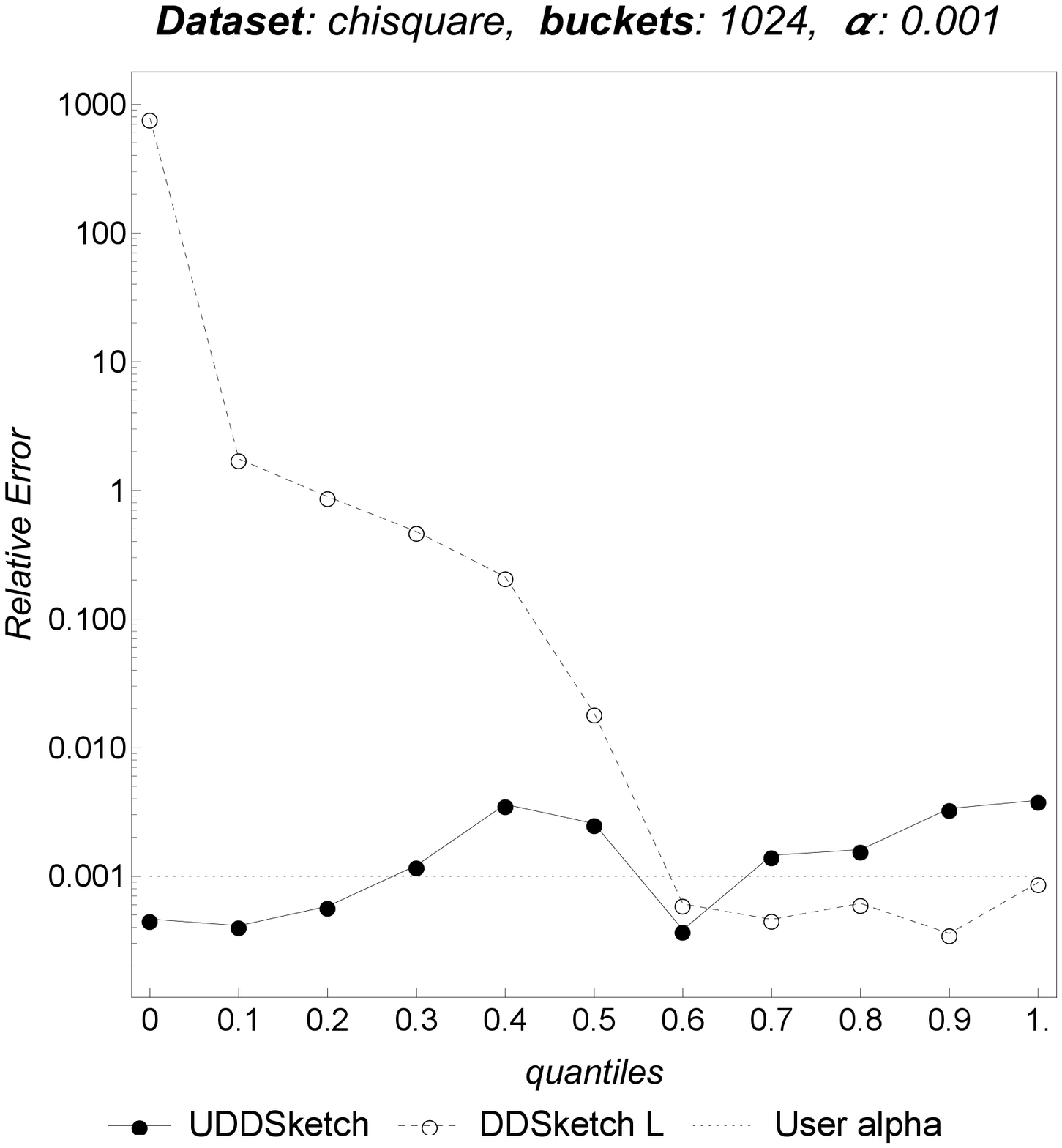}
			\label{chisquare-allqs-ddsL}
		} &
		
		\subfloat[]{
			\includegraphics[width=0.3\textwidth]{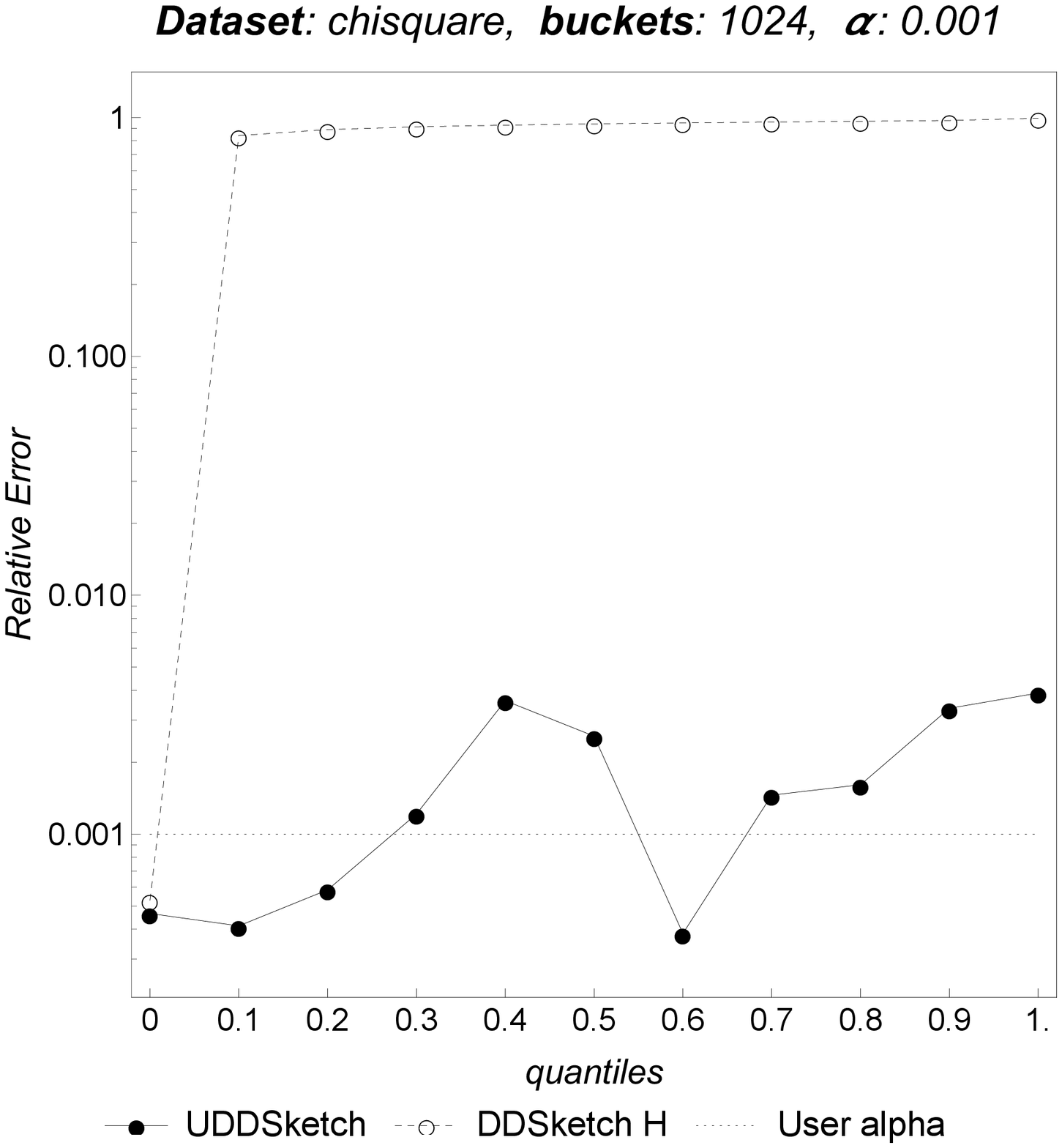}
			\label{chisquare-allqs-ddsH}
		} &
		
		\subfloat[]{
			\includegraphics[width=0.3\textwidth]{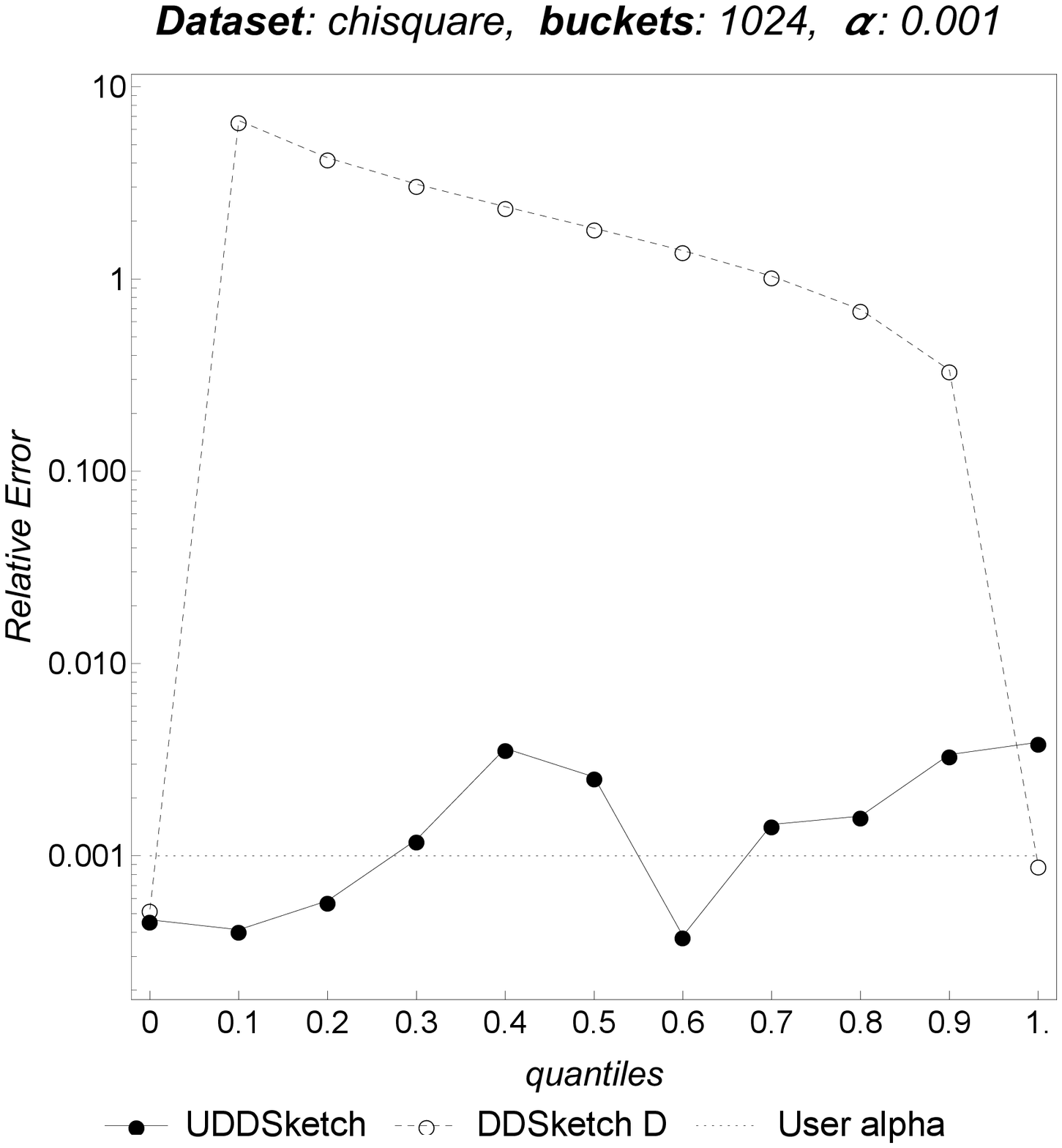}
			\label{chisquare-allqs-ddsD}
		} \\
	
		\subfloat[]{
			\includegraphics[width=0.3\textwidth]{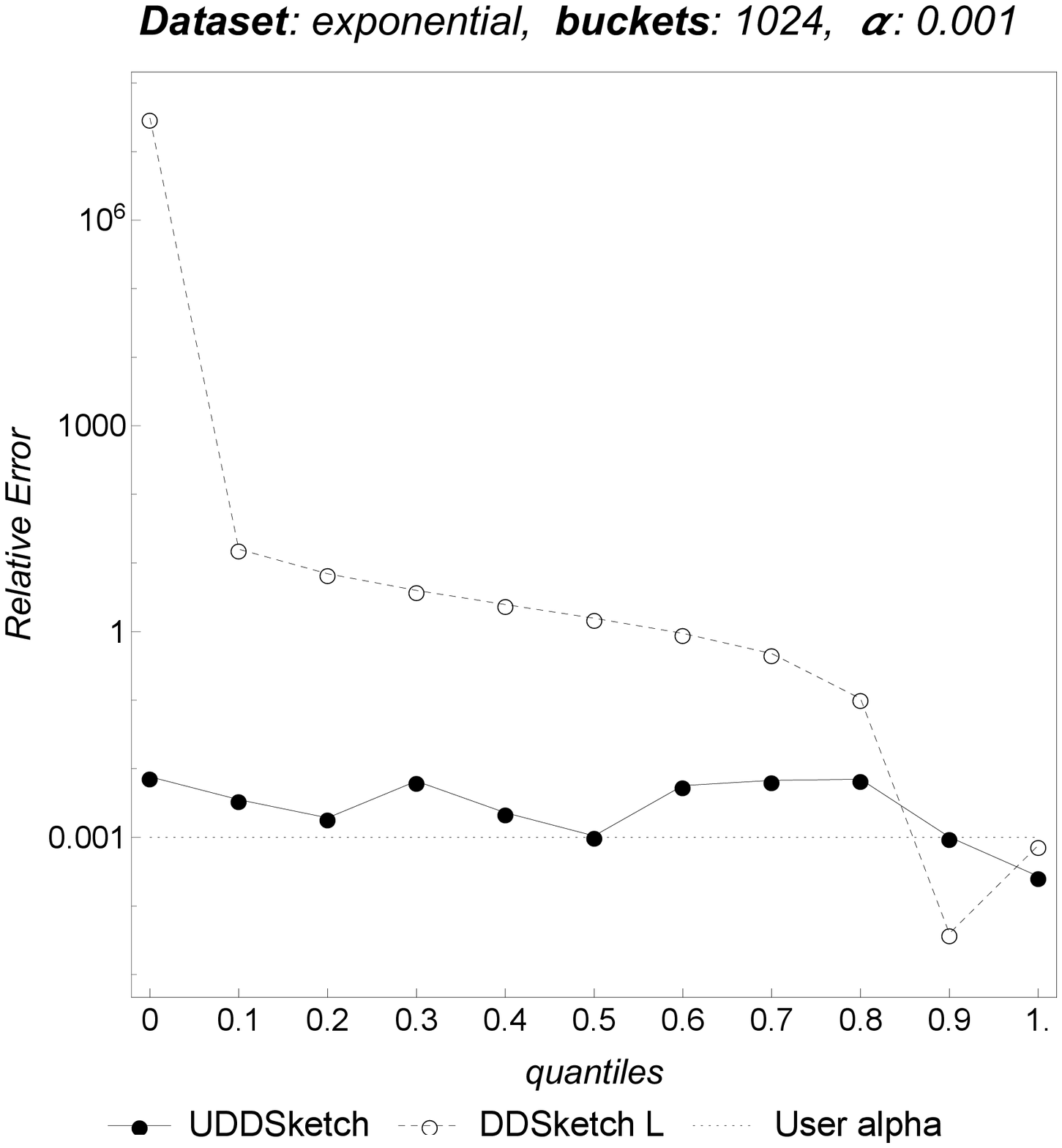}
			\label{exponential-allqs-ddsL}
		} &
		
		\subfloat[]{
			\includegraphics[width=0.3\textwidth]{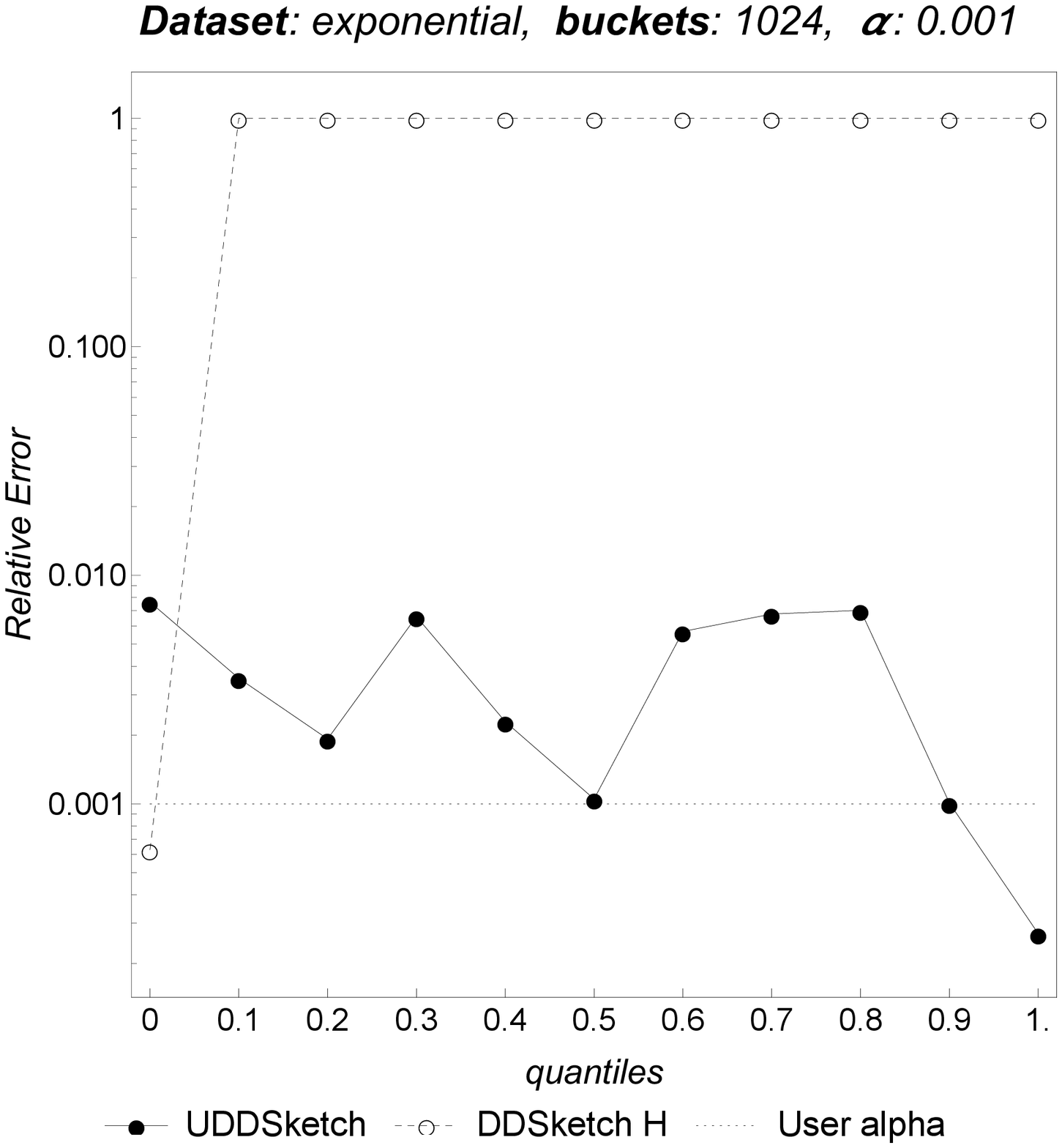}
			\label{exponential-allqs-ddsH}
		} &
		
		\subfloat[]{
			\includegraphics[width=0.3\textwidth]{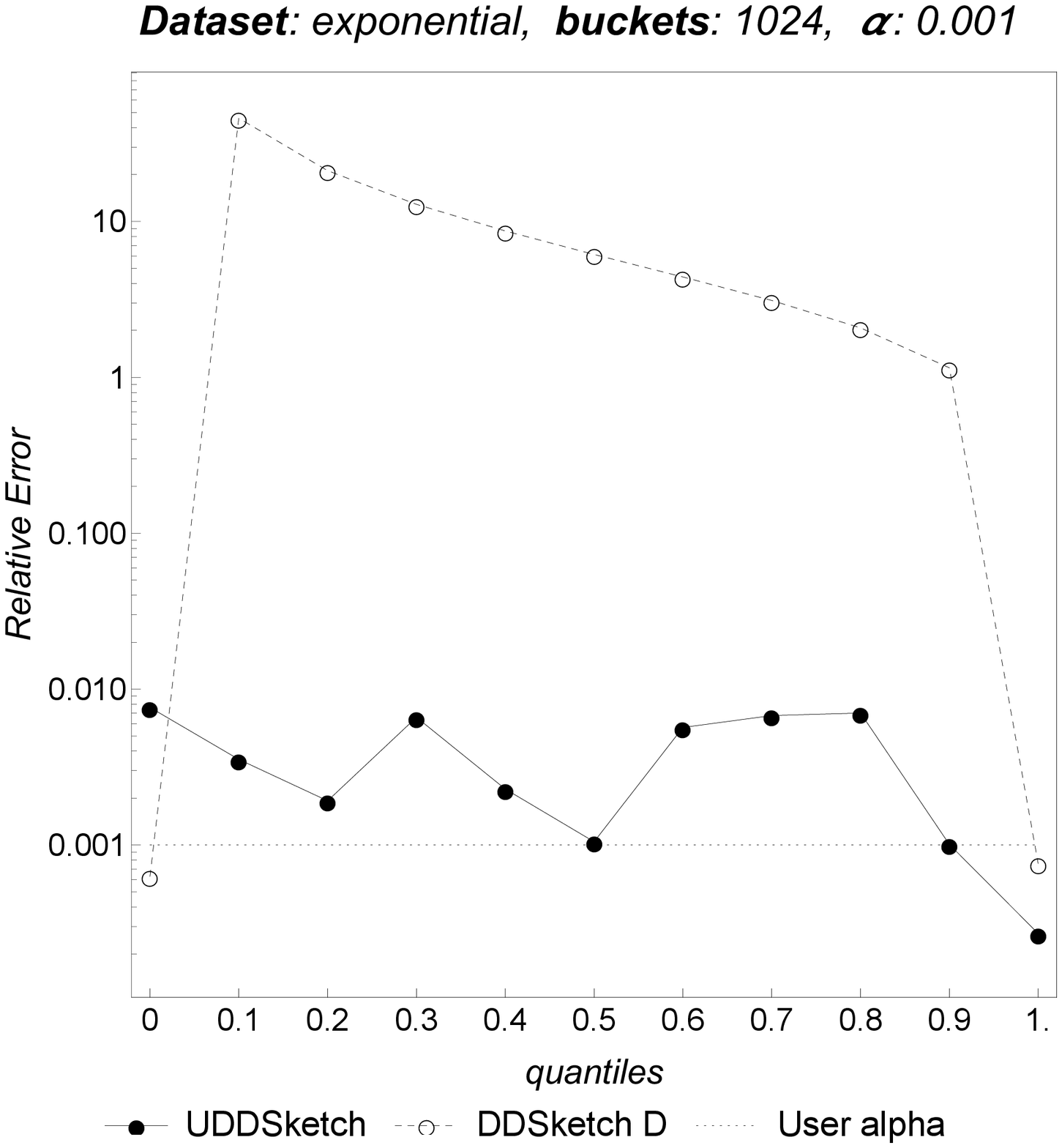}
			\label{exponential-allqs-ddsD}
		}  
	\end{tabular}
	
	\caption{Relative errors on quantiles $q_0, q_{0.1}, q_{0.2} \dots q_1$, varying the distribution and collapsing strategy.} 
	\label{allqs-plots1}
\end{figure*}

\begin{figure*}[h]
	\centering
	\begin{tabular}{ccc}		
		\subfloat[]{
			\includegraphics[width=0.3\textwidth]{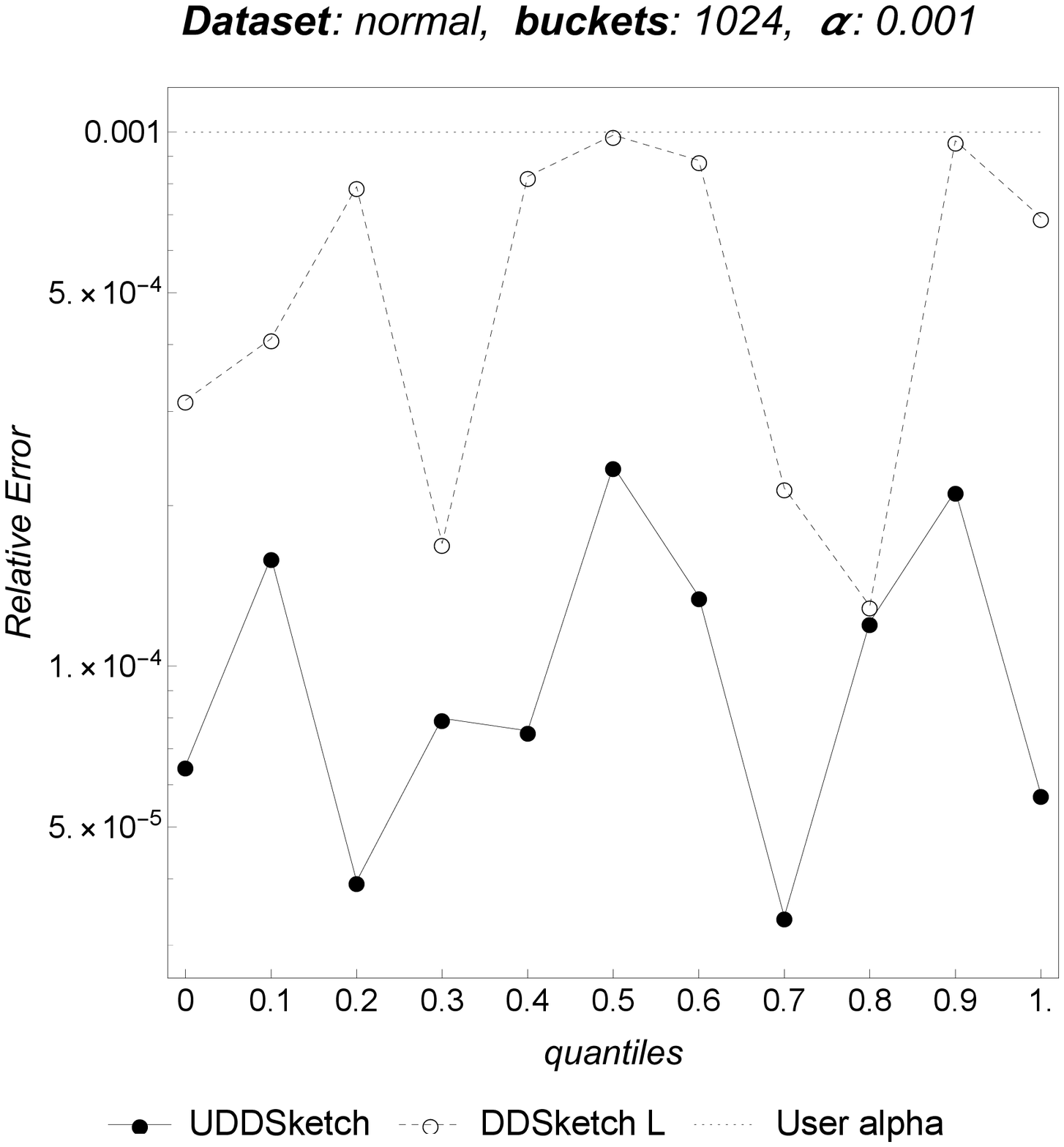}
			\label{normal-allqs-ddsL}
		} &
		
		\subfloat[]{
			\includegraphics[width=0.3\textwidth]{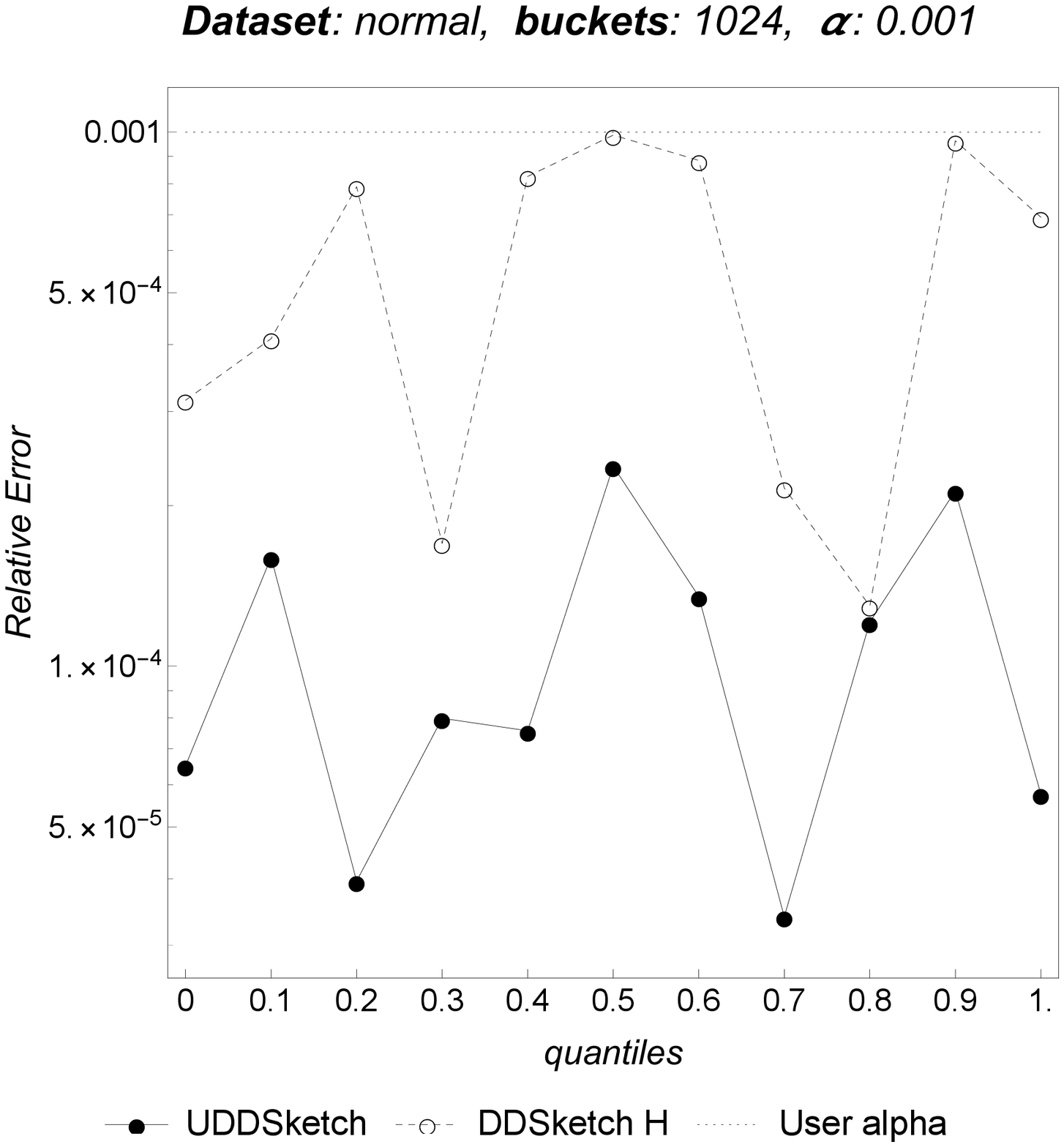}
			\label{normal-allqs-ddsH}
		} &
		
		\subfloat[]{
			\includegraphics[width=0.3\textwidth]{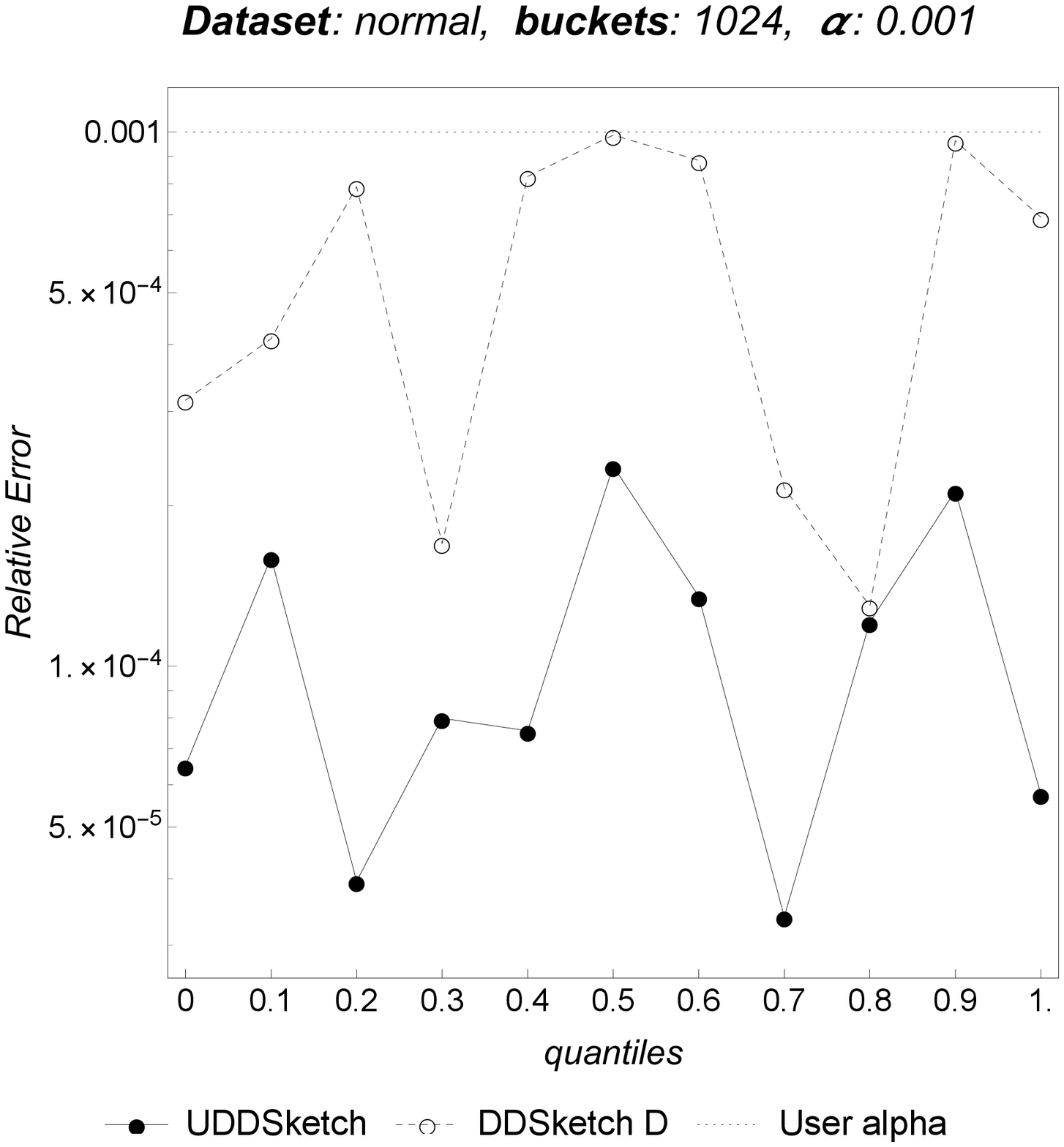}
			\label{normal-allqs-ddsD}
		} \\
		
		\subfloat[]{
			\includegraphics[width=0.3\textwidth]{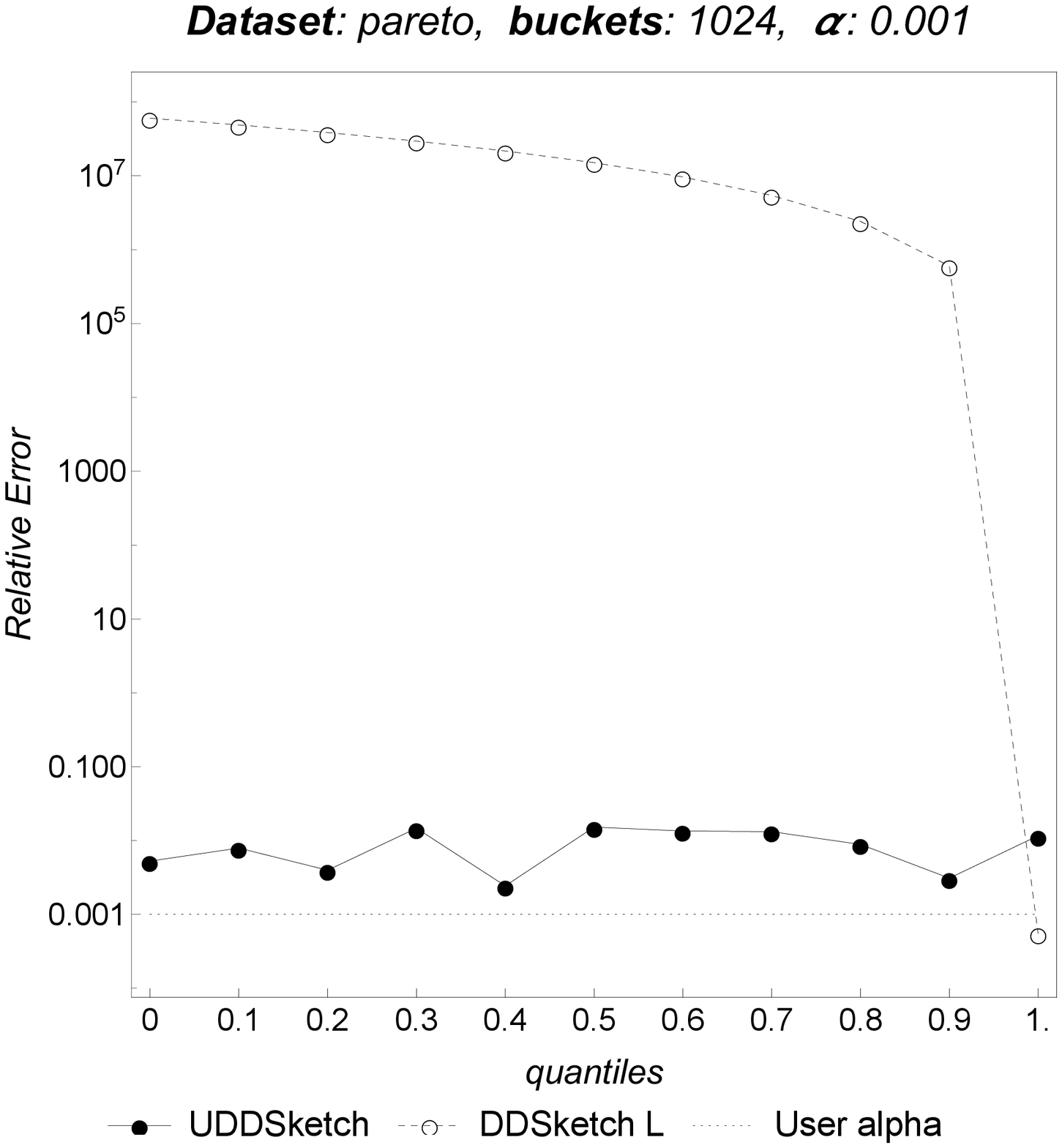}
			\label{pareto-allqs-ddsL}
		} &
		
		\subfloat[]{
			\includegraphics[width=0.3\textwidth]{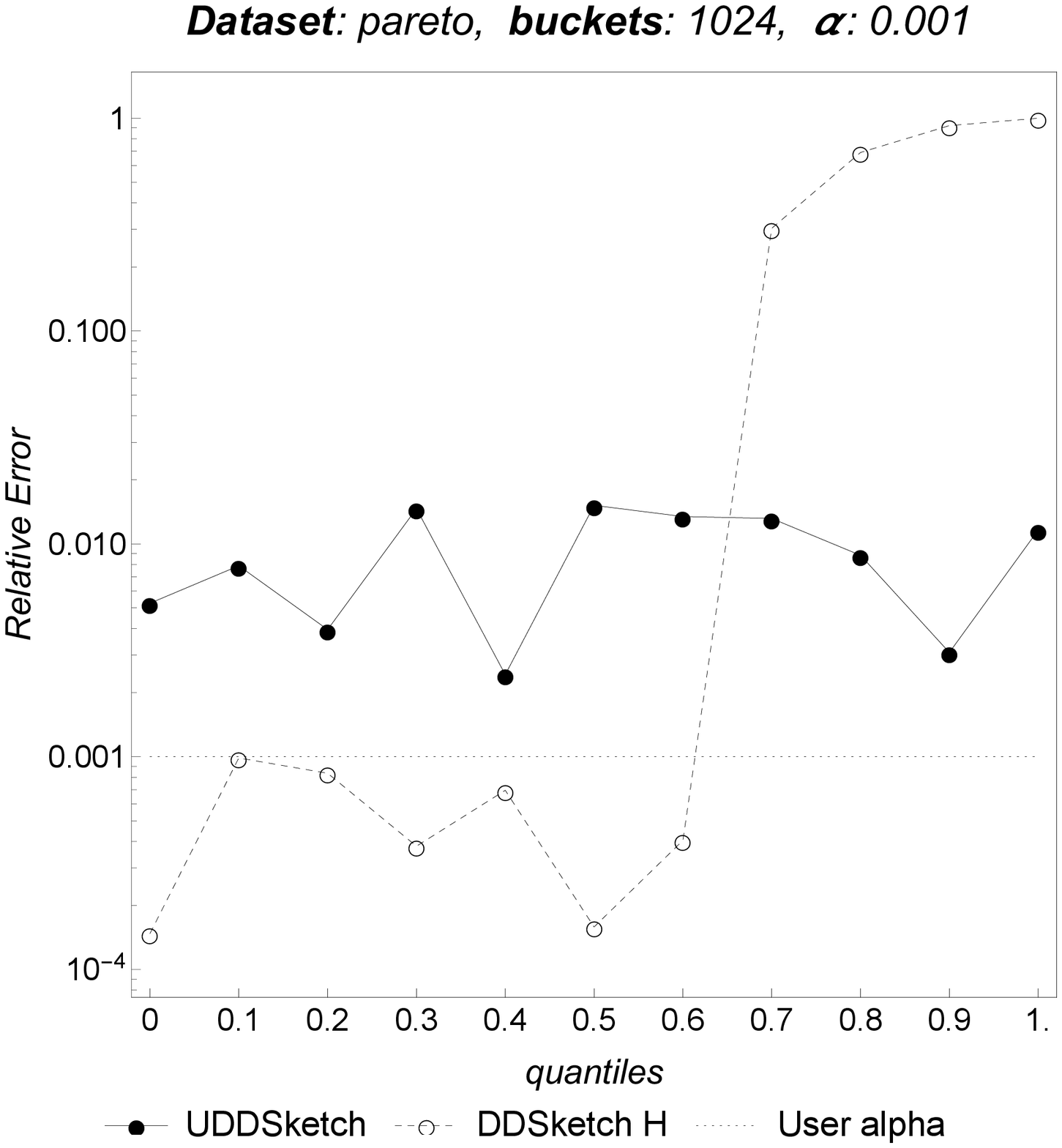}
			\label{pareto-allqs-ddsH}
		} &
		
		\subfloat[]{
			\includegraphics[width=0.3\textwidth]{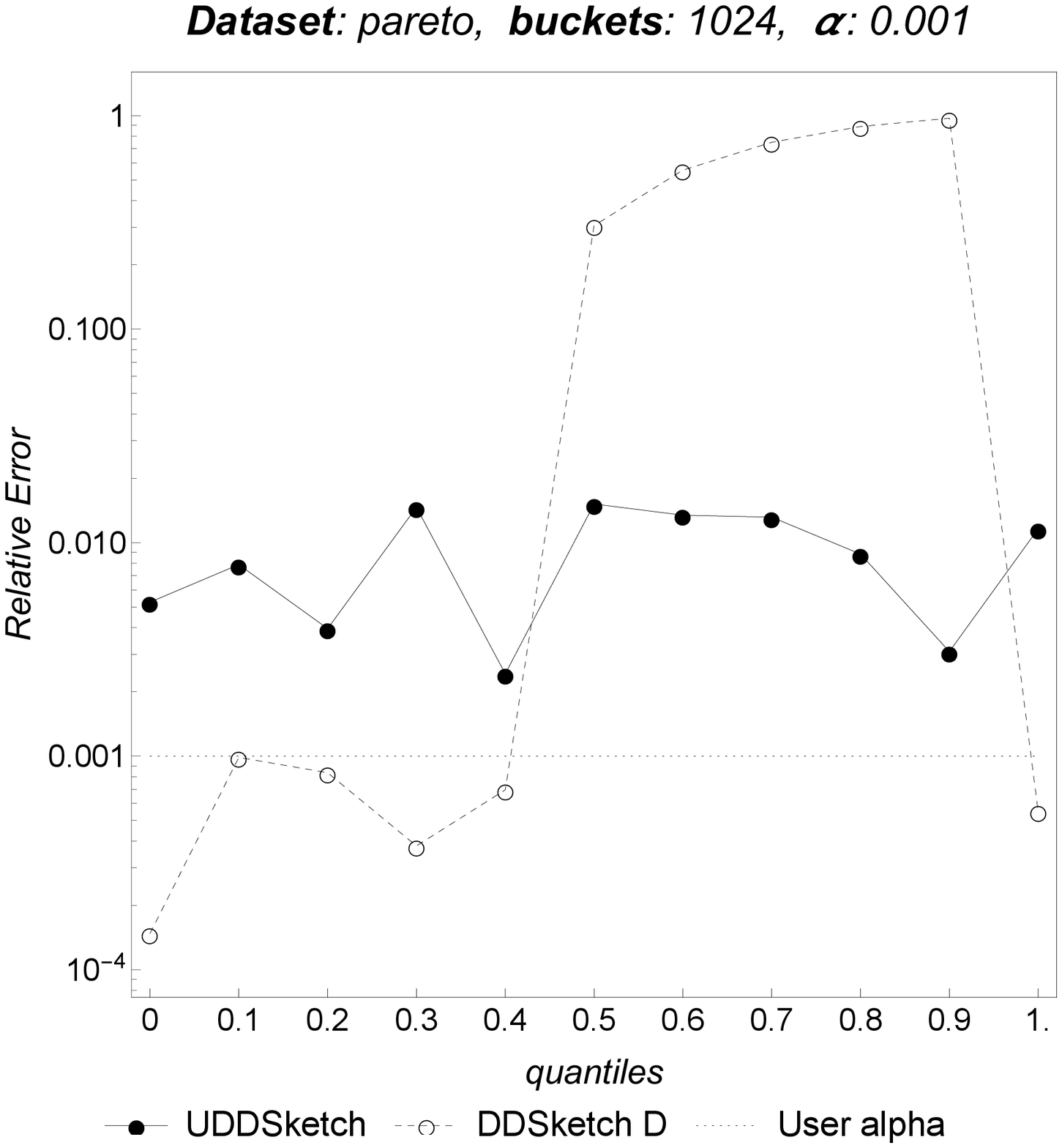}
			\label{pareto-allqs-ddsD}
		} \\
		
		\subfloat[]{
			\includegraphics[width=0.3\textwidth]{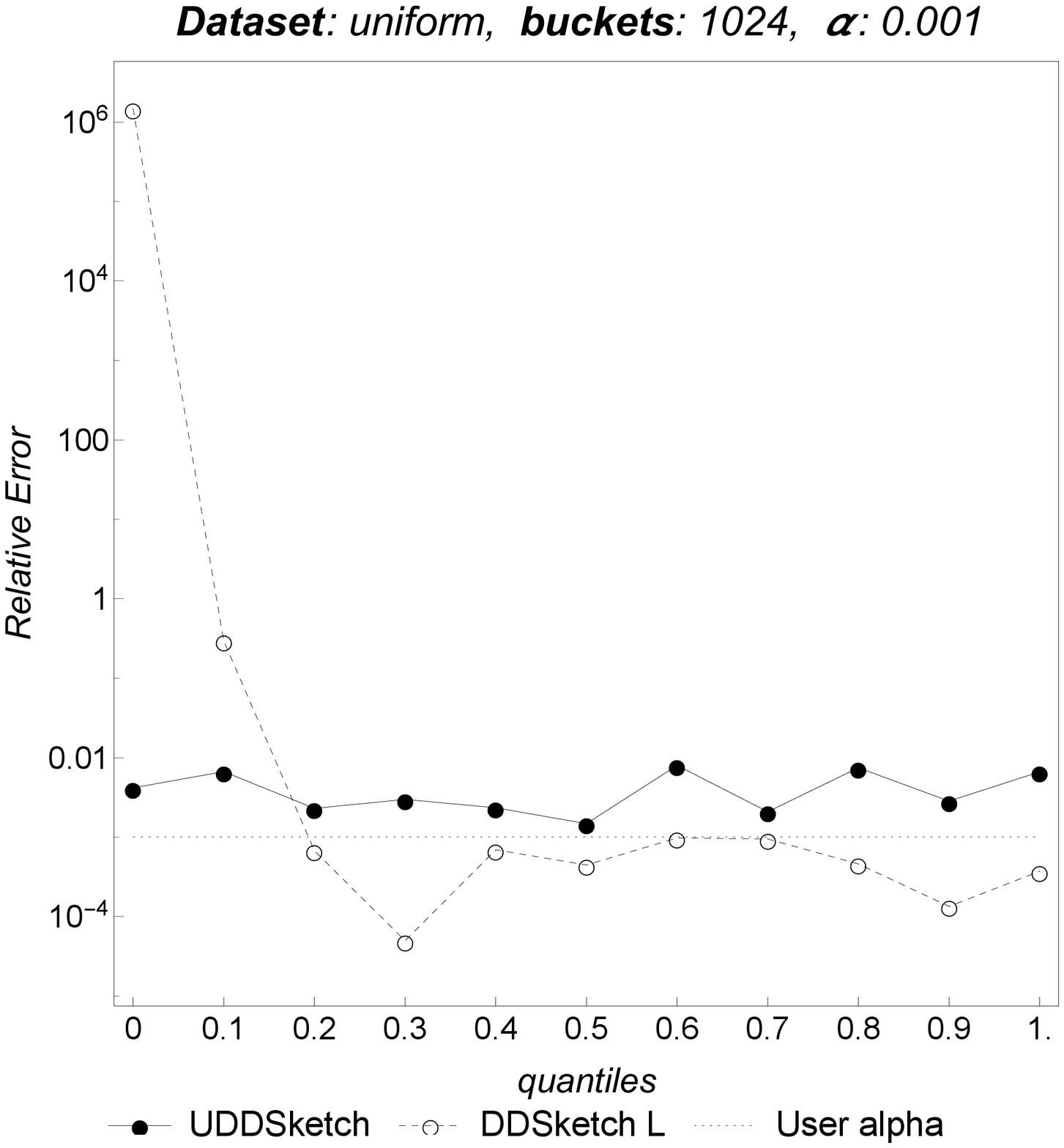}
			\label{uniform-allqs-ddsL}
		} &
		
		\subfloat[]{
			\includegraphics[width=0.3\textwidth]{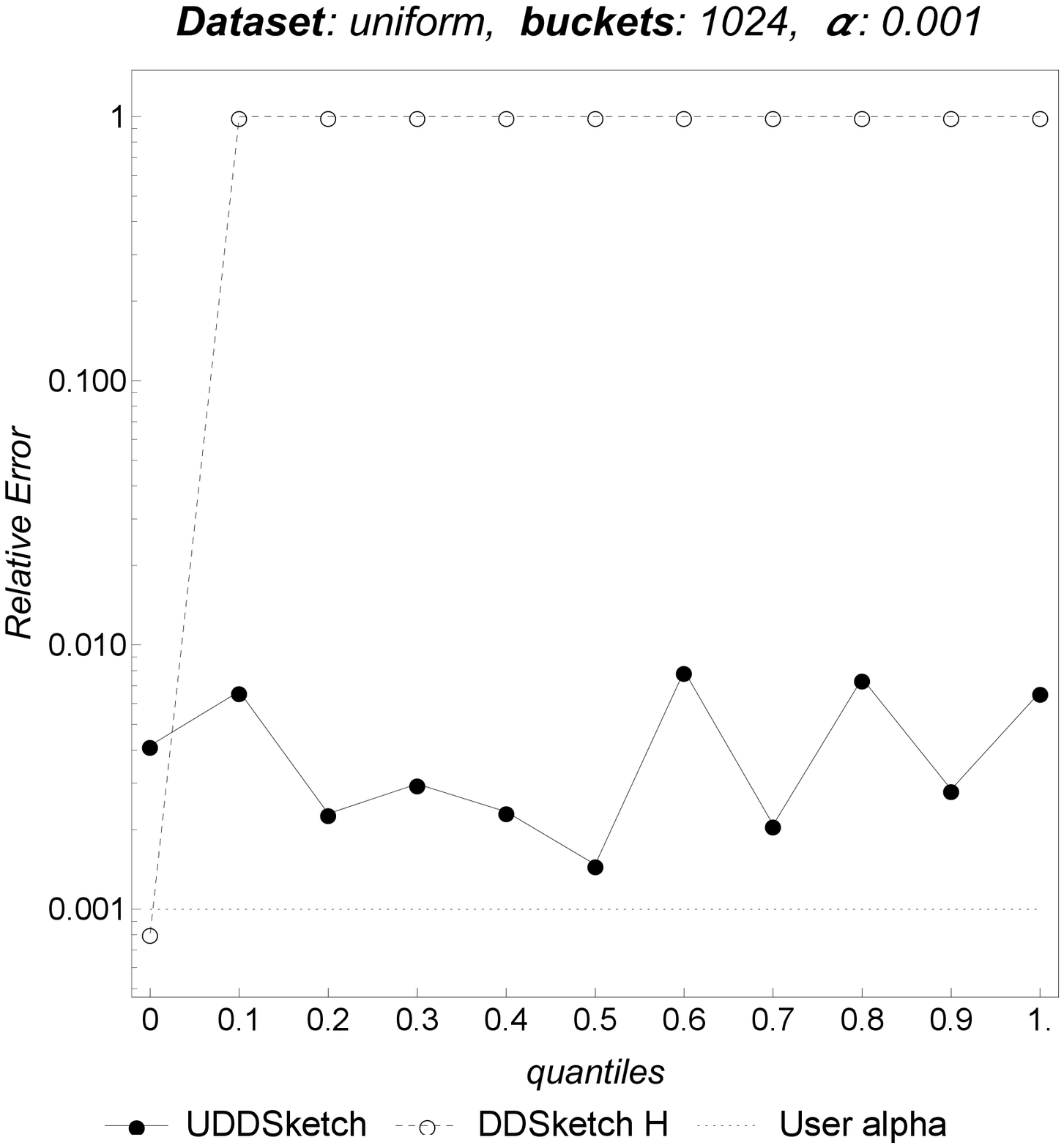}
			\label{uniform-allqs-ddsH}
		} &
		
		\subfloat[]{
			\includegraphics[width=0.3\textwidth]{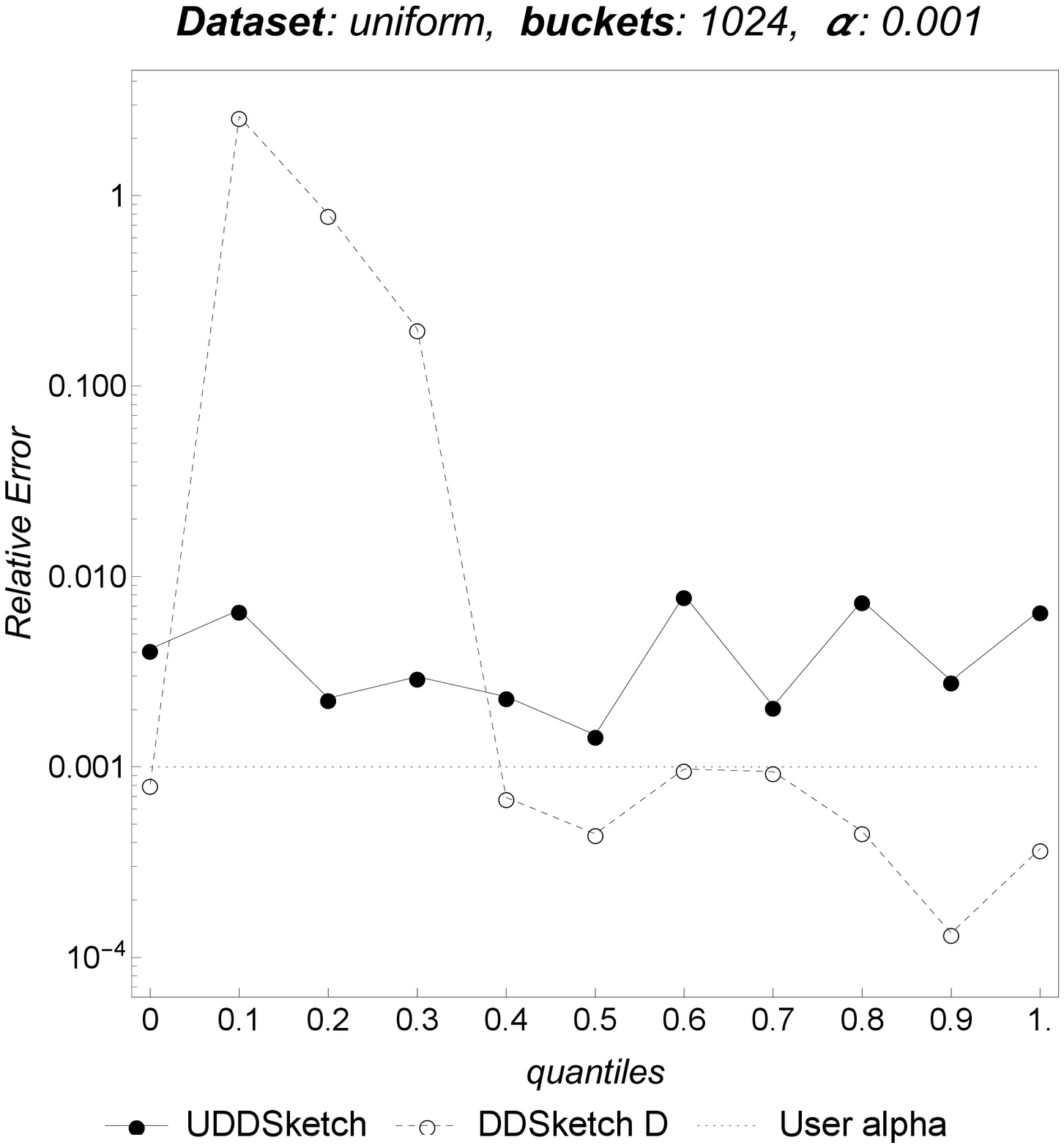}
			\label{uniform-allqs-ddsD}
		}  
	\end{tabular}
	
	\caption{Relative errors on quantiles $q_0, q_{0.1}, q_{0.2} \dots q_1$, varying the distribution and collapsing strategy.} 
	\label{allqs-plots2}
\end{figure*}

\begin{figure*}[h]
	\centering
	\begin{tabular}{ccc}		
		\subfloat[]{
			\includegraphics[width=0.3\textwidth]{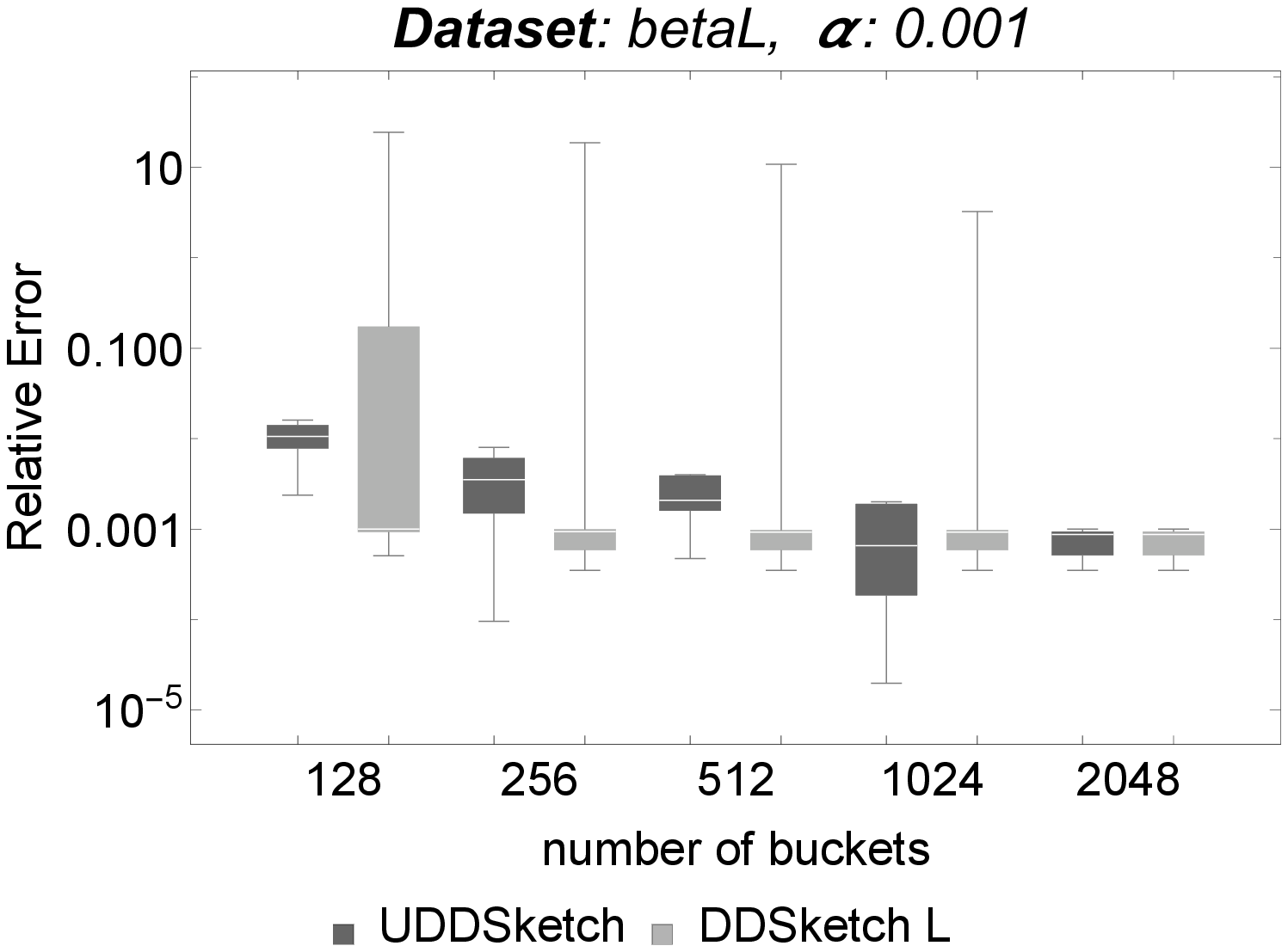}
			\label{betaL-boxplot-ddsL}
		} &
		
		\subfloat[]{
			\includegraphics[width=0.3\textwidth]{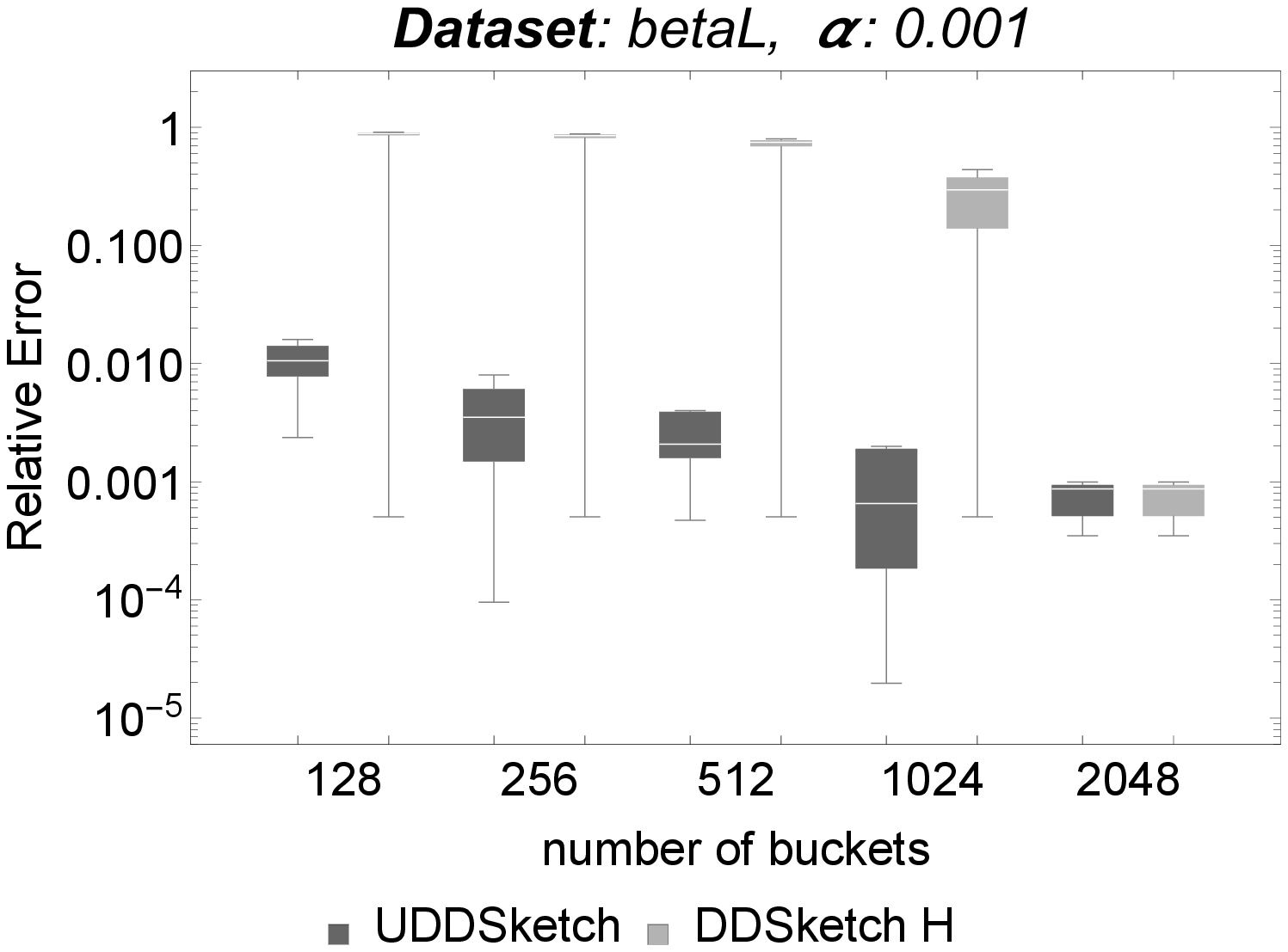}
			\label{betaL-boxplot-ddsH}
		} &
		
		\subfloat[]{
			\includegraphics[width=0.3\textwidth]{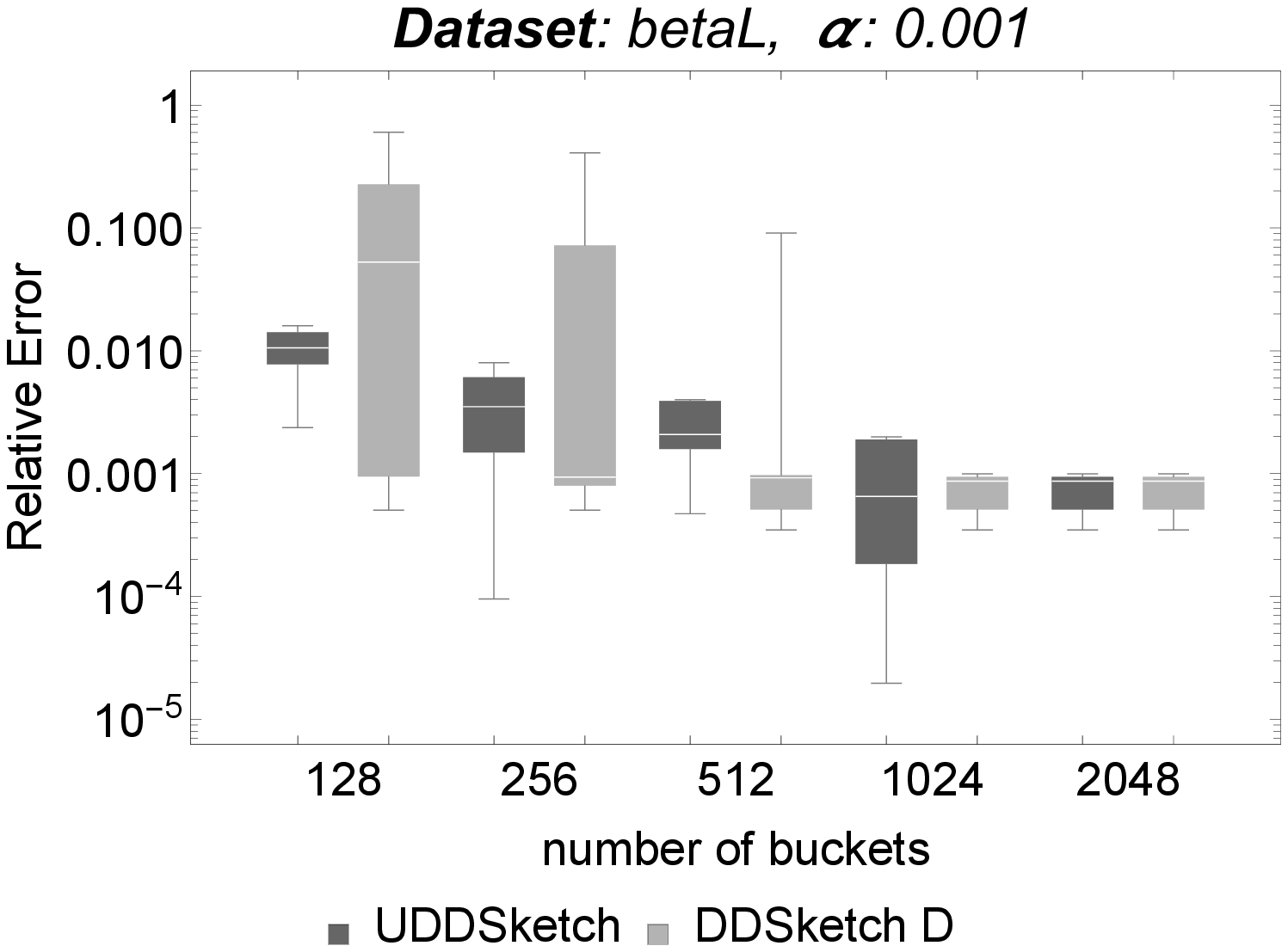}
			\label{betaL-boxplot-ddsD}
		} \\
		
		\subfloat[]{
			\includegraphics[width=0.3\textwidth]{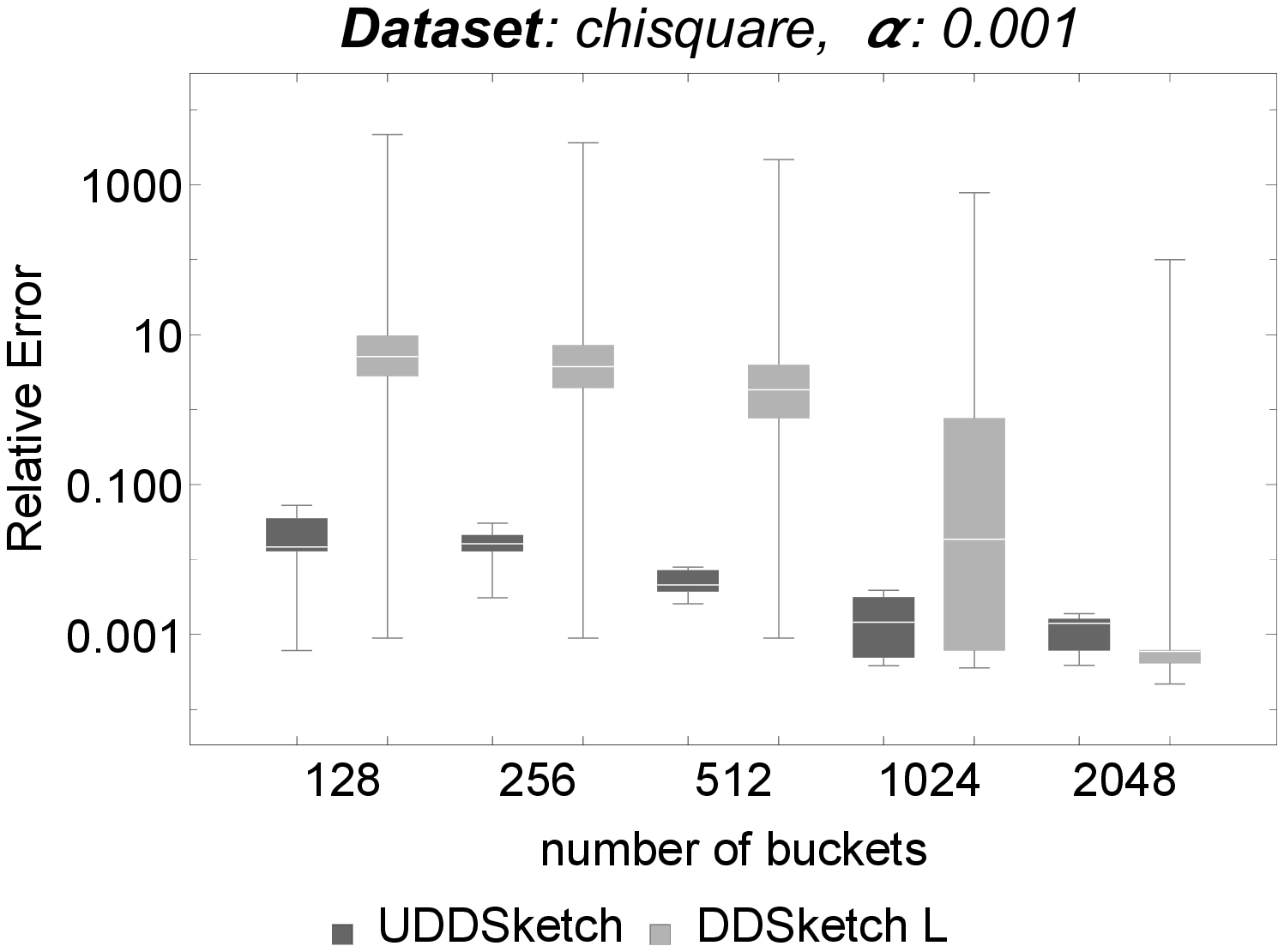}
			\label{chisquare-boxplot-ddsL}
		} &
		
		\subfloat[]{
			\includegraphics[width=0.3\textwidth]{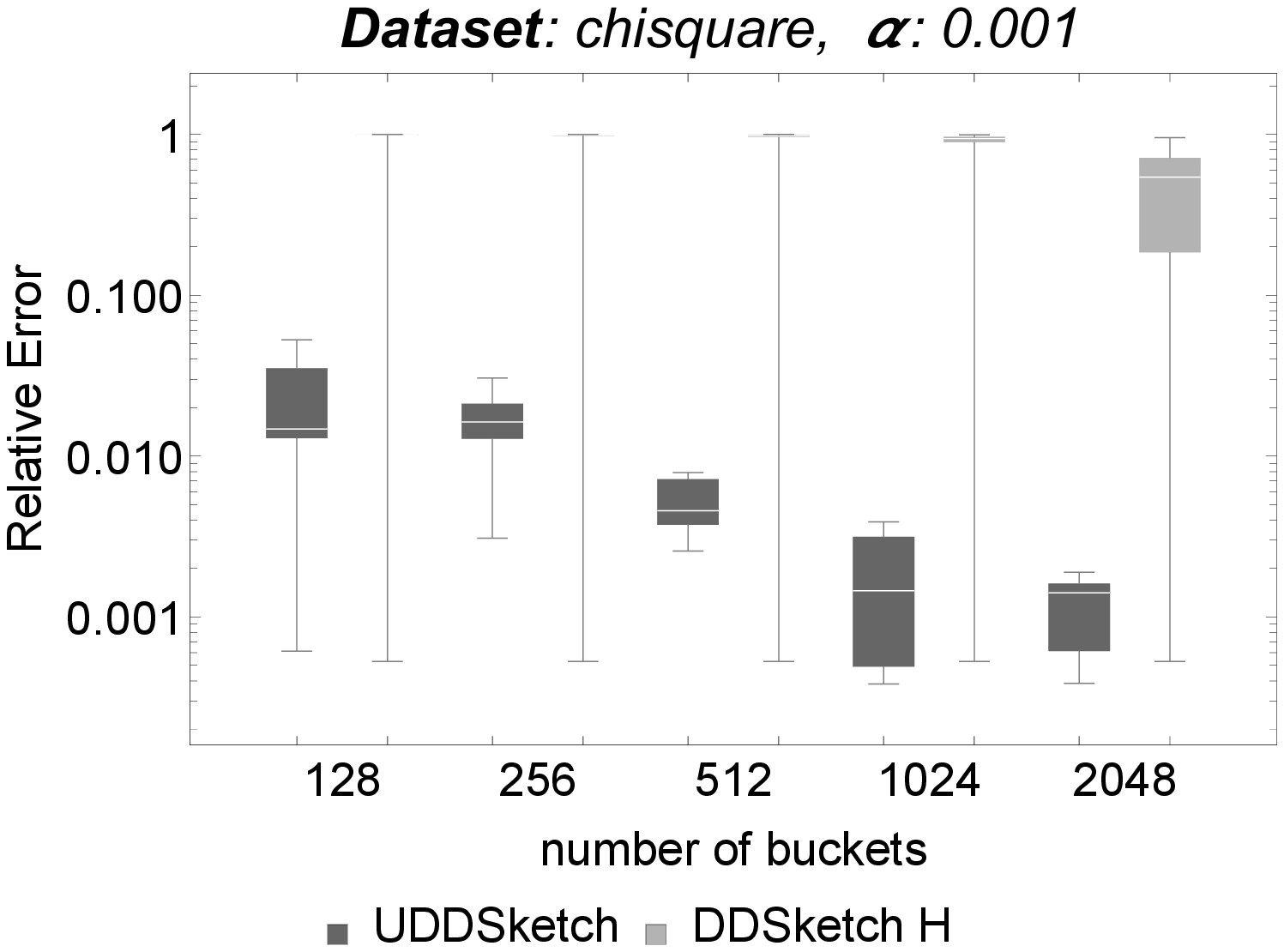}
			\label{chisquare-boxplot-ddsH}
		} &
		
		\subfloat[]{
			\includegraphics[width=0.3\textwidth]{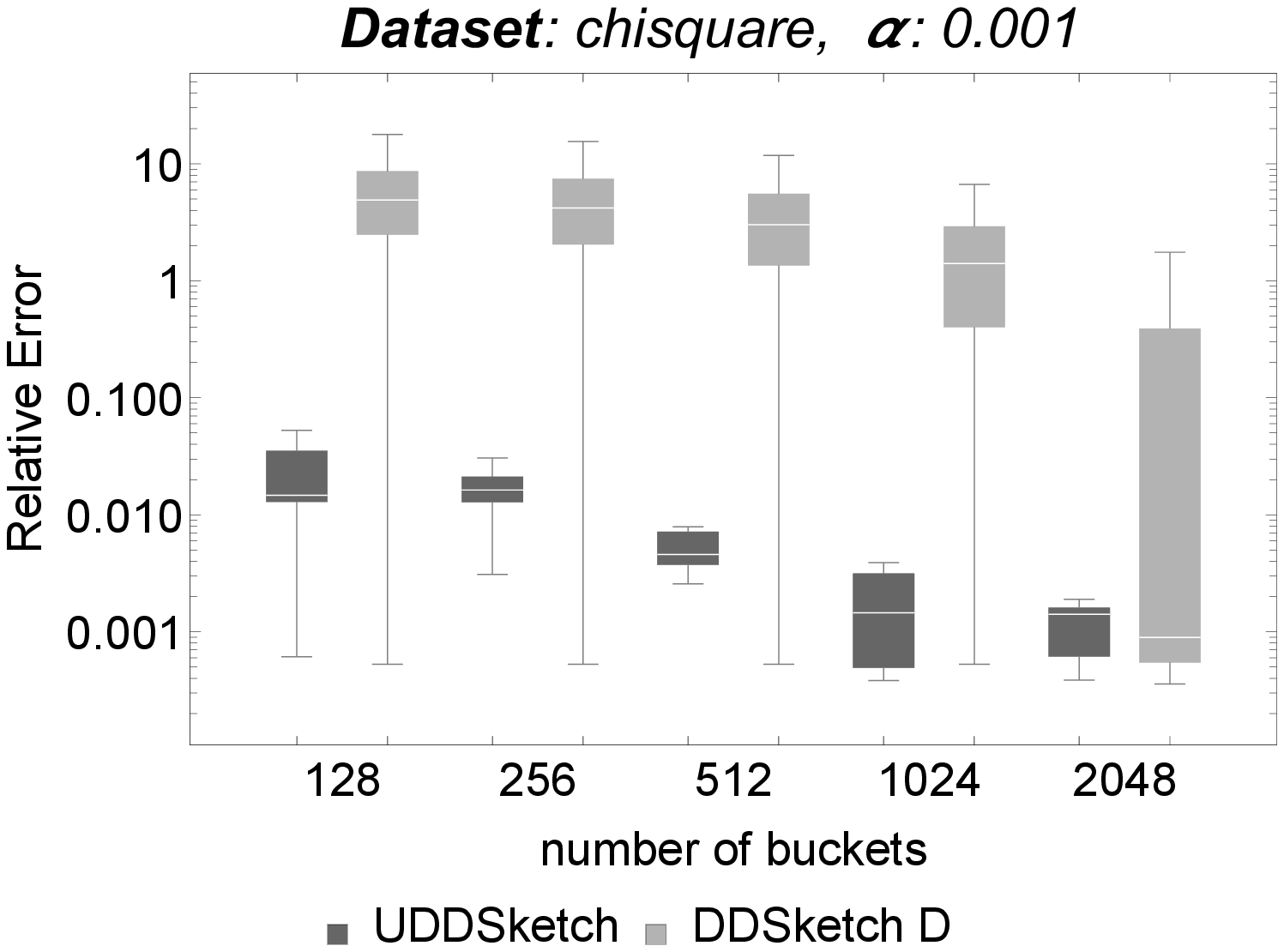}
			\label{chisquare-boxplot-ddsD}
		} \\
		
		\subfloat[]{
			\includegraphics[width=0.3\textwidth]{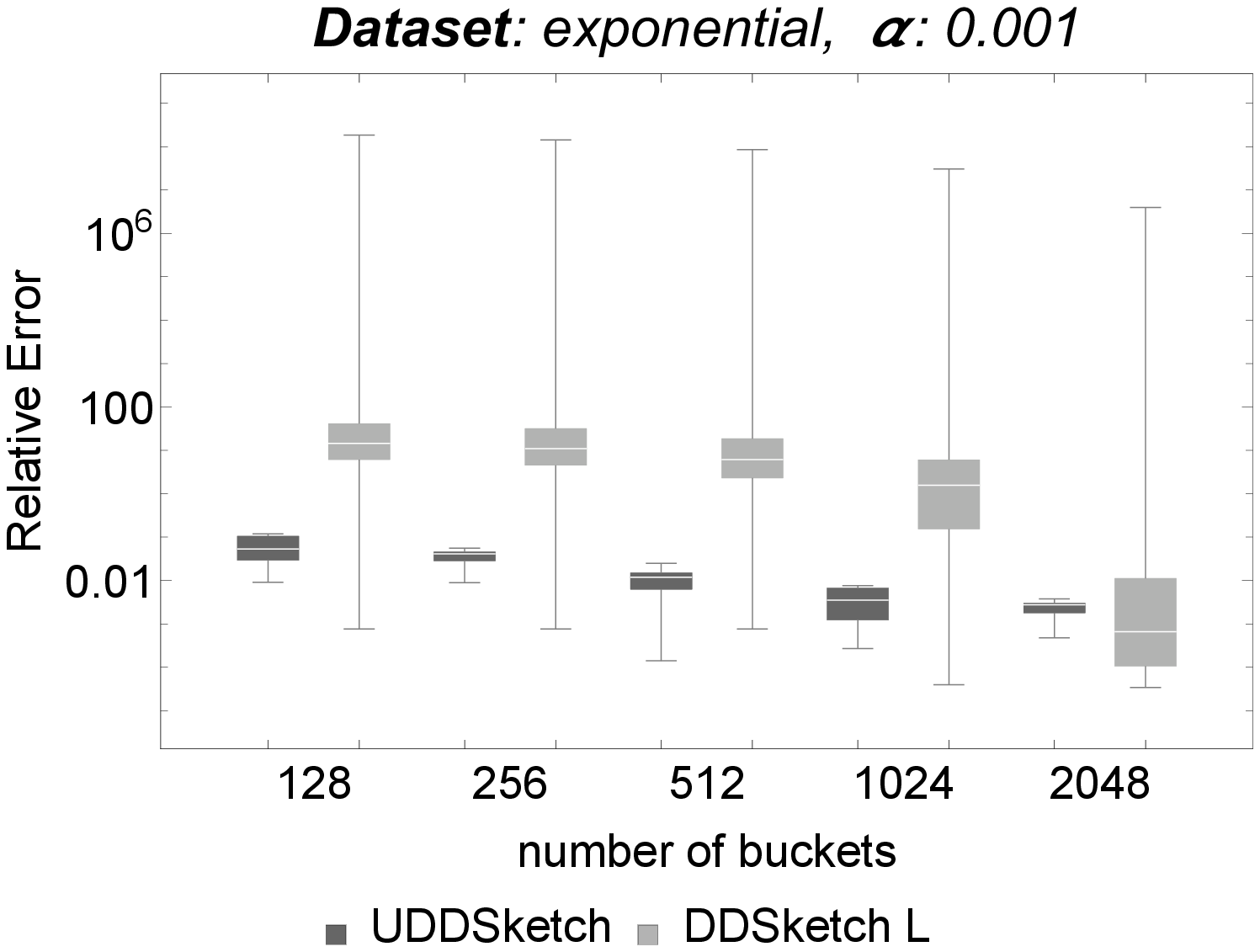}
			\label{exponential-boxplot-ddsL}
		} &
		
		\subfloat[]{
			\includegraphics[width=0.3\textwidth]{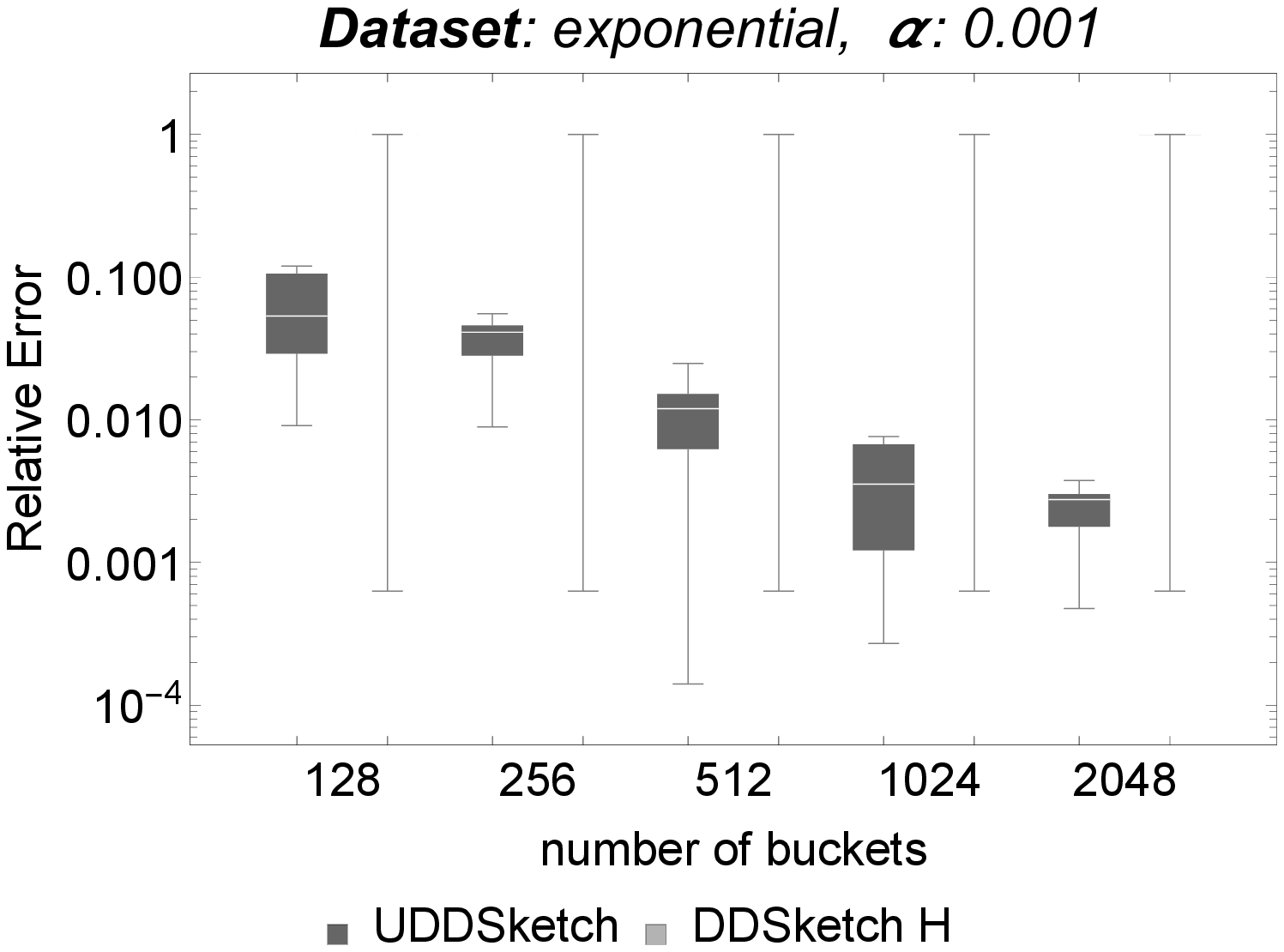}
			\label{exponential-boxplot-ddsH}
		} &
		
		\subfloat[]{
			\includegraphics[width=0.3\textwidth]{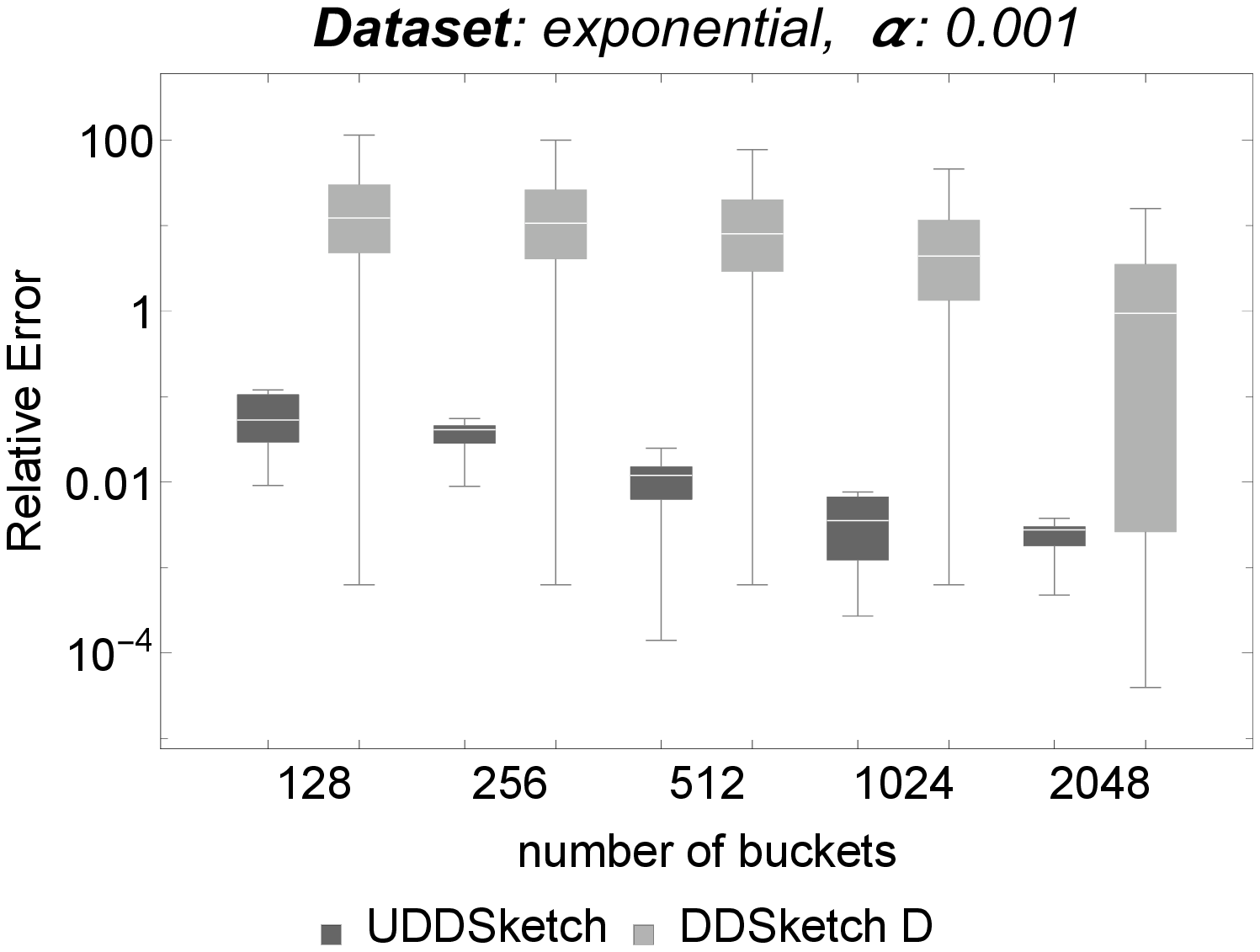}
			\label{exponential-boxplot-ddsD}
		} \\
	    
	    \subfloat[]{
	    	\includegraphics[width=0.3\textwidth]{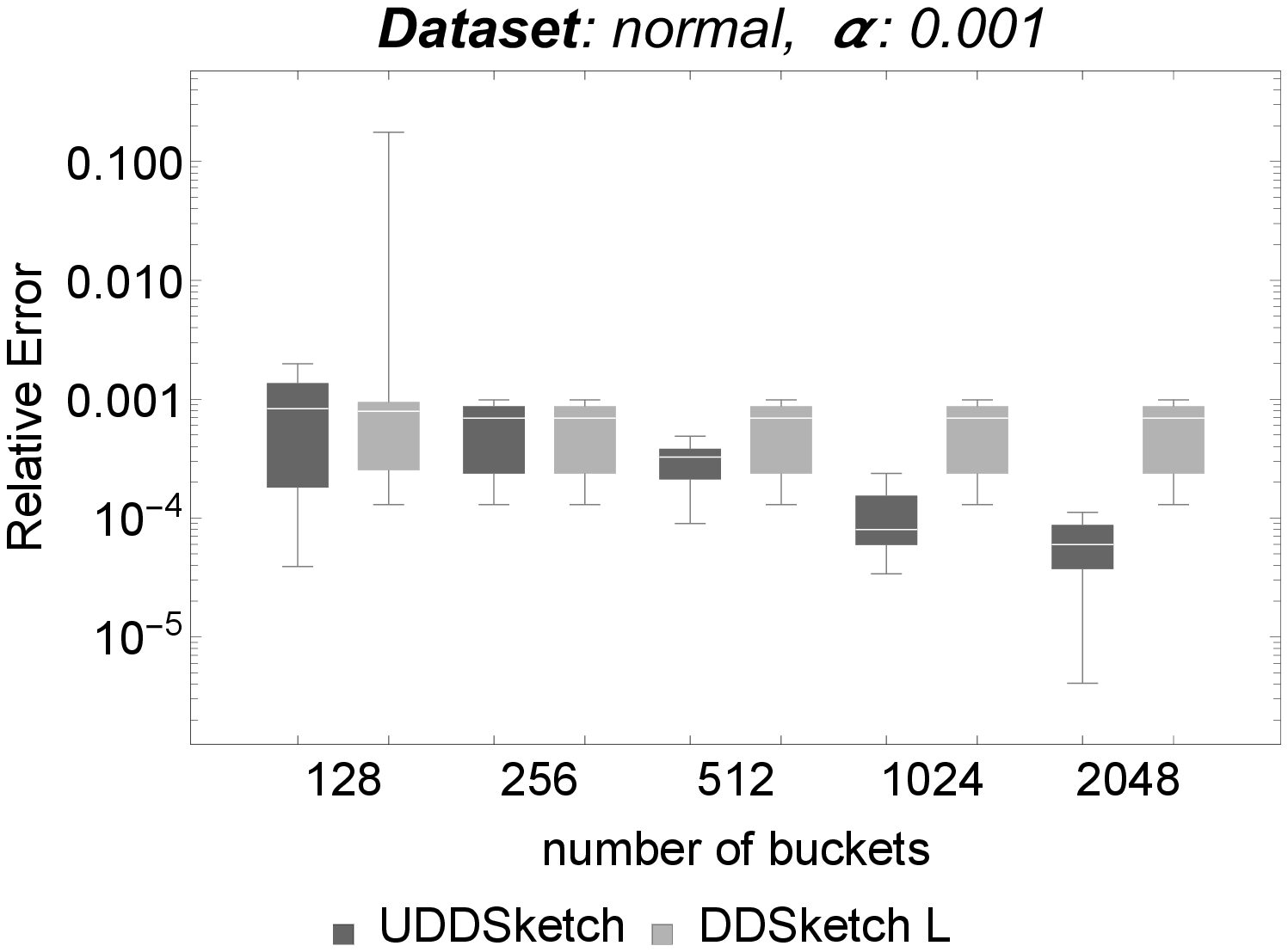}
	    	\label{normal-boxplot-ddsL}
	    } &
	    
	    \subfloat[]{
	    	\includegraphics[width=0.3\textwidth]{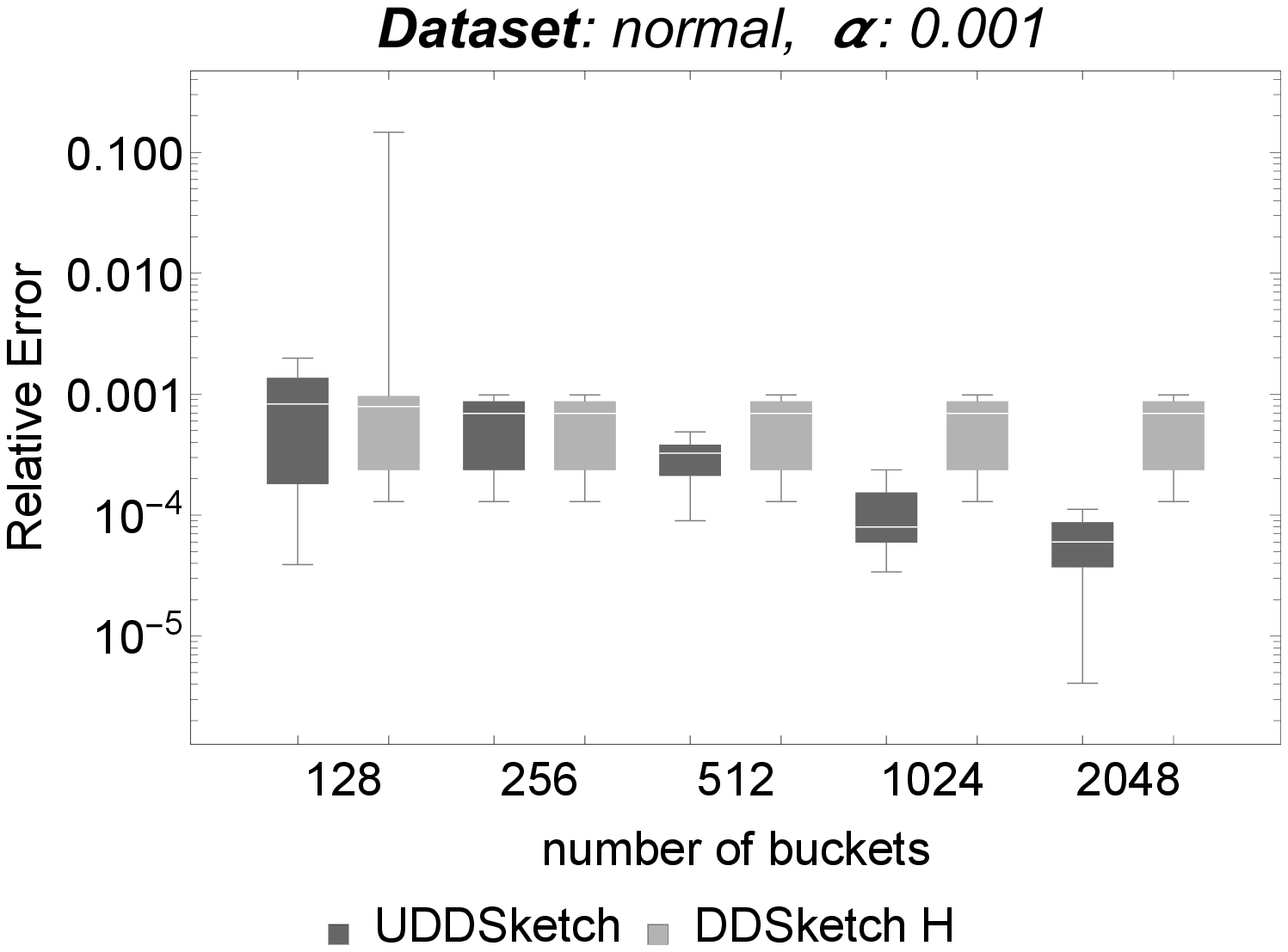}
	    	\label{normal-boxplot-ddsH}
	    } &
	    
	    \subfloat[]{
	    	\includegraphics[width=0.3\textwidth]{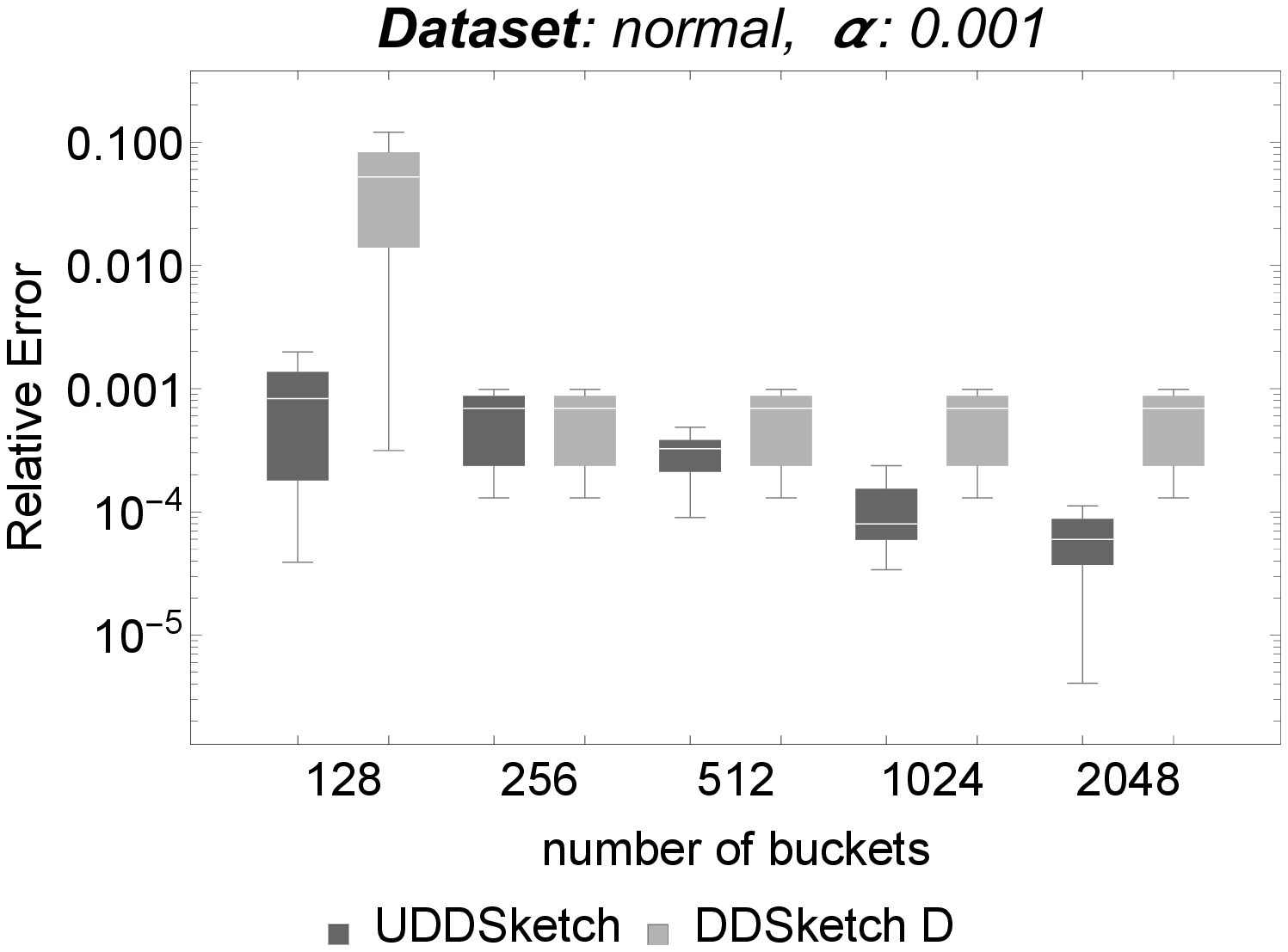}
	    	\label{normal-boxplot-ddsD}
	    } 
	\end{tabular}
	
	\caption{Relative errors on quantiles (boxplots), varying the number of buckets.} 
	\label{boxplot-plots1}
\end{figure*}

\begin{figure*}[h]
	\centering
	\begin{tabular}{ccc}		
		\subfloat[]{
			\includegraphics[width=0.3\textwidth]{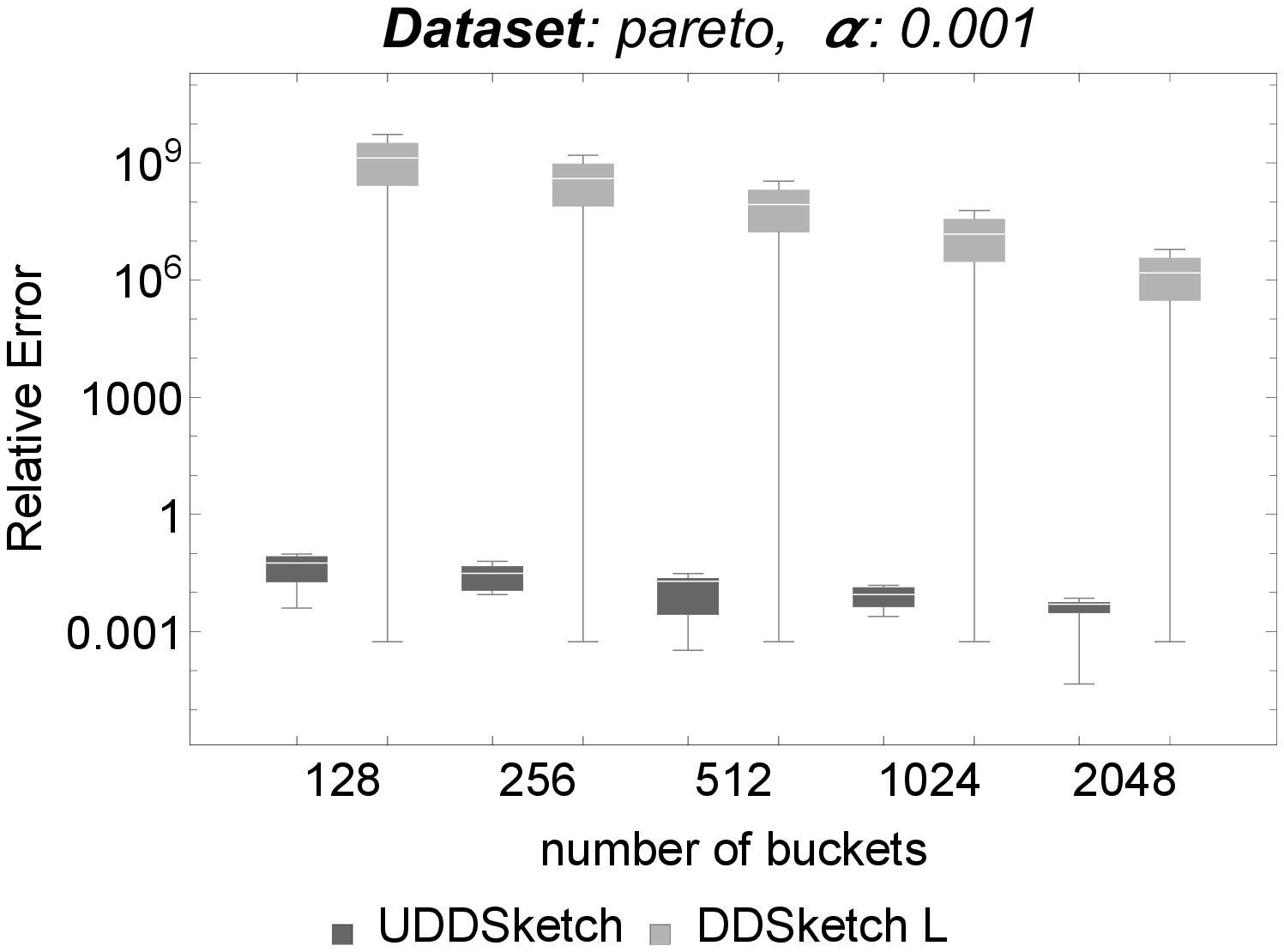}
			\label{pareto-boxplot-ddsL}
		} &
		
		\subfloat[]{
			\includegraphics[width=0.3\textwidth]{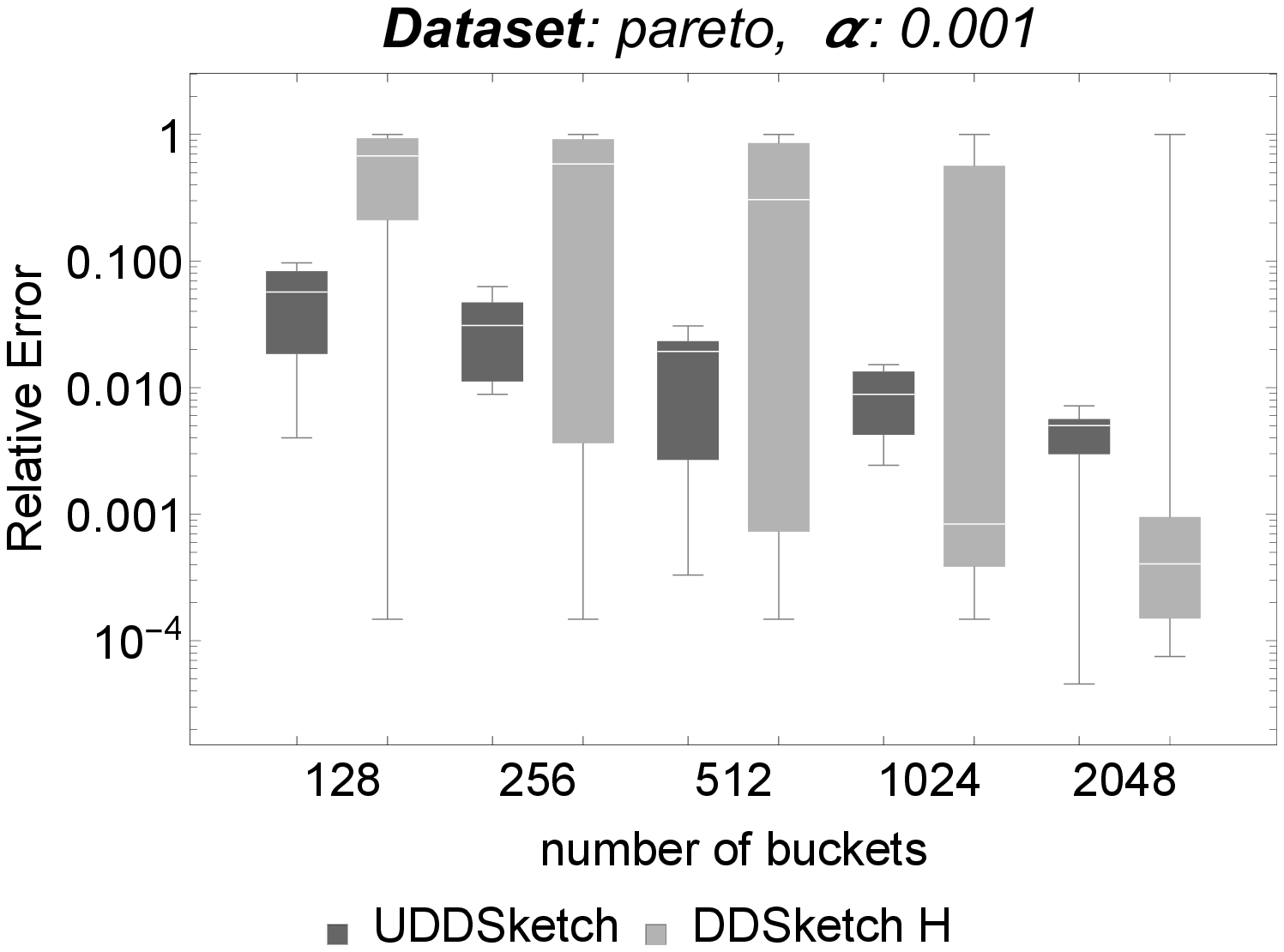}
			\label{pareto-boxplot-ddsH}
		} &
		
		\subfloat[]{
			\includegraphics[width=0.3\textwidth]{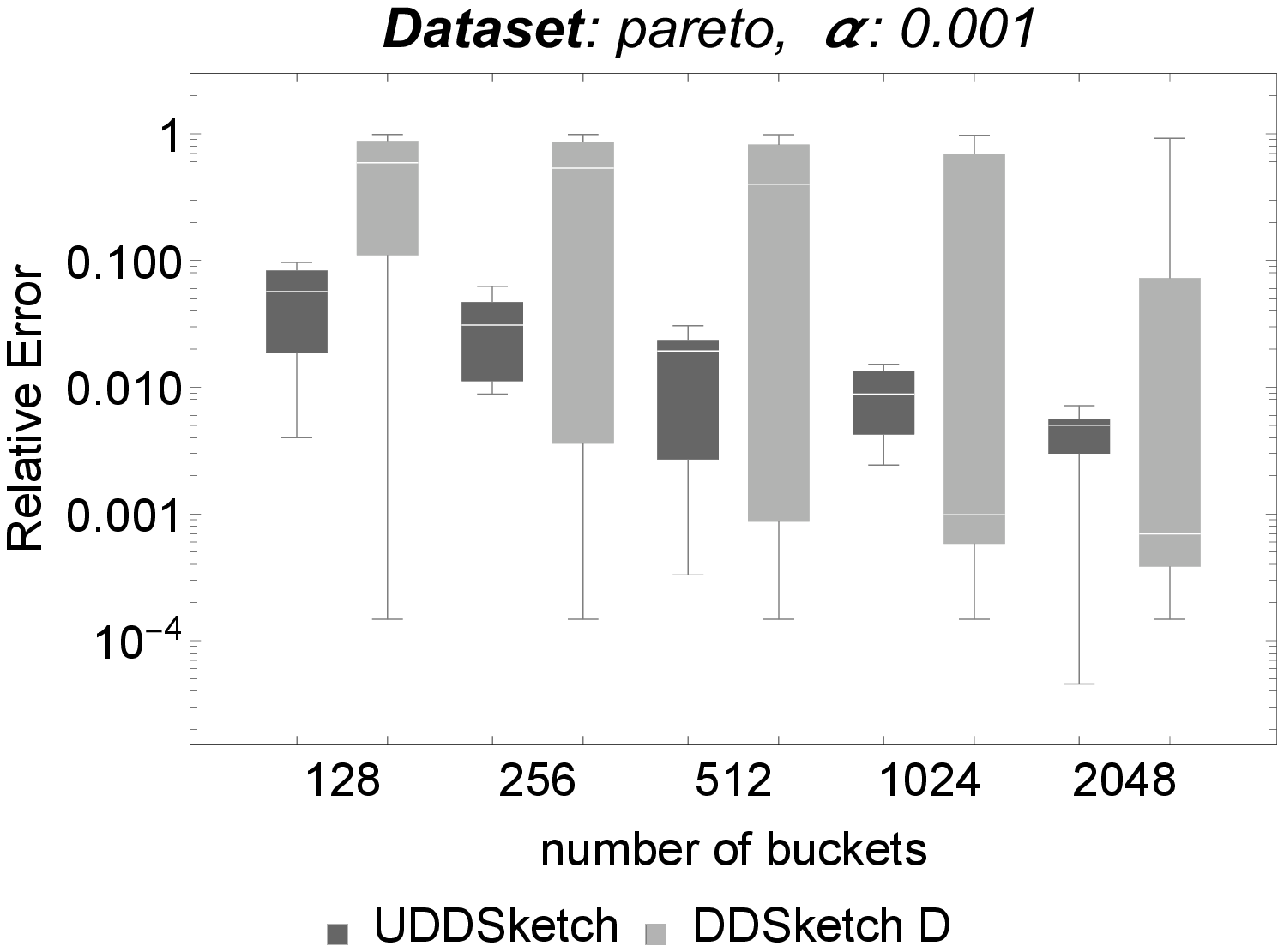}
			\label{pareto-boxplot-ddsD}
		} \\
		
		\subfloat[]{
			\includegraphics[width=0.3\textwidth]{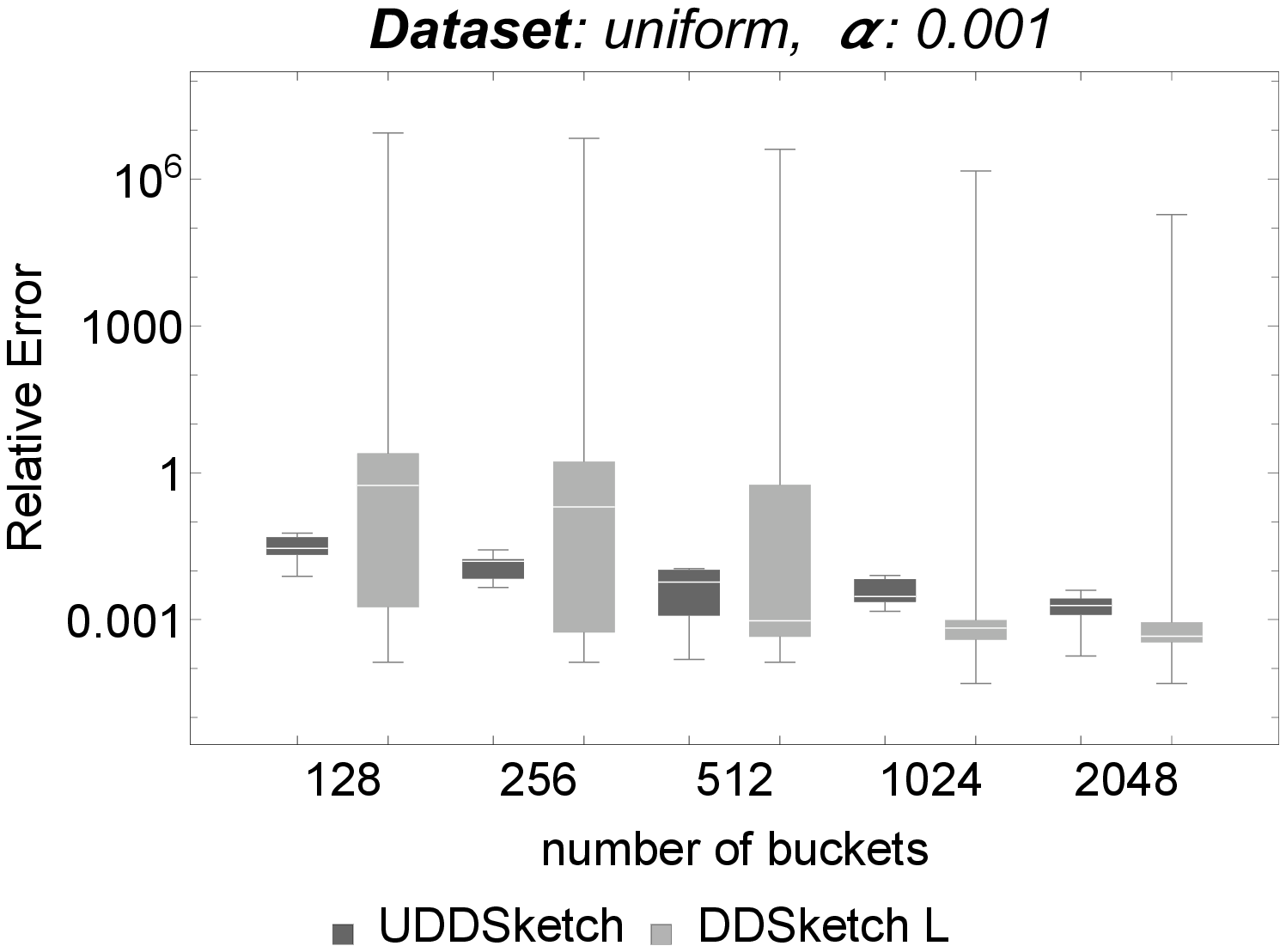}
			\label{uniform-boxplot-ddsL}
		} &
		
		\subfloat[]{
			\includegraphics[width=0.3\textwidth]{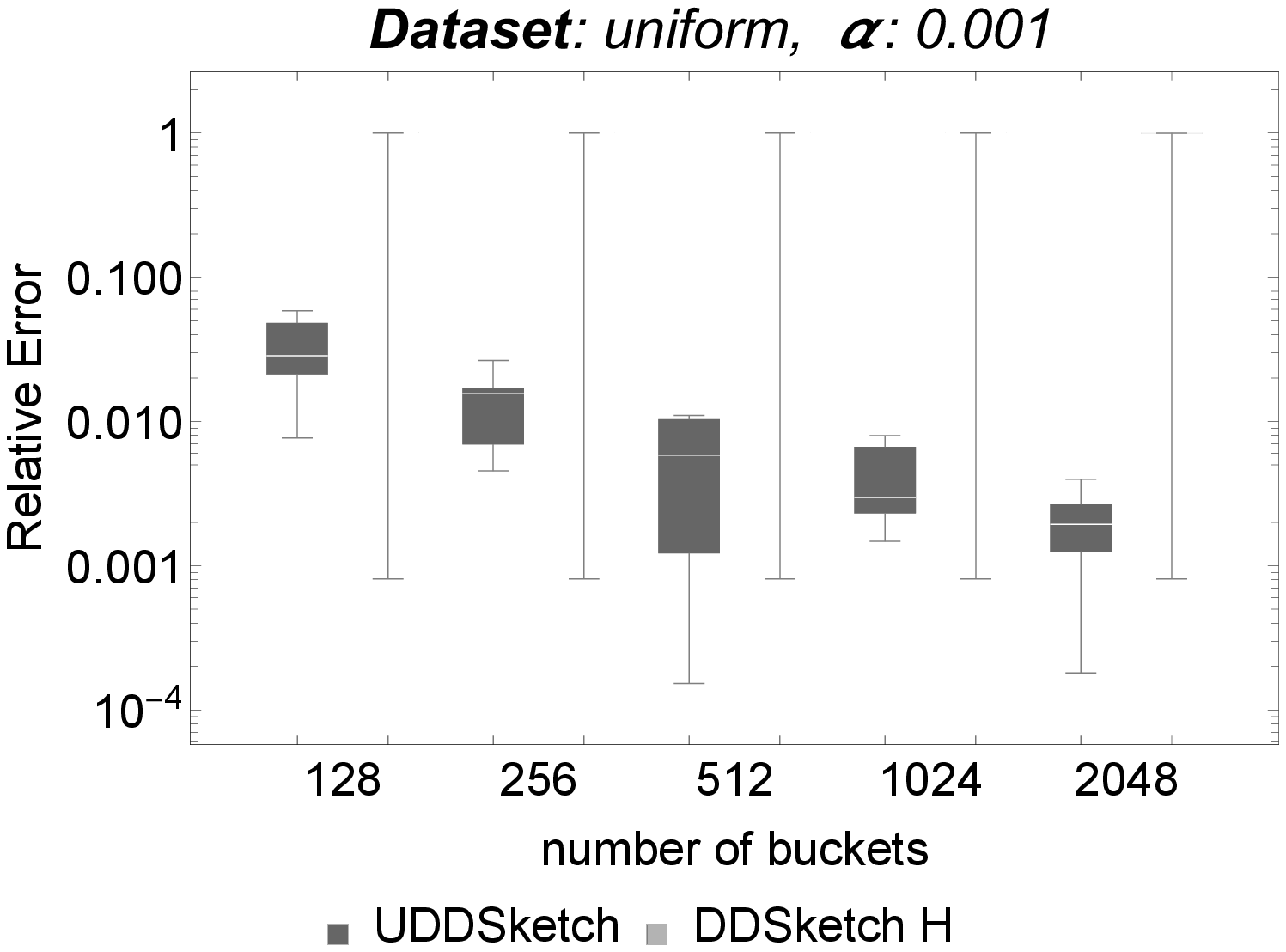}
			\label{uniform-boxplot-ddsH}
		} &
		
		\subfloat[]{
			\includegraphics[width=0.3\textwidth]{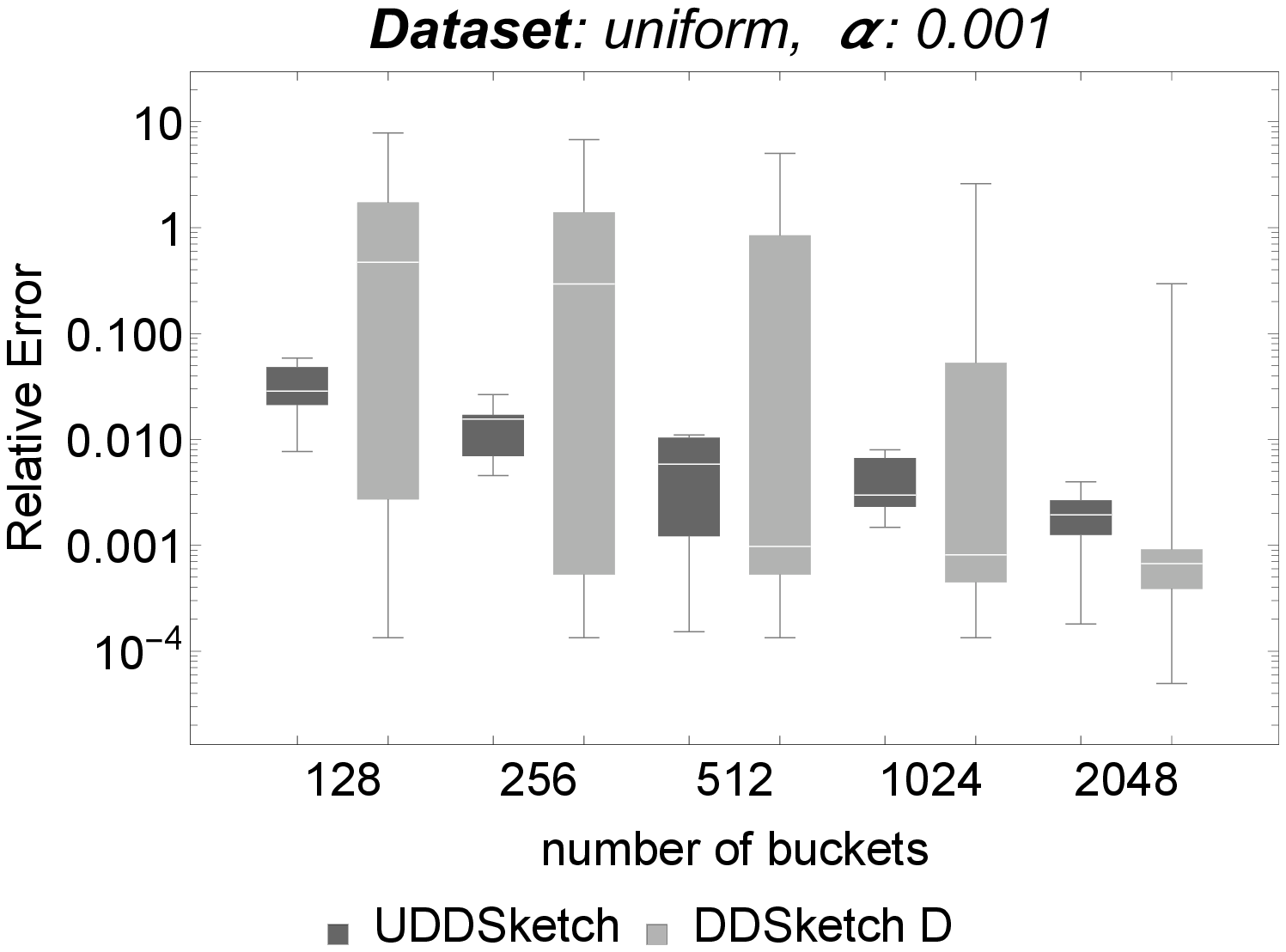}
			\label{uniform-boxplot-ddsD}
		} 
	\end{tabular}
	\caption{Relative errors on quantiles (boxplots), varying the number of buckets.} 
	\label{boxplot-plots2}
\end{figure*}

\begin{figure*}[h]
	\centering
	\begin{tabular}{ccc}		
		\subfloat[]{
			\includegraphics[width=0.3\textwidth]{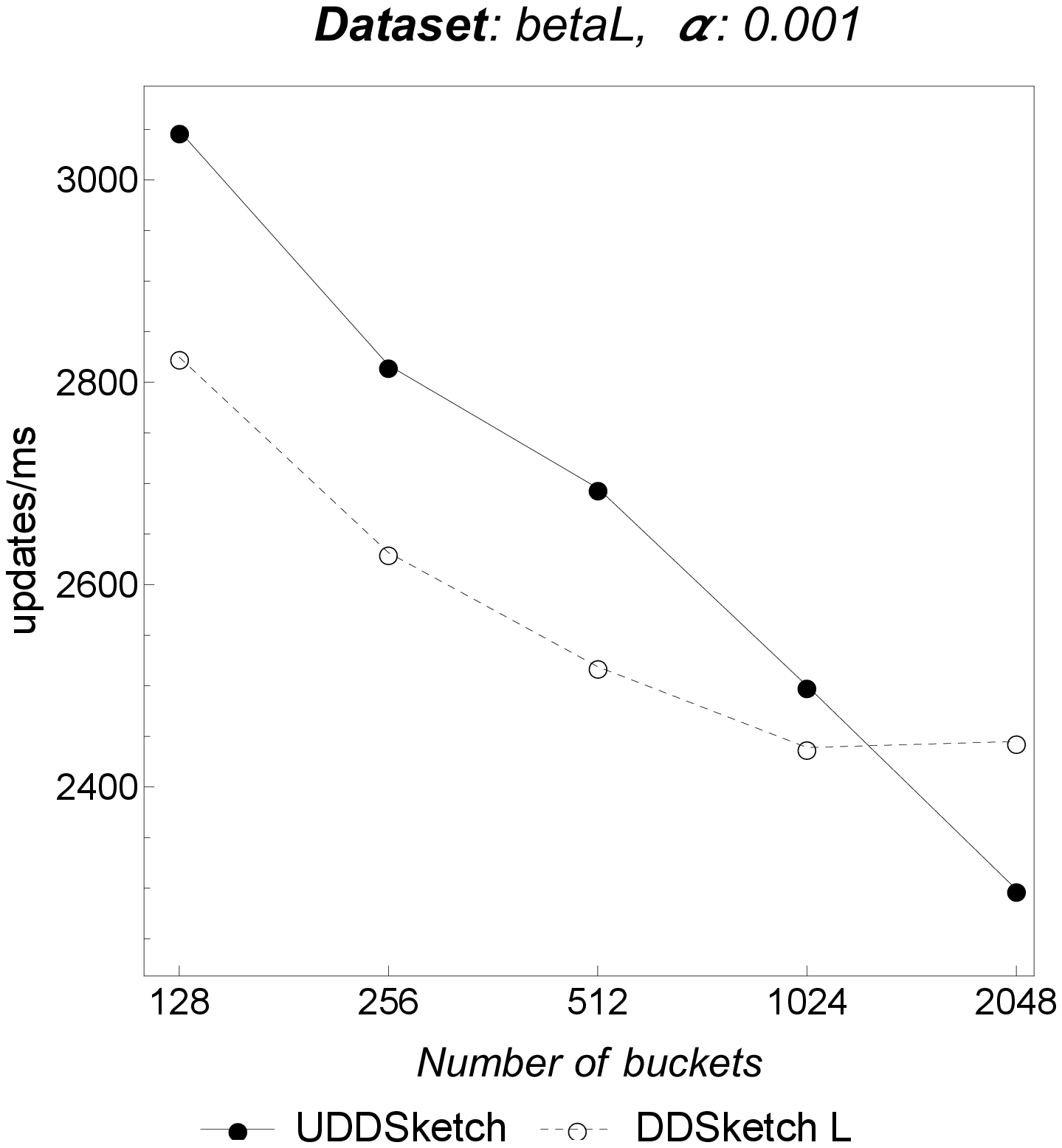}
			\label{betaL-updates_ms-ddsL}
		} &
		
		\subfloat[]{
			\includegraphics[width=0.3\textwidth]{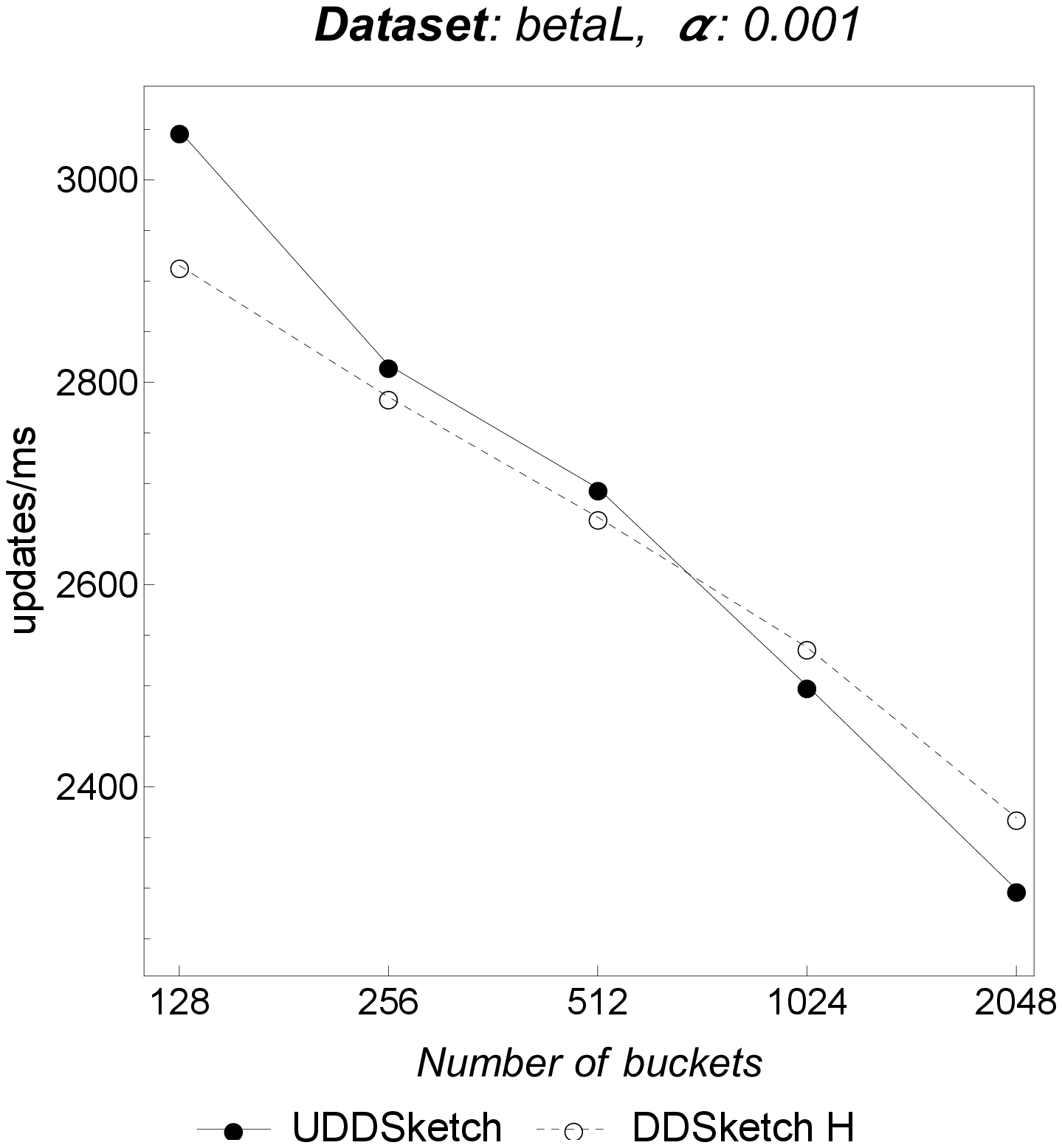}
			\label{betaL-updates_ms-ddsH}
		} &
		
		\subfloat[]{
			\includegraphics[width=0.3\textwidth]{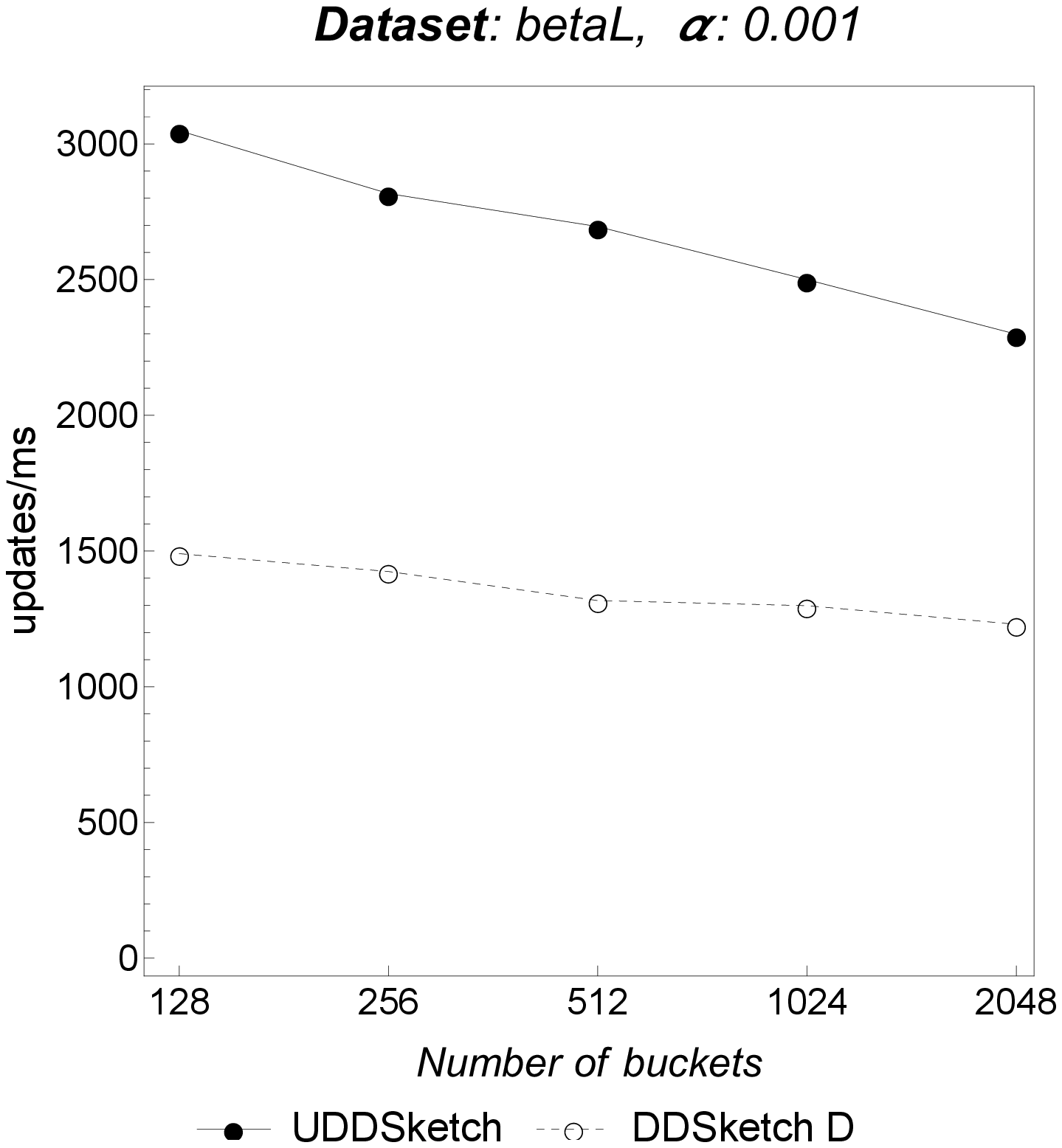}
			\label{betaL-updates_ms-ddsD}
		} \\
		
		\subfloat[]{
			\includegraphics[width=0.3\textwidth]{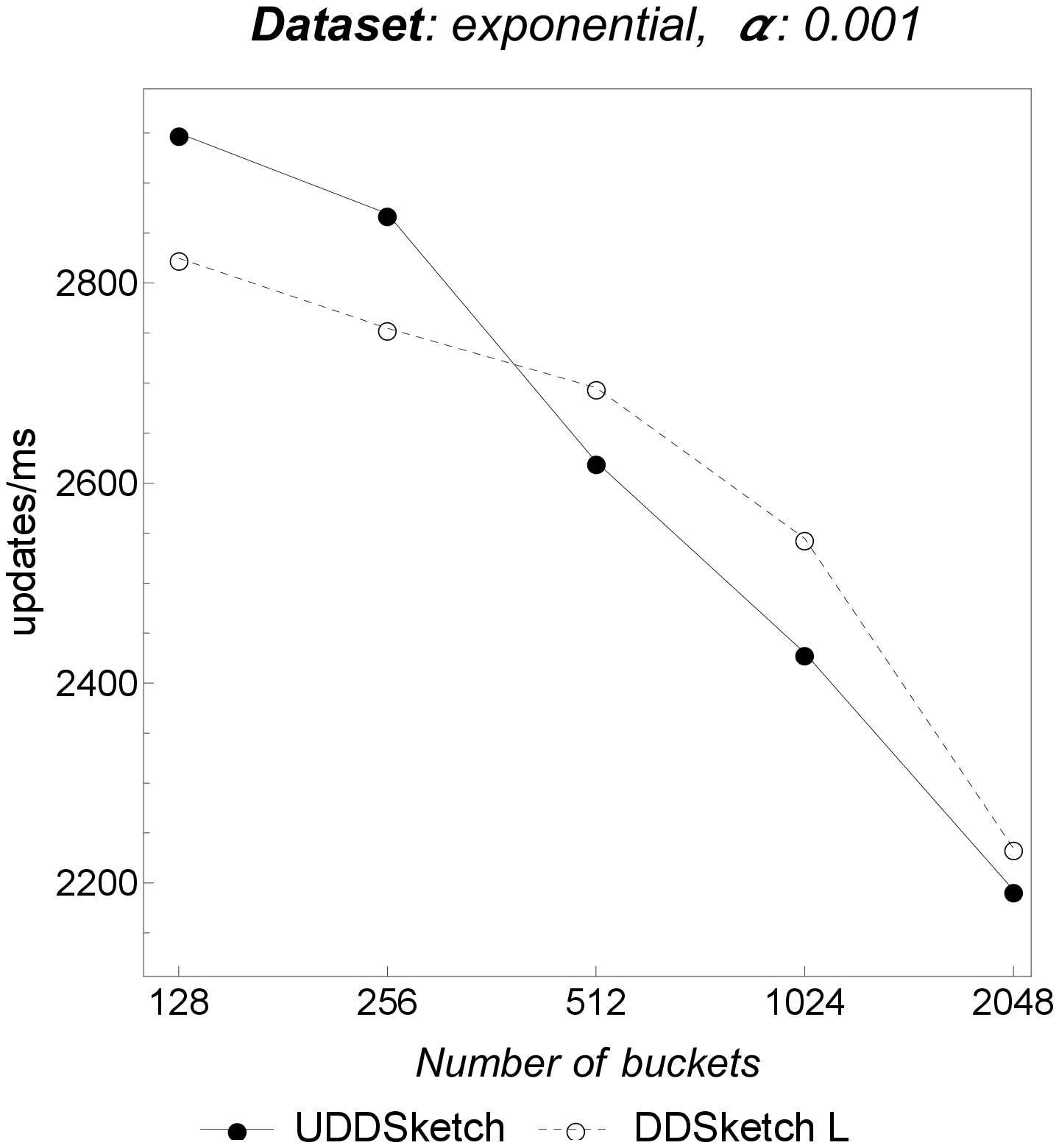}
			\label{exponential-updates_ms-ddsL}
		} &
		
		\subfloat[]{
			\includegraphics[width=0.3\textwidth]{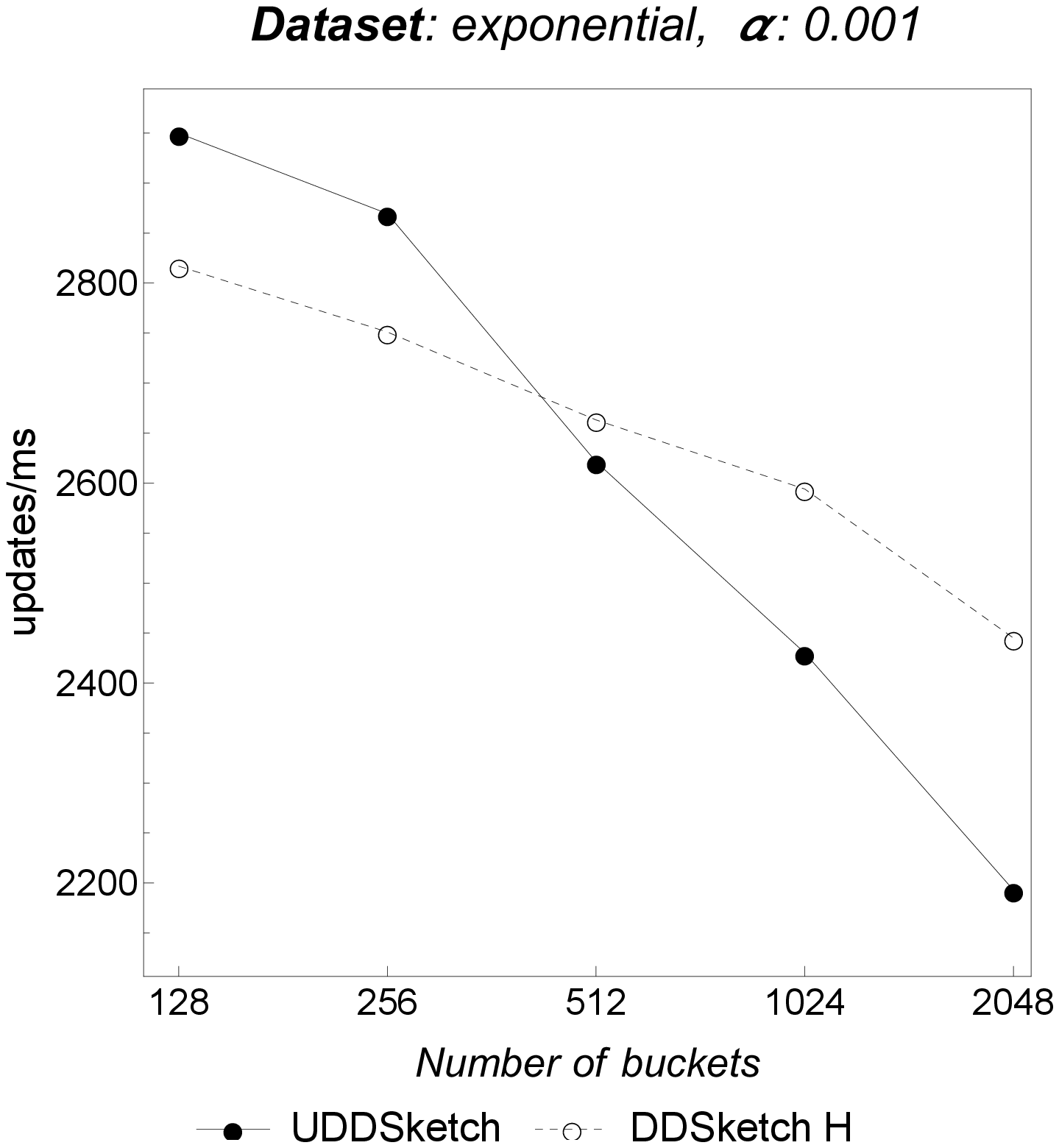}
			\label{exponential-updates_ms-ddsH}
		} &
		
		\subfloat[]{
			\includegraphics[width=0.3\textwidth]{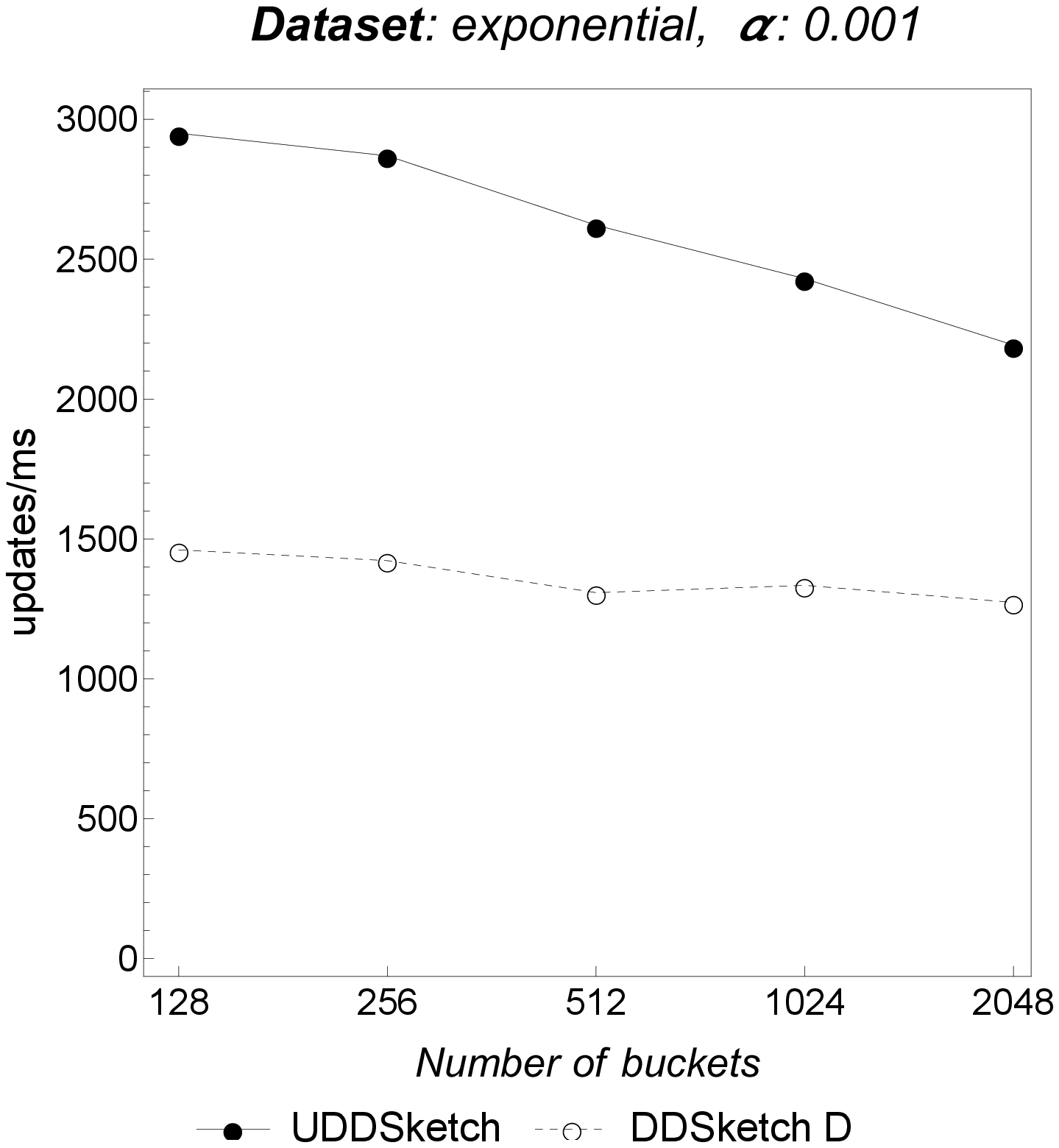}
			\label{exponential-updates_ms-ddsD}
		} \\
		
		\subfloat[]{
			\includegraphics[width=0.3\textwidth]{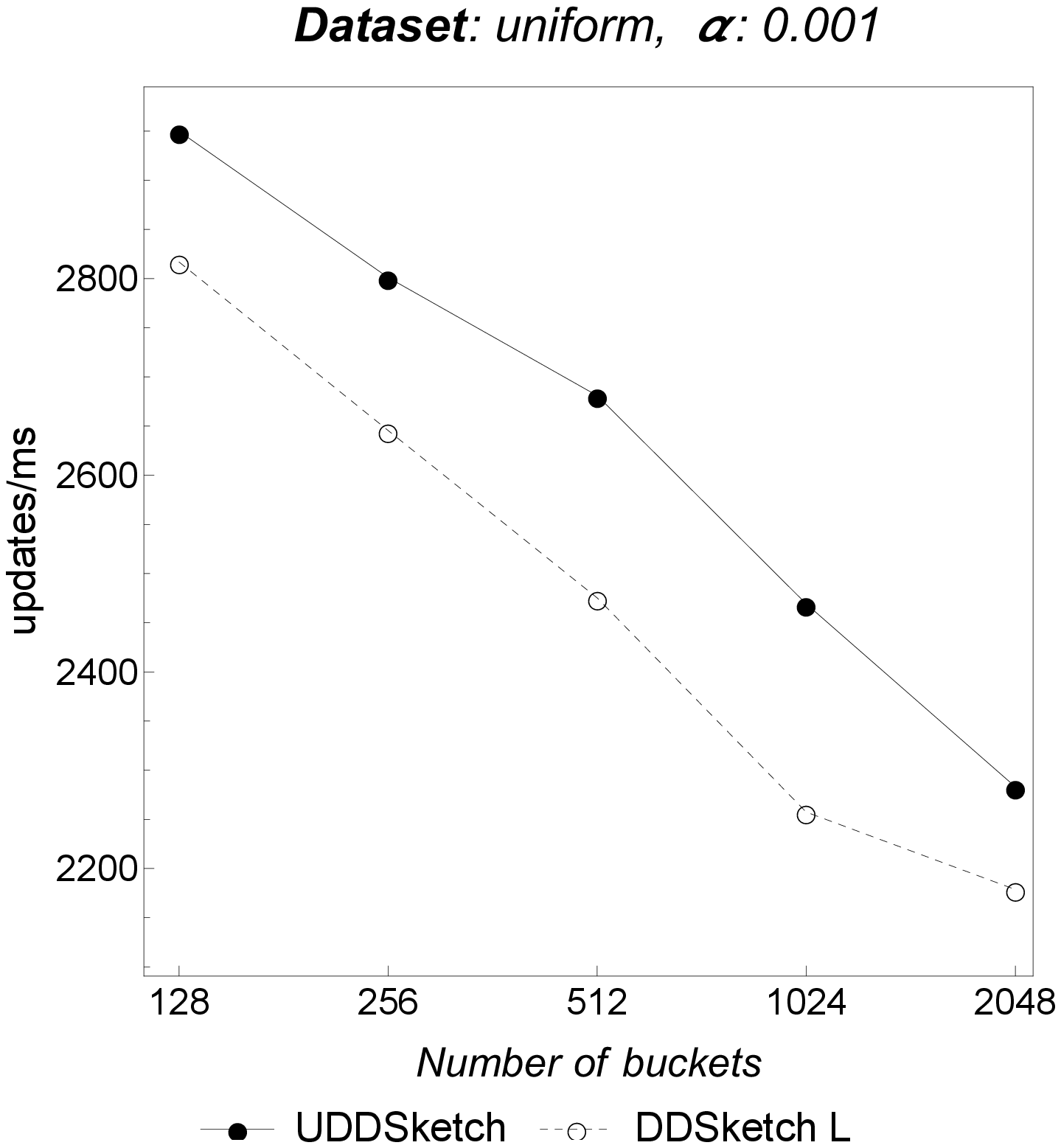}
			\label{uniform-updates_ms-ddsL}
		} &
		
		\subfloat[]{
			\includegraphics[width=0.3\textwidth]{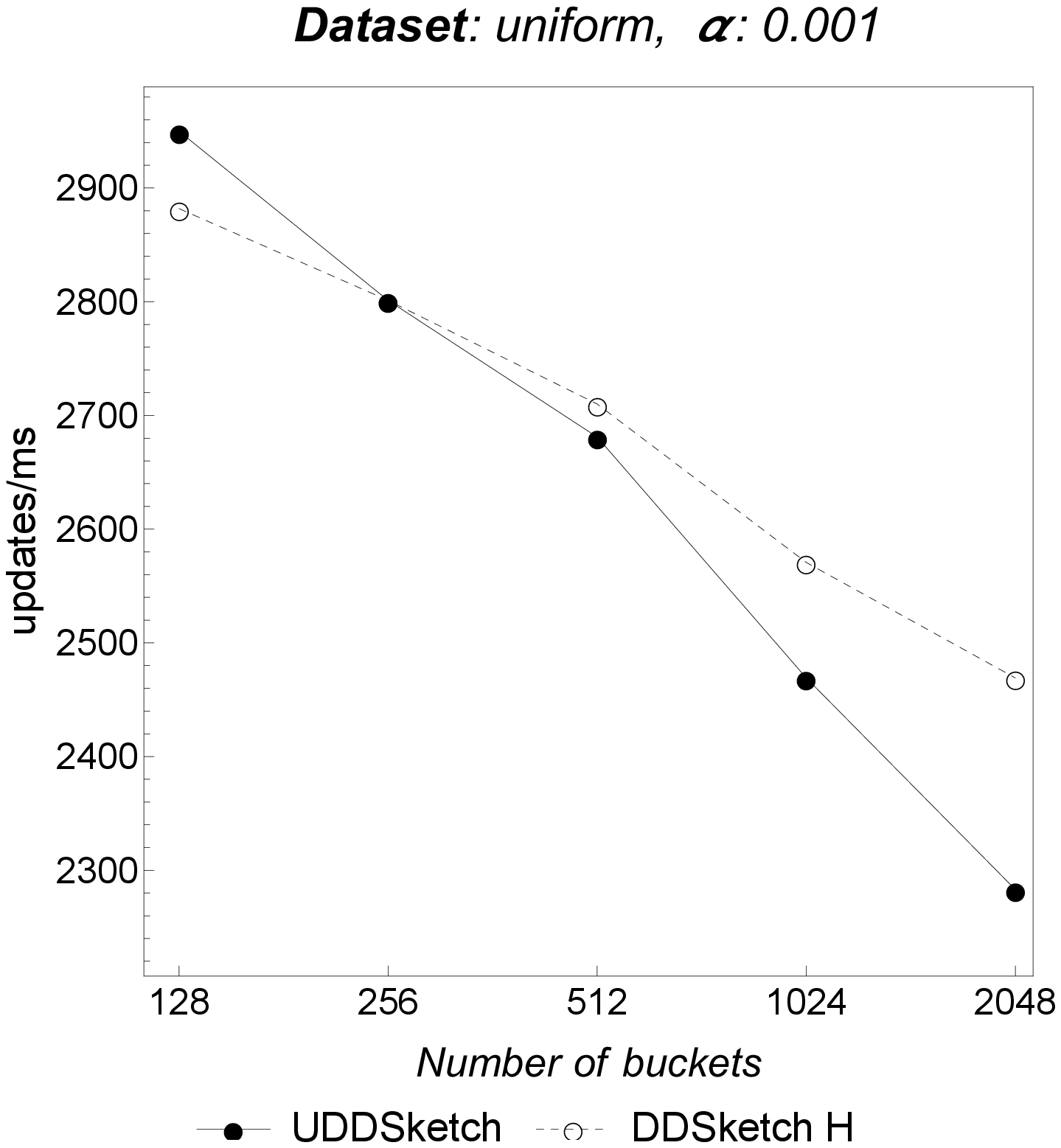}
			\label{uniform-updates_ms-ddsH}
		} &
		
		\subfloat[]{
			\includegraphics[width=0.3\textwidth]{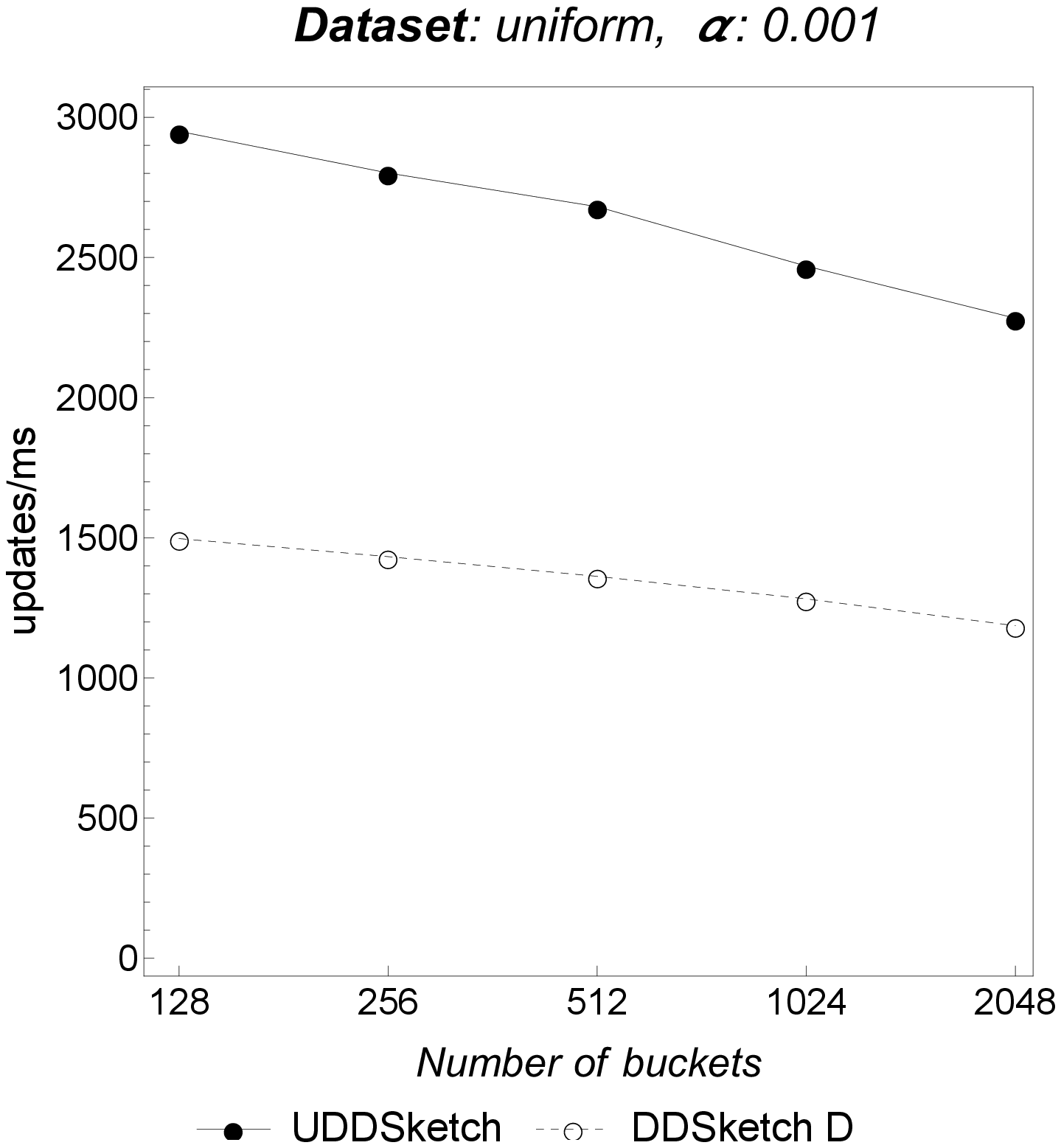}
			\label{uniform-updates_ms-ddsD}
		} 
	\end{tabular}
	
	\caption{Updates per ms, varying the number of buckets.} 
	\label{update_ms-plots}
\end{figure*}

%%%%%%%%%%%%%%%%%%%%%%%%%%%%%%%%%%%%%%%%%%%%%%%%%%%%%%%%%%%
% the following \clearpage command will prevent floats to appear in or after the references

\clearpage

\bibliographystyle{elsarticle-num}
\bibliography{bibliography}

\end{document}